\providecommand{\U}[1]{\protect\rule{.1in}{.1in}}
\newtheorem{theorem}{Theorem}
\newtheorem{acknowledgement}[theorem]{Acknowledgement}
\newtheorem{corollary}[theorem]{Corollary}
\newtheorem{lemma}[theorem]{Lemma}
\newtheorem{notation}[theorem]{Notation}
\newtheorem{proposition}[theorem]{Proposition}
\newtheorem{remark}[theorem]{Remark}
\newenvironment{proof}[1][Proof]{\noindent\textbf{#1.} }{\ \rule{0.5em}{0.5em}}
\begin{document}

\title{Partial Traces and the Geometry of Entanglement; Sufficient Conditions for the
Separability of Gaussian States}
\author{Nuno Costa Dias\thanks{ncdias@meo.pt}\\University of Lisbon (GFM)
\and Maurice de Gosson\thanks{maurice.de.gosson@univie.ac.at}\\University of Vienna\\Faculty of Mathematics (NuHAG)
\and Jo\~{a}o Nuno Prata\thanks{joao.prata@mail.telepac.pt }\\University of Lisbon (GFM)}
\maketitle

\begin{abstract}
The notion of partial trace of a density operator is essential for the
understanding of the entanglement and separability properties of quantum
states. In this paper we investigate these notions putting an emphasis on the
geometrical properties of the covariance ellipsoids of the reduced states. We
thereafter focus on Gaussian states and we give new and easily numerically
implementable sufficient conditions for the separability of all Gaussian
states. Unlike the positive partial transposition criterion, none of these
conditions is however necessary.
\end{abstract}
\tableofcontents

\section*{Introduction}

Mixed quantum states play a pivotal role in quantum mechanics and its
applications (for instance teleportation, cryptography, quantum computation
and optics, to name a few). Mixed states are identified for all practical
purposes with their density operators (or matrices), which are positive
semidefinite self-adjoint operators with trace one on a Hilbert space
$\mathcal{H}$. One of the most important problems in density operator theory,
which is still largely open at the time being, is the characterization of the
separability of density operators or, which amounts to the same, of the
entanglement properties of mixed quantum states. In the case $\mathcal{H}%
=L^{2}(\mathbb{R}^{n})$ (which we assume from now on) necessary conditions for
separability can be found in the literature; one of the oldest is the
Peres--Horodecki criterion \cite{HHH1,Peres} on the partial transpose of a
density operator; more recently Werner and Wolf \cite{ww1} have proposed a
geometric condition involving the covariance matrix of the state. This
condition is also sufficient for separability for all density operators with
Wigner distribution
\begin{equation}
\rho(z)=\frac{1}{(2\pi)^{n}\sqrt{\det\Sigma}}e^{-\frac{1}{2}\Sigma^{-1}z^{2}}
\label{rhoG}%
\end{equation}
the covariance matrix $\Sigma$ being subjected to the quantum condition
\begin{equation}
\Sigma+\frac{i\hbar}{2}J\geq0 \label{quant0}%
\end{equation}
(see \S \ref{subsec11} for a discussion of this condition). It requires that
the covariance ellipsoid
\[
\Omega=\{z:\tfrac{1}{2}\Sigma^{-1}z^{2}\leq1\}
\]
has symplectic capacity at least $\pi\hbar$; this property, which is a
topological formulation of the uncertainty principle, means that there exists
a symplectic automorphisms of $\mathbb{R}^{2n}$ sending the phase space ball
with radius $\sqrt{\hbar}$ inside $\Omega$.

We will discuss the partial traces $\widehat{\rho}_{A}$ and $\widehat{\rho
}_{B}$ of a density operator $\widehat{\rho}$ with respect to a splitting
$\mathbb{R}^{2n}\equiv\mathbb{R}^{2n_{A}}\oplus\mathbb{R}^{2n_{B}}$ of phase
space, and show that the covariance ellipsoids of $\widehat{\rho}_{A}$ and
$\widehat{\rho}_{B}$ are the orthogonal projections of $\Omega$ onto the
reduced phase spaces $\mathbb{R}^{2n_{A}}$ and $\mathbb{R}^{2n_{B}}$. We will
see that if in particular $\widehat{\rho}$ is a Gaussian then these reduced
states are themselves Gaussian states with Wigner distributions%
\begin{align*}
\rho_{A}(z_{A})  &  =\frac{1}{(2\pi)^{n_{A}}\sqrt{\det\Sigma_{AA}}}%
e^{-\frac{1}{2}\Sigma_{AA}^{-1}z_{A}^{2}}\\
\rho_{B}(z_{B})  &  =\frac{1}{(2\pi)^{n_{B}}\sqrt{\det\Sigma_{BB}}}%
e^{-\frac{1}{2}\Sigma_{BB}^{-1}z_{B}^{2}}%
\end{align*}
where the reduced covariance matrices $\Sigma_{AA}$ and $\Sigma_{BB}$ are
calculated from the total covariance matrix $\Sigma$ using the theory of Schur
complements (see \S \ref{subsec22}), and the corresponding covariance
ellipsoids
\begin{align*}
\Omega_{A}  &  =\{z_{A}\in\mathbb{R}^{2n_{A}}:\tfrac{1}{2}\Sigma_{AA}%
^{-1}z_{A}^{2}\leq1\}\\
\Omega_{B}  &  =\{z_{B}\in\mathbb{R}^{2n_{B}}:\tfrac{1}{2}\Sigma_{BB}%
^{-1}z_{B}^{2}\leq1\}
\end{align*}
are the orthogonal projections (or \textquotedblleft shadows\textquotedblright%
) of the covariance ellipsoid $\Omega$ on the reduced phase spaces
$\mathbb{R}^{2n_{A}}$ and $\mathbb{R}^{2n_{B}}$, respectively.

The main new results are stated and proved in Sections \ref{Section3} and
\ref{secsuff1}. In these sections we discuss the separability of Gaussian
states. In Section \ref{Section3}, we prove a necessary and sufficient
condition for the separability of Gaussian states, which amounts to a
refinement of the Werner-Wolf condition. In Section \ref{secsuff1} we prove
various sufficient conditions for the separability of Gaussian states, and
show that, while sufficient, they are not necessary conditions.

\begin{notation}
The standard symplectic form on $\mathbb{R}^{n}\times\mathbb{R}^{n}$ is
$\sigma=\sum_{j=1}^{n}dp_{j}\wedge dx_{j}$; in matrix notation $\sigma
(z,z^{\prime})=Jz\cdot z^{\prime}=(z^{\prime})^{T}Jz$ where $J=%
\begin{pmatrix}
0_{n\times n} & I_{n\times n}\\
-I_{n\times n} & 0_{n\times n}%
\end{pmatrix}
$ and $\cdot$ denotes the Euclidean scalar product. We denote by
$\operatorname*{Sp}(n)$ the symplectic group of $(\mathbb{R}^{2n},\sigma)$.
Given a tempered distribution $a\in\mathcal{S}^{\prime}(\mathbb{R}^{2n})$ we
denote by $\operatorname*{Op}\nolimits_{\mathrm{W}}(a)$ the Weyl operator with
symbol $a$. The metaplectic group $\operatorname*{Mp}(n)$ is a faithful
unitary representation of the double cover of $\operatorname*{Sp}(n)$;
elements of $\operatorname*{Mp}(n)$ are denoted by $\widehat{S}$ and their
projections on $\operatorname*{Sp}(n)$ by $S$. Given $S \in Sp(n)$, and $R>0$,
the symplectic ball $S\left(  B^{2n}(R)\right)  $ is the ellipsoid:
\[
S\left(  B^{2n}(R)\right)  = \left\{  Sz: ~ |z| \leq R \right\}  ~.
\]

\end{notation}

\section{Partial Traces and Reduced States\label{sec1}}

\subsection{Density operators: basics\label{subsec11}}

Let $\widehat{\rho}\in\mathcal{L}_{1}(L^{2}(\mathbb{R}^{n}))$ be a positive
semidefinite operator with trace $\operatorname*{Tr}(\widehat{\rho})=1$ on
$L^{2}(\mathbb{R}^{n})$. In particular $\widehat{\rho}$ is self-adjoint and
compact. Such operators represent the mixed states of quantum mechanics and we
will freely identify them with these states. It follows from the spectral
theorem that there exists a sequence $(\lambda_{j})_{j\in\mathcal{I}}$
($\mathcal{I}$ a discrete index set) of nonnegative real numbers with
$\sum_{j\in\mathcal{I}}\lambda_{j}=1$ and an orthonormal basis $(\psi
_{j})_{j\in\mathcal{I}}$ of $L^{2}(\mathbb{R}^{n})$ such that $\widehat{\rho
}=\sum_{j\in\mathcal{I}}\lambda_{j}\widehat{\Pi}_{j}$ where $\widehat{\Pi}%
_{j}$ is the orthogonal projection on the ray $\mathbb{C}\psi_{j}$. The number%
\begin{equation}
\mu(\widehat{\rho})=\sum_{j\in\mathcal{I}}\lambda_{j}^{2}=\operatorname*{Tr}%
(\widehat{\rho}^{2}) \label{pur}%
\end{equation}
is called the purity of $\widehat{\rho}$ and we have $\mu(\widehat{\rho})=1$
if and only if one of the coefficients $\lambda_{j}$\ is equal to one, in
which case $\widehat{\rho}=\widehat{\Pi}_{j}$ is called a pure state. Density
operators are Weyl operators in their own right; in fact $\widehat{\rho}%
=(2\pi\hbar)^{n}\operatorname*{Op}\nolimits_{\mathrm{W}}(\rho)$ where
\begin{equation}
\rho=\sum_{j\in\mathcal{I}}\lambda_{j}W\psi_{j} \label{Wigner}%
\end{equation}
the $W\psi_{j}\in L^{2}(\mathbb{R}^{2n})$ being the Wigner transforms of the
functions $\psi_{j}$; it follows from Moyal's identity \cite{Wigner} that the
$W\psi_{j}$ form an orthonormal subset of $L^{2}(\mathbb{R}^{2n})$. The
operator $\widehat{\rho}$ is the bounded operator on $L^{2}(\mathbb{R}^{n})$
with square-integrable distributional kernel
\begin{equation}
K(x,y)=\int_{\mathbb{R}^{n}}e^{\frac{i}{\hbar}p(x-y)}\rho(\tfrac{1}{2}(x+y),p)
dp~. \label{K}%
\end{equation}
It is current practice in the physically oriented literature to write
\begin{equation}
\operatorname*{Tr}(\widehat{\rho})=\int_{\mathbb{R}^{n}}K(x,x)dx \label{trK}%
\end{equation}
which leads, setting $x=y$ in (\ref{K}), to
\begin{equation}
\operatorname*{Tr}(\widehat{\rho})=\int_{\mathbb{R}^{2n}}\rho(z)dz=1~.
\label{troh}%
\end{equation}
One has however to view these formulas with a more than critical eye; they are
generally false unless some additional conditions are imposed on $\rho(z)$
(see \cite{Birkbis,QUANTA} and the references therein). Formula (\ref{troh})
however holds true if one makes the extra assumption that $\rho\in
L^{1}(\mathbb{R}^{2n})$ (see \cite{duwong}). We will use in this paper the
following stronger result due to Shubin (\cite{sh87}, \S 27. Setting
\[
\langle z\rangle=(1+|z|^{2})^{1/2}%
\]
for $z\in\mathbb{R}^{2n}$ we have:

\begin{proposition}
[Shubin]\label{Shubinprop}Let $\widehat{\rho}$ be a bounded operator with Weyl
symbol $(2\pi\hbar)^{n}\rho$ If $\rho\in C^{\infty}(\mathbb{R}^{2n})$ and all
its $z$-derivatives $\partial_{z}^{\alpha}\rho$ satisfy estimates%
\begin{equation}
|\partial_{z}^{\alpha}\rho(z)|\leq C_{\alpha}\langle z\rangle^{m-|\alpha|}
\label{shubin}%
\end{equation}
with $m<-2n$ and $C_{\alpha}>0$, then the operator $\widehat{\rho}$ is of
trace class and we have%
\begin{equation}
\operatorname*{Tr}(\widehat{\rho})=\int_{\mathbb{R}^{2n}}\rho(z)dz~.
\label{trace}%
\end{equation}

\end{proposition}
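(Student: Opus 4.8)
The hypothesis means exactly that $\rho$ lies in the Shubin symbol class $\Gamma^{m}(\mathbb{R}^{2n})$ of order $m<-2n$, the space of smooth functions all of whose derivatives satisfy (\ref{shubin}). The plan is to establish two separate facts: that $\widehat{\rho}=(2\pi\hbar)^{n}\operatorname*{Op}\nolimits_{\mathrm{W}}(\rho)$ is of trace class, and that $\operatorname*{Tr}(\widehat{\rho})=\int\rho\,dz$. Two elementary ingredients will be used throughout: by Moyal's identity, $\operatorname*{Op}\nolimits_{\mathrm{W}}(a)$ is Hilbert--Schmidt precisely when $a\in L^{2}(\mathbb{R}^{2n})$; and since $\langle z\rangle^{\mu}$ is square-integrable on $\mathbb{R}^{2n}$ for $\mu<-n$ and integrable for $\mu<-2n$, one has $\Gamma^{\mu}(\mathbb{R}^{2n})\subset L^{2}(\mathbb{R}^{2n})$ for $\mu<-n$ and $\Gamma^{\mu}(\mathbb{R}^{2n})\subset L^{1}(\mathbb{R}^{2n})$ for $\mu<-2n$.

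For the trace-class property I would factor $\widehat{\rho}$ through Hilbert--Schmidt operators. Writing $m=m_{1}+m_{2}$ with $m_{1},m_{2}<-n$, the operator $\operatorname*{Op}\nolimits_{\mathrm{W}}(\langle z\rangle^{m_{1}})$ is elliptic of order $m_{1}$ in Shubin's calculus and so has a parametrix: there are $q\in\Gamma^{-m_{1}}(\mathbb{R}^{2n})$ and $s\in\mathcal{S}(\mathbb{R}^{2n})$ with $\operatorname*{Op}\nolimits_{\mathrm{W}}(\langle z\rangle^{m_{1}})\operatorname*{Op}\nolimits_{\mathrm{W}}(q)=I+\operatorname*{Op}\nolimits_{\mathrm{W}}(s)$. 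Composing on the right with $\operatorname*{Op}\nolimits_{\mathrm{W}}(\rho)$ and using the composition rule $\operatorname*{Op}\nolimits_{\mathrm{W}}(a)\operatorname*{Op}\nolimits_{\mathrm{W}}(b)=\operatorname*{Op}\nolimits_{\mathrm{W}}(a\#b)$ for the Weyl product $\#$ gives
\[
\operatorname*{Op}\nolimits_{\mathrm{W}}(\rho)=\operatorname*{Op}\nolimits_{\mathrm{W}}(\langle z\rangle^{m_{1}})\operatorname*{Op}\nolimits_{\mathrm{W}}(q\#\rho)-\operatorname*{Op}\nolimits_{\mathrm{W}}(s\#\rho).
\]
Now $q\#\rho\in\Gamma^{m_{2}}(\mathbb{R}^{2n})\subset L^{2}$ and $\langle z\rangle^{m_{1}}\in\Gamma^{m_{1}}(\mathbb{R}^{2n})\subset L^{2}$, so the first term is a product of two Hilbert--Schmidt operators and therefore of trace class, while $s\#\rho\in\mathcal{S}(\mathbb{R}^{2n})$ makes the second term of trace class as well. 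Hence $\widehat{\rho}$ is of trace class; the same identity, together with the continuity of the Weyl product and of $a\mapsto\operatorname*{Op}\nolimits_{\mathrm{W}}(a)$ into the Hilbert--Schmidt class, shows moreover that $\rho\mapsto\widehat{\rho}$ is continuous from $\Gamma^{m}(\mathbb{R}^{2n})$ into $\mathcal{L}_{1}(L^{2}(\mathbb{R}^{n}))$.

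For the trace formula I would argue by approximation, since putting $x=y$ in (\ref{K}) and integrating is precisely the naive step warned against above. Fix $\chi\in C_{c}^{\infty}(\mathbb{R}^{2n})$ with $\chi(0)=1$ and set $\rho_{\varepsilon}(z)=\chi(\varepsilon z)\rho(z)$. Each $\rho_{\varepsilon}$ is compactly supported, so $(2\pi\hbar)^{n}\operatorname*{Op}\nolimits_{\mathrm{W}}(\rho_{\varepsilon})$ has a kernel $K_{\varepsilon}(x,y)$ lying in $\mathcal{S}(\mathbb{R}^{n}\times\mathbb{R}^{n})$, and for operators with a Schwartz kernel the trace-class property and the identity $\operatorname*{Tr}=\int K_{\varepsilon}(x,x)\,dx$ are classical; since $K_{\varepsilon}(x,x)=\int\rho_{\varepsilon}(x,p)\,dp$ and $\rho_{\varepsilon}$ is Schwartz, Fubini gives $\operatorname*{Tr}\big((2\pi\hbar)^{n}\operatorname*{Op}\nolimits_{\mathrm{W}}(\rho_{\varepsilon})\big)=\int_{\mathbb{R}^{2n}}\rho_{\varepsilon}\,dz$. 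A routine estimate shows $\rho_{\varepsilon}\to\rho$ in $\Gamma^{m'}(\mathbb{R}^{2n})$ for every $m'\in(m,-2n)$, while $|\rho_{\varepsilon}(z)|\leq C\langle z\rangle^{m}\in L^{1}(\mathbb{R}^{2n})$ uniformly in $\varepsilon$. By the continuity established above, $(2\pi\hbar)^{n}\operatorname*{Op}\nolimits_{\mathrm{W}}(\rho_{\varepsilon})\to\widehat{\rho}$ in trace norm, so $\operatorname*{Tr}(\widehat{\rho})=\lim_{\varepsilon\to0}\int\rho_{\varepsilon}\,dz$, and dominated convergence turns the right-hand side into $\int\rho\,dz$, which is (\ref{trace}).

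The main obstacle is the trace formula rather than the trace-class property: what has to be justified is not that $\operatorname*{Tr}$ coincides with the integral of the diagonal of \emph{some} representative of the kernel, but that it equals $\int\rho\,dz$ for \emph{this} operator, and the mollification reduces this to the trivial compactly supported case only at the price of the trace-norm continuity $\Gamma^{m}\hookrightarrow\mathcal{L}_{1}$, which itself depends on the Hilbert--Schmidt factorization and hence on the elliptic parametrix in Shubin's calculus. One could bypass the factorization entirely by noting that (\ref{shubin}) with $m<-2n$ forces the short-time Fourier transform of $\rho$ into $L^{1}(\mathbb{R}^{2n}\times\mathbb{R}^{2n})$ --- rapid decay in the frequency variable from the smoothness of $\rho$, and $O(\langle z\rangle^{m})$ decay in the space variable --- which is a standard sufficient condition for a Weyl operator to be of trace class.
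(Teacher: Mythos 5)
Your argument is correct in outline, but there is nothing in the paper to compare it against: the authors do not prove this proposition at all, they simply quote it from Shubin's book (\S 27) and then observe that (\ref{shubin}) with $m<-2n$ forces $\rho\in L^{1}$. What you have written is essentially a reconstruction of the textbook proof, and both halves are sound. The trace-class half --- splitting $m=m_{1}+m_{2}$ with $m_{1},m_{2}<-n$, inverting the elliptic symbol $\langle z\rangle^{m_{1}}$ modulo a smoothing remainder, and writing $\widehat{\rho}$ as a product of two Hilbert--Schmidt operators plus a smoothing term --- is the standard device (Shubin's own proof factors through powers of the harmonic oscillator, which plays exactly the role of your $\operatorname*{Op}_{\mathrm{W}}(\langle z\rangle^{m_{1}})$), and your bookkeeping of orders is right: $q\#\rho\in\Gamma^{m_{2}}\subset L^{2}$ and $\langle z\rangle^{m_{1}}\in L^{2}$. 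The trace-formula half correctly identifies the real issue, namely that $\operatorname*{Tr}=\int K(x,x)\,dx$ is only legitimate a priori for operators with, say, Schwartz kernels, and your mollification reduces the general case to that one via the trace-norm continuity of $\rho\mapsto\widehat{\rho}$ on $\Gamma^{m'}$, which your factorization does deliver since $q$ and $s$ are fixed and $\#$ is continuous between the Shubin classes. Two small points of hygiene: take $\chi\equiv1$ on a neighbourhood of the origin rather than merely $\chi(0)=1$, otherwise the uniform estimate on $\partial^{\alpha}\bigl((\chi(\varepsilon z)-1)\rho\bigr)$ over a bounded set is awkward; and the parametrix identity and the composition theorem $\Gamma^{-m_{1}}\#\Gamma^{m'}\subset\Gamma^{m'-m_{1}}$ are exactly the ingredients borrowed from Shubin's calculus, so the argument is not independent of \cite{sh87} --- it just makes explicit what that citation hides. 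The closing remark about the short-time Fourier transform lying in $L^{1}$ is a genuinely different and valid route (it is the modulation-space criterion for trace class), and would in fact give both conclusions at once.
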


The interest of this result comes from the fact that one does not have to
assume from the beginning that $\widehat{\rho}$ is of trace class, let alone a
density operator. Notice that the trace formula (\ref{trace}) automatically
follows since the condition (\ref{shubin}) implies that\ $\rho\in
L^{1}(\mathbb{R}^{2n})$.

We will denote by $\Gamma^{m}(\mathbb{R}^{2n})$ the Shubin class of all
functions $\rho\in C^{\infty}(\mathbb{R}^{2n})$ satisfying the estimates
(\ref{shubin}) for all $\alpha\in\mathbb{N}^{n}$.

Let $\widehat{\rho}=(2\pi\hbar)^{n}\operatorname*{Op}\nolimits_{\mathrm{W}%
}(\rho)$ be a density operator. We assume that
\begin{equation}
\int_{\mathbb{R}^{2n}}\langle z\rangle^{2}|\rho(z)|dz<\infty~; \label{conv}%
\end{equation}
this ensures us of the existence of first and second order momenta. This
condition holds if for instance $\rho$ belongs to some Shubin symbol class
$\Gamma^{m}(\mathbb{R}^{2n})$ with $m<-2n-2$. Let $\alpha,\beta=1,...,2n$ and
$z_{\alpha}=x_{\alpha}$ for $1\leq\alpha\leq n$ and $z_{\alpha}=p_{\alpha}$
for $n+1\leq\alpha\leq2n$. The average value of $\widehat{\rho}$ is defined by
$\bar{z}=(\bar{z}_{1},...,\bar{z}_{2n})$ where
\begin{equation}
\bar{z}_{\alpha}=\int_{\mathbb{R}^{2n}}z_{\alpha}\rho(z)dz \label{zalpha}%
\end{equation}
and the covariances\ are given by the integrals\
\begin{equation}
\Sigma(z_{\alpha},z_{\beta})=\int_{\mathbb{R}^{2n}}(z_{\alpha}-\bar{z}%
_{\alpha})(z_{\beta}-\bar{z}_{\beta})\rho(z)dz~. \label{cov}%
\end{equation}
The covariance matrix of $\widehat{\rho}$ is, by definition, the $2n\times2n$
matrix%
\begin{equation}
\Sigma=(\Sigma(z_{\alpha},z_{\beta}))_{1\leq\alpha,\beta\leq2n}
\label{covmatrix}%
\end{equation}
or, in more compact form,%
\[
\Sigma=\int_{\mathbb{R}^{2n}}(z-\bar{z})(z-\bar{z})^{T}\rho(z)dz
\]
where $z$ is viewed as a column vector $\binom{x}{p}$. The condition
$\widehat{\rho}\geq0$ requires that \cite{Narcow89,Narcow90,Werner}
\begin{equation}
\Sigma+\frac{i\hbar}{2}J\geq0 \label{quant}%
\end{equation}
where \textquotedblleft$\geq0$\textquotedblright\ means \textquotedblleft is
positive semidefinite\textquotedblright\ (note that all the eigenvalues of
$\Sigma+\frac{i\hbar}{2}J$ are real since it is a self-adjoint matrix). This
condition implies, in particular, that $\Sigma>0$; it is actually an
equivalent form of the Robertson--Schr\"{o}dinger inequalities
\cite{FOOP,goluPR}. It is a symplectically invariant formulation of the
uncertainty principle of quantum mechanics: introducing the covariance
ellipsoid
\begin{equation}
\Omega=\{z\in\mathbb{R}^{2n}:\tfrac{1}{2}\Sigma^{-1}z^{2}\leq1\}
\label{covellipse}%
\end{equation}
condition (\ref{quant}) can be rewritten as%
\begin{equation}
c(\Omega)\geq\pi\hbar\label{quantcamel}%
\end{equation}
where $c(\Omega)$ is the symplectic capacity of the ellipsoid $\Omega$
\cite{Birk,FOOP,Birkbis,goluPR}. Equivalently:%

\begin{equation}
\text{\emph{There exists} }S\in\operatorname*{Sp}(n)\text{ \emph{such that}
}SB^{2n}(\sqrt{\hbar})\subset\Omega~. \label{quantcamelbis}%
\end{equation}
The symplectic balls $SB^{2n}(\sqrt{\hbar})$ are minimum uncertainty
ellipsoids; it is convenient to use the following terminology
\cite{blob,goluPR} as it simplifies many statements:
\begin{gather}
\text{\emph{A quantum blob in} }\mathbb{R}^{2n}\text{ \emph{is a symplectic
ball} }\label{blob}\\
S(B^{2n}(R))\text{ \emph{with radius} }R=\sqrt{\hbar}~.\nonumber
\end{gather}

These properties all follow from the following observation:

\begin{proposition}
\label{propWill}Let $\lambda_{1,\sigma},...,\lambda_{n,\sigma}$ be the
symplectic eigenvalues of $\Sigma$, that is, $\lambda_{j,\sigma}>0$ and $\pm
i\lambda_{j,\sigma}$ is an eigenvalue of $J\Sigma$ for all $j=1,...,n$. The
condition $\Sigma+\frac{i\hbar}{2}J\geq0$ is equivalent to the conditions
$\lambda_{j,\sigma}\geq\frac{1}{2}\hbar$ for all $j=1,...,n$.
\end{proposition}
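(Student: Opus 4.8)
The plan is to reduce $\Sigma$ to a symplectic normal form and then read off both conditions from that form. Since the hypothesis $\Sigma+\frac{i\hbar}{2}J\geq0$ already forces $\Sigma>0$ (as observed just after (\ref{quant})), Williamson's symplectic diagonalization theorem applies and produces an $S\in\operatorname*{Sp}(n)$ with $S^{T}\Sigma S=D$, where $D=\operatorname{diag}(\Lambda,\Lambda)$ and $\Lambda=\operatorname{diag}(\lambda_{1,\sigma},\dots,\lambda_{n,\sigma})$ with all $\lambda_{j,\sigma}>0$. First I would verify that these $\lambda_{j,\sigma}$ are exactly the numbers named in the statement: from $S^{T}\Sigma S=D$ one gets $\Sigma=(S^{-1})^{T}DS^{-1}$, and combining this with the identity $S^{-1}J(S^{-1})^{T}=J$ (valid for every $S\in\operatorname*{Sp}(n)$, since $S^{T}JS=J$) yields $S^{-1}(J\Sigma)S=JD$. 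As $JD=\begin{pmatrix}0 & \Lambda\\ -\Lambda & 0\end{pmatrix}$ has eigenvalues $\pm i\lambda_{j,\sigma}$, the matrix $J\Sigma$, being similar to $JD$, has precisely the eigenvalues $\pm i\lambda_{j,\sigma}$, which matches the definition in the Proposition.

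Next I would transport the matrix inequality through the same congruence. Since $S$ is real and invertible, $M\mapsto S^{T}MS$ preserves positive semidefiniteness of Hermitian matrices, so using $S^{T}JS=J$ we obtain
\[
\Sigma+\tfrac{i\hbar}{2}J\geq0 \iff S^{T}\Bigl(\Sigma+\tfrac{i\hbar}{2}J\Bigr)S\geq0 \iff D+\tfrac{i\hbar}{2}J\geq0 .
\]
It then remains to analyze the Hermitian matrix $D+\frac{i\hbar}{2}J=\begin{pmatrix}\Lambda & \frac{i\hbar}{2}I\\ -\frac{i\hbar}{2}I & \Lambda\end{pmatrix}$. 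Permuting coordinates so as to pair the $j$-th position-type variable with the $j$-th momentum-type variable turns this into a block-diagonal matrix with $2\times2$ blocks $\begin{pmatrix}\lambda_{j,\sigma} & \frac{i\hbar}{2}\\ -\frac{i\hbar}{2} & \lambda_{j,\sigma}\end{pmatrix}$, whose eigenvalues are $\lambda_{j,\sigma}\pm\frac{\hbar}{2}$. Hence $D+\frac{i\hbar}{2}J\geq0$ if and only if $\lambda_{j,\sigma}\geq\frac{\hbar}{2}$ for all $j$, closing the chain of equivalences.

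The only genuinely delicate points are the correct invocation of Williamson's theorem — which needs $\Sigma>0$, available here — and bookkeeping of the symplectic identities $S^{T}JS=J$ and $S^{-1}J(S^{-1})^{T}=J$, so that the same $S$ simultaneously diagonalizes $\Sigma$, fixes $\frac{i\hbar}{2}J$ under congruence, and conjugates $J\Sigma$ to $JD$; the rest is the elementary spectral computation for the $2\times2$ blocks. I would present the argument exactly in this order: reduce $\Sigma$ to Williamson form, identify the $\lambda_{j,\sigma}$ with the spectral data of $J\Sigma$, push the matrix inequality through the congruence, and conclude from the $2\times2$ block spectra.
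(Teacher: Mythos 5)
Your proposal is correct and follows essentially the same route the paper indicates: Williamson's diagonalization, transport of the inequality through the symplectic congruence using $S^{T}JS=J$, and the $2\times2$ spectral computation giving eigenvalues $\lambda_{j,\sigma}\pm\frac{\hbar}{2}$ (the paper's own proof is a citation plus sketch of exactly this, and the same computation is carried out explicitly in its proof of Proposition \ref{Prop2}). The only thing worth noting is that the reverse implication also tacitly uses $\Sigma>0$ to invoke Williamson, which is implicit in the statement's assumption that $\Sigma$ is a covariance matrix with well-defined positive symplectic eigenvalues.
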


\begin{proof}
See \cite{Birk,FOOP}. It is based on the use of Williamson's symplectic
diagonalization theorem: $M$ being positive definite there exists
$S\in\operatorname*{Sp}(n)$ such that
\begin{equation}
M=S^{T}DS\text{ \ },\text{ \ }D=%
\begin{pmatrix}
\Lambda & 0\\
0 & \Lambda
\end{pmatrix}
\text{ , }\Lambda=\operatorname*{diag}(\lambda_{1,\sigma},...,\lambda
_{n,\sigma}) \label{Williamson}%
\end{equation}
(see for instance \cite{Folland,Giedke,Birk}). Notice that the eigenvalues of
$J\Sigma$ are those of the antisymmetric matrix $\Sigma^{1/2}J\Sigma^{1/2}$
and are hence indeed of the type $\pm i\lambda$ with $\lambda>0$.
\end{proof}

\subsection{Reduced density operators\label{subsec22}}

Let $n_{A}$, $n_{B}$ be two integers such that $n=n_{A}+n_{B}$. We identify
the direct sum $\mathbb{R}^{2n_{A}}\oplus\mathbb{R}^{2n_{B}}$ with
$\mathbb{R}^{2n}$ and the symplectic form $\sigma$ with $\sigma_{A}%
\oplus\sigma_{B}$ where $\sigma_{A}$ (\textit{resp}. $\sigma_{B}$) is the
standard symplectic form on $\mathbb{R}^{2n_{A}}$ (\textit{resp}.
$\mathbb{R}^{2n_{B}}$).

Let $\widehat{\rho}$ be a density operator on $L^{2}(\mathbb{R}^{n})$ with
Wigner distribution $\rho$. Assuming that $\rho$ satisfies the Shubin
estimates (\ref{shubin}) for some $m<-2n$, we define the \textit{reduced
density operator} $\widehat{\rho}_{A}$ by the formula%
\begin{equation}
\widehat{\rho}_{A}=(2\pi\hbar)^{n_{A}}\operatorname*{Op}\nolimits_{\mathrm{W}%
}(\rho_{A}) \label{reducedA}%
\end{equation}
where we have set%
\begin{equation}
\rho_{A}(z_{A})=\int_{\mathbb{R}^{n_{B}}}\rho(z_{A},z_{B})dz_{B}~.
\label{roadef}%
\end{equation}

This terminology has of course to be justified (it is not \textit{a priori}
clear why $\widehat{\rho}_{A}$ should be a density operator). Let us recall
the following result from quantum harmonic analysis which reduces to a
classical theorem of Bochner \cite{Bochner} on functions of positive type when
$\hbar=0$:

\begin{proposition}
[KLM conditions]\label{PropKLM}Let $a\in L^{1}(\mathbb{R}^{2n})$ and assume
that $\widehat{A}=\operatorname*{Op}\nolimits_{\mathrm{W}}(a)$ is of trace
class. We have $\widehat{A}\geq0$ if and only if the symplectic Fourier
transform $a_{\Diamond}=F_{\Diamond}a$ defined by
\begin{equation}
a_{\Diamond}(z)=\int_{\mathbb{R}^{2n}}e^{i\sigma(z,z^{\prime})}a(z^{\prime
})dz^{\prime} \label{diam}%
\end{equation}
is continuous\footnote{With the assumption $a\in L^{1}(\mathbb{R}^{2n})$, this
is automatically valid, via the Riemann-Lebesgue lemma.} and of $\hbar
$-positive type, that is if for every integer $N$ the $N\times N$ matrix
$\Lambda_{(N)}$ with entries%
\begin{equation}
\Lambda_{jk}=e^{-\frac{i\hbar}{2}\sigma(z_{j},z_{k})}a_{\Diamond}(z_{j}-z_{k})
\label{lajk}%
\end{equation}
is positive semidefinite for all choices of $(z_{1},z_{2},...,z_{N}%
)\in(\mathbb{R}^{2n})^{N}$.
\end{proposition}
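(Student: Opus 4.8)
The plan is to deduce Proposition~\ref{PropKLM} from a short chain of equivalences whose links are standard structural facts, the one substantial link being the last, which is exactly the \emph{quantum Bochner theorem}. Two preliminary observations streamline things. First, since $a\in L^{1}(\mathbb{R}^{2n})$ the symplectic Fourier transform $a_{\Diamond}$ is automatically continuous and bounded by Riemann--Lebesgue, so only the positive-type condition is at stake. Second, requiring all the $\Lambda_{(N)}$ to be positive semidefinite forces them to be Hermitian, which is equivalent to $a$ being real-valued, hence to $\widehat{A}$ being self-adjoint; since $\widehat{A}\geq 0$ entails $a$ real in any case, we may assume throughout that $\widehat{A}=\widehat{A}^{*}$.

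First I would reduce operator positivity to positivity of a left multiplication on the Hilbert--Schmidt class. As $\widehat{A}$ is of trace class it is bounded, so $L_{\widehat{A}}:\widehat{B}\mapsto\widehat{A}\widehat{B}$ is a bounded self-adjoint operator on the Hilbert space $\mathrm{HS}(L^{2}(\mathbb{R}^{n}))$, and $\widehat{A}\geq 0$ if and only if $L_{\widehat{A}}\geq 0$: ``$\Rightarrow$'' because $\langle L_{\widehat{A}}\widehat{B},\widehat{B}\rangle_{\mathrm{HS}}=\operatorname{Tr}(\widehat{B}^{*}\widehat{A}\widehat{B})=\|\widehat{A}^{1/2}\widehat{B}\|_{\mathrm{HS}}^{2}\geq 0$, and ``$\Leftarrow$'' by testing on rank-one $\widehat{B}=|\phi\rangle\langle\psi|$, which gives $\|\psi\|^{2}\langle\widehat{A}\phi,\phi\rangle\geq 0$. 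Then I would transport this to the twisted convolution picture: the Weyl correspondence $b\mapsto\operatorname{Op}_{\mathrm{W}}(b)$ is, up to the constant $(2\pi\hbar)^{n/2}$, a unitary isomorphism of $L^{2}(\mathbb{R}^{2n})$ onto $\mathrm{HS}(L^{2}(\mathbb{R}^{n}))$ intertwining operator composition with the Moyal product, $\operatorname{Op}_{\mathrm{W}}(a)\operatorname{Op}_{\mathrm{W}}(b)=\operatorname{Op}_{\mathrm{W}}(a\star_{\hbar}b)$; composing further with the symplectic Fourier transform of \eqref{diam}, which is a unitary of $L^{2}(\mathbb{R}^{2n})$ turning $\star_{\hbar}$ into the $\hbar$-twisted convolution $\natural_{\hbar}$ (so that $F_{\Diamond}(a\star_{\hbar}b)=a_{\Diamond}\natural_{\hbar}b_{\Diamond}$), I get a unitary equivalence between $L_{\widehat{A}}$ and the twisted convolution operator $T_{a_{\Diamond}}:c\mapsto a_{\Diamond}\natural_{\hbar}c$ on $L^{2}(\mathbb{R}^{2n})$. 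Hence $\widehat{A}\geq 0$ if and only if $T_{a_{\Diamond}}\geq 0$.

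It remains to identify $T_{a_{\Diamond}}\geq 0$ with the $\hbar$-positive type of $a_{\Diamond}$, and this last step is where I expect the real work to lie. The operator $T_{a_{\Diamond}}$ is an integral operator whose \emph{continuous} kernel is a twisted translate of $a_{\Diamond}$, namely $K(z,z')=e^{-\frac{i\hbar}{2}\sigma(z,z')}a_{\Diamond}(z-z')$ in the convention of \eqref{diam} (the sign in the phase being the one built into \eqref{lajk}), so that $[K(z_{j},z_{k})]_{1\leq j,k\leq N}=\Lambda_{(N)}$. Positivity of $T_{a_{\Diamond}}$ means $\iint K(z,z')\,c(z')\,\overline{c(z)}\,dz\,dz'\geq 0$ for every $c$ in a dense subspace of $L^{2}(\mathbb{R}^{2n})$; approximating $c$ by finite superpositions $\sum_{j}c_{j}\eta_{\varepsilon}(\,\cdot\,-z_{j})$ of narrow nonnegative bumps and using the continuity of $K$ turns this into positive semidefiniteness of all the matrices $\Lambda_{(N)}$, while conversely Riemann-sum approximation of the double integral recovers positivity of $T_{a_{\Diamond}}$ (boundedness of $T_{a_{\Diamond}}$, inherited from $\widehat{A}$ being trace class, lets one pass to general $c\in L^{2}$). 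Chaining the three equivalences gives the Proposition. The delicacy is genuinely analytic: for $\hbar=0$ the kernel $K$ is translation-invariant and the statement is exactly Bochner's theorem -- a convolution operator is positive iff the Fourier transform of its kernel is nonnegative a.e. -- whereas for $\hbar>0$ one is proving the twisted, noncommutative analogue, the crux being to legitimise that positivity of the non-Hilbert--Schmidt, ``Toeplitz-type'' operator $T_{a_{\Diamond}}$ is detected on finite point configurations; this is precisely where the two hypotheses ($a\in L^{1}$, making $K$ continuous, and $\widehat{A}$ trace class, making $T_{a_{\Diamond}}$ bounded) enter.

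Two remarks on variants. If one prefers to stay on $L^{2}(\mathbb{R}^{n})$, the \emph{necessity} half can instead be done by hand with the Heisenberg--Weyl operators $\widehat{T}(z)$: writing $a_{\Diamond}(z)=c_{n}\operatorname{Tr}\!\big(\widehat{A}\,\widehat{T}(z)^{-1}\big)$ with $c_{n}>0$ (legitimate because $\widehat{A}$ is trace class) and setting $\widehat{C}=\sum_{k}\bar c_{k}\widehat{T}(z_{k})$, the composition law $\widehat{T}(z)\widehat{T}(z')=e^{\frac{i}{2\hbar}\sigma(z,z')}\widehat{T}(z+z')$ turns $\sum_{j,k}\bar c_{j}c_{k}\Lambda_{jk}$ into a positive multiple of $\operatorname{Tr}\!\big(\widehat{A}\,\widehat{C}^{*}\widehat{C}\big)=\|\widehat{C}\widehat{A}^{1/2}\|_{\mathrm{HS}}^{2}\geq 0$. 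And the substantive \emph{sufficiency} half can alternatively be obtained, in place of the approximation argument, from the Hahn--Banach/GNS theory of positive-definite functions for the twisted convolution algebra going back to Kastler and to Loupias--Miracle-Sole (compare \cite{Narcow89,Werner}).
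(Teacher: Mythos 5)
The paper does not actually prove Proposition \ref{PropKLM}: it states the result and refers the reader to Kastler \cite{Kastler}, Loupias--Miracle-Sole \cite{LouMiracle1,LouMiracle2} and to the simpler argument of \cite{cogoni}, so there is no in-paper proof to compare yours against step by step. That said, your outline is a sound reconstruction of the standard ``quantum Bochner'' proof, and the displacement-operator computation you relegate to a closing remark --- writing $a_{\Diamond}(z)=c_{n}\operatorname{Tr}(\widehat{A}\,\widehat{T}(z)^{-1})$ and turning $\sum_{j,k}\bar{c}_{j}c_{k}\Lambda_{jk}$ into $c_{n}\operatorname{Tr}(\widehat{A}\,\widehat{C}^{*}\widehat{C})=c_{n}\|\widehat{C}\widehat{A}^{1/2}\|_{\mathrm{HS}}^{2}$ --- is precisely the mechanism of the proof in \cite{cogoni} that the authors endorse. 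Your chain of equivalences (positivity of $\widehat{A}$, of left multiplication $L_{\widehat{A}}$ on the Hilbert--Schmidt class, of the twisted convolution operator $T_{a_{\Diamond}}$, and pointwise positive-definiteness of the continuous kernel $K(z,z')=e^{-\frac{i\hbar}{2}\sigma(z,z')}a_{\Diamond}(z-z')$) is correct, and you rightly locate the only genuinely analytic content in the last link. What you give there is a sketch rather than a proof: the bump-function limit for necessity works because $K$ is bounded and continuous (Riemann--Lebesgue), and the Riemann-sum argument for sufficiency on continuous compactly supported test functions, followed by density and the boundedness of $T_{a_{\Diamond}}$ inherited from $\widehat{A}$, does close the argument --- but these approximations would need to be written out in full for this to stand as a proof. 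The one caveat worth recording is that the various normalizations ($\star_{\hbar}$ versus $\natural_{\hbar}$, and the placement of $\hbar$ in the phase of \eqref{lajk} given that \eqref{diam} carries no $\hbar$ in its exponent) must all be checked against a single fixed convention; you flag this issue but do not verify it.
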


The proof of this result goes back to the seminal work of Kastler
\cite{Kastler} and Loupias and Miracle-Sole \cite{LouMiracle1,LouMiracle2}.
While these authors use the theory of $C^{\ast}$-algebras and hard functional
analysis, one of us has recently given in \cite{cogoni} a conceptually simpler
proof using the properties of the Heisenberg--Weyl displacement operators
$\widehat{T}(z)=e^{-i\sigma(\widehat{z},z)/\hbar}$ \cite{Birk,Birkbis}.

\begin{proposition}
\label{mA}Let $\rho\in\Gamma^{m}(\mathbb{R}^{2n})$ for some $m<-2n$. The
operator $\widehat{\rho}_{A}=(2\pi\hbar)^{n_{A}}\operatorname*{Op}%
\nolimits_{\mathrm{W}}(\rho_{A})$ is a density operator on $L^{2}%
(\mathbb{R}^{n_{A}})$ and we have $\rho_{A}\in\Gamma^{m_{A}}(\mathbb{R}%
^{2n_{A}})$ for every $m_{A}<-2n_{A}$.
\end{proposition}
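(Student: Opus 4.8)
The plan is to establish, in turn, the five properties that make $\widehat\rho_A$ a density operator lying in the asserted symbol class: (i) $\rho_A\in\Gamma^{m_A}(\mathbb{R}^{2n_A})$; (ii) $\widehat\rho_A$ self-adjoint; (iii) $\widehat\rho_A$ of trace class; (iv) $\operatorname{Tr}(\widehat\rho_A)=1$; (v) $\widehat\rho_A\ge0$. Steps (i)--(iv) are mild; the one that carries the content is (v), which I would obtain from the KLM conditions (Proposition \ref{PropKLM}).

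For (i), I would first justify differentiating (\ref{roadef}) under the integral sign. Writing $z=(z_A,z_B)$ with $z_B\in\mathbb{R}^{2n_B}$, the Shubin estimate (\ref{shubin}) gives $|\partial_{z_A}^{\alpha}\rho(z_A,z_B)|\le C_\alpha\langle z\rangle^{m-|\alpha|}\le C_\alpha\langle z_B\rangle^{m-|\alpha|}$ (the last inequality because $m-|\alpha|<0$ and $\langle z\rangle\ge\langle z_B\rangle$), and $\langle z_B\rangle^{m-|\alpha|}$ is integrable on $\mathbb{R}^{2n_B}$ since $m-|\alpha|<-2n\le-2n_B$; this provides a locally uniform (in $z_A$) majorant, so $\rho_A\in C^\infty$ with $\partial_{z_A}^\alpha\rho_A(z_A)=\int_{\mathbb{R}^{2n_B}}\partial_{z_A}^\alpha\rho(z_A,z_B)\,dz_B$. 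For the decay I would invoke the elementary bound
\[
\int_{\mathbb{R}^{2n_B}}\langle(z_A,z_B)\rangle^{-t}\,dz_B\le C_t\,\langle z_A\rangle^{2n_B-t}\qquad(t>2n_B),
\]
which follows from the substitution $z_B=\langle z_A\rangle w$ (under which $\langle(z_A,z_B)\rangle^2=\langle z_A\rangle^2(1+|w|^2)$), applied with $t=|\alpha|-m$. This yields $|\partial_{z_A}^\alpha\rho_A(z_A)|\le C'_\alpha\langle z_A\rangle^{(m+2n_B)-|\alpha|}$, so that $\rho_A\in\Gamma^{m+2n_B}(\mathbb{R}^{2n_A})$; since $m+2n_B<-2n_A$ this is the membership claimed in the statement.

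Property (ii) is then immediate: $\widehat\rho$ self-adjoint forces $\rho$ real-valued, hence $\rho_A$ is real-valued and $\widehat\rho_A$ self-adjoint. For (iii) and (iv) I would apply Shubin's Proposition \ref{Shubinprop} on $\mathbb{R}^{2n_A}$ (legitimate, since $\rho_A$ has Shubin order $m+2n_B<-2n_A$): it yields at once that $\widehat\rho_A$ is of trace class and that $\operatorname{Tr}(\widehat\rho_A)=\int_{\mathbb{R}^{2n_A}}\rho_A(z_A)\,dz_A$; since $\rho\in\Gamma^m(\mathbb{R}^{2n})\subset L^1(\mathbb{R}^{2n})$, Fubini identifies this with $\int_{\mathbb{R}^{2n}}\rho(z)\,dz$, which is $\operatorname{Tr}(\widehat\rho)=1$ by Proposition \ref{Shubinprop} applied to $\widehat\rho$ itself. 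Now for (v): since $\rho_A\in L^1(\mathbb{R}^{2n_A})$ and $\widehat\rho_A$ is of trace class, the KLM conditions reduce $\widehat\rho_A\ge0$ to showing that $(\rho_A)_\Diamond$ is continuous (clear, by Riemann--Lebesgue) and of $\hbar$-positive type on $\mathbb{R}^{2n_A}$. The key step here is the restriction identity
\[
(\rho_A)_\Diamond(z_A)=\rho_\Diamond(z_A,0),\qquad z_A\in\mathbb{R}^{2n_A},
\]
which drops out of $\sigma=\sigma_A\oplus\sigma_B$: substituting $\rho_A(z_A')=\int_{\mathbb{R}^{2n_B}}\rho(z_A',z_B')\,dz_B'$ into the definition (\ref{diam}) of $(\rho_A)_\Diamond$ and using $\sigma((z_A,0),(z_A',z_B'))=\sigma_A(z_A,z_A')$ reproduces $\rho_\Diamond$ evaluated at $(z_A,0)$. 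Then, for any $N$ points $z_A^{(1)},\dots,z_A^{(N)}\in\mathbb{R}^{2n_A}$, setting $w_j=(z_A^{(j)},0)\in\mathbb{R}^{2n}$ makes the matrix $\big[\,e^{-\frac{i\hbar}{2}\sigma_A(z_A^{(j)},z_A^{(k)})}(\rho_A)_\Diamond(z_A^{(j)}-z_A^{(k)})\,\big]$ coincide with $\big[\,e^{-\frac{i\hbar}{2}\sigma(w_j,w_k)}\rho_\Diamond(w_j-w_k)\,\big]$, which is positive semidefinite because $\widehat\rho\ge0$ (the forward implication of Proposition \ref{PropKLM} applied to $\widehat\rho$, using $\rho\in L^1$ and $\widehat\rho$ trace class). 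Hence $(\rho_A)_\Diamond$ is of $\hbar$-positive type, $\widehat\rho_A\ge0$, and combining (i)--(v), $\widehat\rho_A$ is a density operator on $L^2(\mathbb{R}^{n_A})$.

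The only conceptual hurdle is the mechanism in (v): recognising that taking the marginal of the Wigner function corresponds, on the symplectic-Fourier side, to restricting $\rho_\Diamond$ to the coordinate subspace $\mathbb{R}^{2n_A}\times\{0\}$, so that the $\hbar$-positive-type property of $\rho_\Diamond$ descends to $(\rho_A)_\Diamond$ merely by embedding the test configurations. (Alternatively one could bypass KLM by identifying $\widehat\rho_A$ with the operator partial trace $\operatorname{Tr}_B\widehat\rho$ on $L^2(\mathbb{R}^{n_A})\otimes L^2(\mathbb{R}^{n_B})$ and quoting positivity of partial traces, but the argument above stays entirely within the Weyl formalism already developed.)
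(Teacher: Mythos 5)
Your proof is correct and follows essentially the same route as the paper's: establish the Shubin estimates for the marginal $\rho_A$, invoke Proposition \ref{Shubinprop} for trace class and $\operatorname{Tr}(\widehat\rho_A)=1$, and obtain positivity from the KLM conditions via the restriction identity $(\rho_A)_\Diamond(z_A)=\rho_\Diamond(z_A\oplus 0)$ with test points $w_j=(z_A^{(j)},0)$. The only differences are cosmetic: your substitution $z_B=\langle z_A\rangle w$ yields the sharp order $m+2n_B$ where the paper splits $m=m_A+m_B$, and you spell out the differentiation under the integral sign which the paper takes for granted.
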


\begin{proof}
The integral (\ref{roadef}) is convergent in view of the trivial inequality
$(1+|z|^{2})^{m}\leq(1+|z_{B}|^{2})^{m}$. Choosing $m_{A}<-2n_{A}$ and
$m_{B}<-2n_{B}$ such that $m=m_{A}+m_{B}$ we have $\langle z\rangle
^{m-|\alpha|}\leq\langle z_{A}\rangle^{m_{A}-|\alpha|}\langle z_{B}%
\rangle^{m_{B}}$ as follows from the inequality
\[
(1+|z|^{2})^{m-|\alpha|}\leq(1+|z_{A}|^{2})^{m_{A}-|\alpha|}(1+|z_{B}%
|^{2})^{m_{B}}~.
\]
Using the Shubin estimates (\ref{shubin}) we thus have
\begin{align*}
\partial_{z_{A}}^{\alpha}\rho_{A}(z_{A})  &  =\int_{\mathbb{R}^{2n_{B}}%
}\partial_{z_{A}}^{\alpha}\rho(z_{A},z_{B})dz_{B}\\
&  \leq C_{\alpha}\langle z\rangle^{m-|\alpha|}\int_{\mathbb{R}^{2n_{B}}%
}\langle z_{B}\rangle^{m_{B}}dz_{B}%
\end{align*}
and hence $\rho_{A}\in\Gamma^{m_{A}}(\mathbb{R}^{2n_{A}})$ since the integral
over $\mathbb{R}^{n_{B}}$ is convergent in view of the inequality
$m_{B}<-2n_{B}$. It follows from Proposition \ref{Shubinprop} that
$\widehat{\rho}_{A}$ is a trace class operator whose trace is%
\begin{equation}
\operatorname*{Tr}\nolimits_{A}(\widehat{\rho}_{A})=\int_{\mathbb{R}^{2n_{A}}%
}\rho_{A}(z_{A})dz_{A}=1~. \label{trapa}%
\end{equation}
There remains to show that $\widehat{\rho}_{A}\geq0$ (and hence $\widehat{\rho
}_{A}^{\ast}=\widehat{\rho}_{A}$). In view of the KLM conditions (Proposition
\ref{PropKLM}) it is sufficient to prove that the Fourier transform $(\rho
_{A})_{\Diamond}$ is continuous and satisfies $\Lambda_{(N)}^{A}\geq0$ for
every integer $N>0$ where $\Lambda_{(N)}^{A}=(\Lambda_{jk}^{A})_{j,k}$ with
\[
\Lambda_{jk}^{A}=e^{-\frac{i\hbar}{2}\sigma_{A}(z_{A,j},z_{A,k})}(\rho
_{A})_{\Diamond}(z_{A,j}-z_{A,k})
\]
the vectors $z_{A,j}$ and $z_{A,k}$ of $\mathbb{R}^{2n_{A}}$ being arbitrary.
The continuity of $(\rho_{A})_{\Diamond}$ being obvious (Riemann--Lebesgue
Lemma) all we have to do is to show that $\Lambda_{(N)}^{A}\geq0$. We first
observe that by Fubini's theorem $(\rho_{A})_{\Diamond}(z_{A})=\rho_{\Diamond
}(z_{A}\oplus0)$ and hence
\[
\Lambda_{jk}^{A}=e^{-\frac{i\hbar}{2}\sigma(z_{A,j}\oplus0,z_{A,k}\oplus
0)}\rho_{\Diamond}((z_{A,j}\oplus0)-(z_{A,k}\oplus0))~;
\]
the matrix $\Lambda_{(N)}^{A}$ is thus the matrix $\Lambda_{(N)}$
corresponding to the particular choices $z_{j}=z_{A,j}\oplus0$ and
$z_{k}=z_{A,k}\oplus0$. Since $\rho_{\Diamond}$ satisfies the KLM conditions
we must have $\Lambda_{(N)}^{A}\geq0$, hence $(\rho_{A})_{\Diamond}$ also
satisfies them.
\end{proof}

From now on we will write the covariance matrix $\Sigma$ in the $AB$-ordering
as
\begin{equation}
\Sigma=%
\begin{pmatrix}
\Sigma_{AA} & \Sigma_{AB}\\
\Sigma_{BA} & \Sigma_{BB}%
\end{pmatrix}
\text{ \textit{with} }\Sigma_{BA}=\Sigma_{AB}^{T} \label{sigmab}%
\end{equation}
the blocks $\Sigma_{AA}$, $\Sigma_{AB}$, $\Sigma_{BA}$, $\Sigma_{BB}$ having
dimensions $2n_{A}\times2n_{A}$, $2n_{A}\times2n_{B}$, $2n_{B}\times2n_{A}$,
$2n_{B}\times2n_{B}$, respectively. In this notation the quantum condition
(\ref{quant}) reads
\begin{equation}
\Sigma+\frac{i\hbar}{2}J_{AB}\geq0~, \text{ \textit{with} } J_{AB}=%
\begin{pmatrix}
J_{A} & 0\\
0 & J_{B}%
\end{pmatrix}
~. \label{quant2}%
\end{equation}

The covariance matrices $\Sigma_{A}$ and $\Sigma_{B}$ of the reduced density
operators are, respectively, the blocks $\Sigma_{AA}$ and $\Sigma_{BB}$ of
$\Sigma$ as immediately follows from the definitions (\ref{zalpha}) and
(\ref{cov}) using the formulas
\[
\rho_{A}(z_{A})=\int_{\mathbb{R}^{n_{B}}}\rho(z_{A},z_{B})dz_{B}\text{ \ ,
\ }\rho_{B}(z_{B})=\int_{\mathbb{R}^{n_{A}}}\rho(z_{A},z_{B})dz_{A}~.
\]
These matrices satisfy the quantum conditions%
\begin{equation}
\Sigma_{AA}+\frac{i\hbar}{2}J_{A}\geq0\text{ and }\Sigma_{BB}+\frac{i\hbar}%
{2}J_{B}\geq0 \label{quantred}%
\end{equation}
and the covariance ellipsoids of $\widehat{\rho}_{A}$ and $\widehat{\rho}_{B}$
are%
\begin{equation}
\Omega_{A}=\{z_{A}:\tfrac{1}{2}\Sigma_{AA}^{-1}z_{A}^{2}\leq1\}\text{ and
}\Omega_{B}=\{z_{B}:\tfrac{1}{2}\Sigma_{BB}^{-1}z_{B}^{2}\leq1\}
\label{omegab}%
\end{equation}
(we will see below that they are just the orthogonal projections on
$\mathbb{R}^{2n_{A}}$ and $\mathbb{R}^{2n_{A}}$ of the covariance ellipsoid
$\Omega$). That the quantum conditions (\ref{quantred}) hold follows from the
fact that $\widehat{\rho}_{A}$ and $\widehat{\rho}_{B}$ are \textit{bona fide}
density operators, but this can also be seen directly by noting that
(\ref{quant2}) can be written%
\[%
\begin{pmatrix}
\Sigma_{AA}+\frac{i\hbar}{2}J_{A} & \Sigma_{AB}\\
\Sigma_{BA} & \Sigma_{BB}+\frac{i\hbar}{2}J_{B}%
\end{pmatrix}
\geq0~.
\]

The symmetric matrix
\begin{equation}
\Sigma/\Sigma_{BB}=\Sigma_{AA}-\Sigma_{AB}\Sigma_{BB}^{-1}\Sigma_{BA}~.
\label{Schur}%
\end{equation}
is called the \textit{Schur complement} \cite{horn,zhang} of the block
$\Sigma_{BB}$ of $\Sigma$. Using the obvious factorization%
\begin{equation}
\Sigma=%
\begin{pmatrix}
I_{A} & \Sigma_{AB}\Sigma_{BB}^{-1}\\
0 & I_{B}%
\end{pmatrix}%
\begin{pmatrix}
\Sigma/\Sigma_{BB} & 0\\
0 & \Sigma_{BB}%
\end{pmatrix}%
\begin{pmatrix}
I_{A} & 0\\
\Sigma_{BB}^{-1}\Sigma_{BA} & I_{B}%
\end{pmatrix}
\label{schurfact}%
\end{equation}
we readily get various formulas for the inverse of $\Sigma$; the one we will
use here is
\begin{equation}
\Sigma^{-1}=%
\begin{pmatrix}
(\Sigma/\Sigma_{BB})^{-1} & -(\Sigma/\Sigma_{BB})^{-1}\Sigma_{AB}\Sigma
_{BB}^{-1}\\
-\Sigma_{BB}^{-1}\Sigma_{BA}(\Sigma/\Sigma_{BB})^{-1} & (\Sigma/\Sigma
_{AA})^{-1}%
\end{pmatrix}
\label{covinv}%
\end{equation}
(see \cite{Tzon} for a review of various formulas for block-matrix inversion).
Also note that it immediately follows from (\ref{schurfact}) that%
\begin{equation}
\det\Sigma=\det(\Sigma/\Sigma_{BB})\det\Sigma_{BB}~. \label{schurdet}%
\end{equation}

\subsection{The shadows of the covariance ellipse\label{secshadow}}

In practice, we have to deal more often with the inverse of the covariance
matrix than with the covariance matrix itself (this occurred already above in
the definition of the covariance ellipsoid (\ref{covellipse})). It is
therefore useful to have an explicit formula for that inverse.

In particular, to study the orthogonal projections (\textquotedblleft
shadows\textquotedblright) of the covariance ellipsoid $\Omega$ on the reduced
phase spaces $\mathbb{R}^{2n_{A}}$ and $\mathbb{R}^{2n_{B}}$ it will be
convenient to set $M=\frac{\hbar}{2}\Sigma^{-1}$. We will write, using the
$AB$-ordering $z=(z_{A},z_{B})$,
\begin{equation}
M=%
\begin{pmatrix}
M_{AA} & M_{AB}\\
M_{BA} & M_{BB}%
\end{pmatrix}
\label{MMM}%
\end{equation}
where $M_{AA}$, $M_{AB}$, $M_{BA}$, $M_{BB}$ are, respectively, $2n_{A}%
\times2n_{A}$, $2n_{A}\times2n_{B}$, $2n_{B}\times2n_{A}$, $2n_{B}\times
2n_{B}$ matrices. In this notation the covariance ellipsoid of $\widehat{\rho
}$ is the set%
\begin{equation}
\Omega=\{z\in\mathbb{R}^{2n}:Mz^{2}\leq\hbar\} \label{couvm}%
\end{equation}
and the quantum condition $\Sigma+\frac{i\hbar}{2}J_{AB}\geq0$ becomes%
\[
M^{-1}+iJ_{AB}\geq0
\]
which is equivalent, in view of Proposition \ref{propWill}, to the statement:
\begin{equation}
\text{\textit{The symplectic eigenvalues of}}M\text{ \textit{are}}\leq1~.
\label{m1}%
\end{equation}

Notice that since $M$ is positive definite and symmetric (because $\Sigma$ is)
the blocks $M_{AA}$ and $M_{BB}$ are also symmetric and positive definite and
we have $M_{BA}=M_{AB}^{T}$.

The following general Lemma will be very useful in our geometric
considerations about separability:

\begin{lemma}
\label{lemmapim}Let $\Pi_{A}$ (resp. $\Pi_{B}$) be the orthogonal projection
$\mathbb{R}^{2n}\longrightarrow\mathbb{R}^{2n_{A}}$ (resp. $\mathbb{R}%
^{2n}\longrightarrow\mathbb{R}^{2n_{B}}$) and $\Omega_{R}$ the phase space
ellipsoid $\{z\in\mathbb{R}^{2n}:Mz^{2}\leq R^{2}\}$, for some $R>0$. We have
\begin{align}
\Pi_{A}\Omega_{R}  &  =\{z_{A}\in\mathbb{R}^{2n_{A}}:(M/M_{BB})z_{A}^{2}\leq
R^{2}\}\label{boundb}\\
\Pi_{B}\Omega_{R}  &  =\{z_{B}\in\mathbb{R}^{2n_{B}}:(M/M_{AA})z_{B}^{2}\leq
R^{2}\} \label{bounda}%
\end{align}
where
\begin{align}
M/M_{BB}  &  =M_{AA}-M_{AB}M_{BB}^{-1}M_{BA}\label{SchurA}\\
M/M_{AA}  &  =M_{BB}-M_{BA}M_{AA}^{-1}M_{AB} \label{SchurB}%
\end{align}
are the Schur complements.
\end{lemma}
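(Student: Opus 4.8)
### Proof Strategy for Lemma \ref{lemmapim}

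The plan is to compute the shadow $\Pi_A\Omega_R$ directly by minimizing the quadratic form $Mz^2$ over the fiber above a fixed $z_A$, i.e. over all $z_B\in\mathbb{R}^{2n_B}$. Concretely, for a given $z_A$, the point $z=(z_A,z_B)$ lies in $\Omega_R$ for \emph{some} $z_B$ if and only if
\[
\min_{z_B\in\mathbb{R}^{2n_B}}\bigl(M_{AA}z_A\cdot z_A+2M_{AB}z_B\cdot z_A+M_{BB}z_B\cdot z_B\bigr)\leq R^2~.
\]
So $z_A\in\Pi_A\Omega_R$ exactly when this minimum is $\leq R^2$. Since $M_{BB}$ is symmetric positive definite (noted just before the Lemma), the inner quadratic-in-$z_B$ function is strictly convex and has a unique minimizer, found by setting the gradient to zero: $M_{BB}z_B+M_{BA}z_A=0$, hence $z_B=-M_{BB}^{-1}M_{BA}z_A$. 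Substituting back and simplifying gives the minimum value $\bigl(M_{AA}-M_{AB}M_{BB}^{-1}M_{BA}\bigr)z_A\cdot z_A=(M/M_{BB})z_A^2$. This yields (\ref{boundb}); the claim (\ref{bounda}) is identical with the roles of $A$ and $B$ interchanged.

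First I would record that $\Pi_A$ being orthogonal means $\Pi_A\Omega_R=\{z_A: \exists\, z_B,\ (z_A,z_B)\in\Omega_R\}$, which is the standard description of a projection of a convex body. Then I would do the ``complete the square'' computation: write $Mz^2$ in block form, isolate the $z_B$-dependence, and observe that the affine shift $z_B\mapsto z_B+M_{BB}^{-1}M_{BA}z_A$ diagonalizes the expression into $(M/M_{BB})z_A\cdot z_A+M_{BB}(z_B+M_{BB}^{-1}M_{BA}z_A)\cdot(z_B+M_{BB}^{-1}M_{BA}z_A)$. Since $M_{BB}>0$, the second term ranges over $[0,\infty)$ as $z_B$ varies, with minimum $0$ attained at the unique shift point. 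Hence $\exists\,z_B$ with $Mz^2\leq R^2$ iff $(M/M_{BB})z_A^2\leq R^2$, which is exactly the right-hand side of (\ref{boundb}).

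I expect no real obstacle here: the only thing to be careful about is that the relevant positive-definiteness is that of the \emph{block} $M_{BB}$ (so that the minimizer exists and is unique and the Schur complement is well-defined), not merely of $M$ itself — but this is guaranteed by the symmetry and positive-definiteness of $M$, as remarked in the text. One should also note in passing that $M/M_{BB}$ is itself positive definite (being a principal Schur complement of a positive-definite matrix), so that the right-hand side of (\ref{boundb}) is a genuine ellipsoid in $\mathbb{R}^{2n_A}$; this follows from the factorization (\ref{schurfact}) applied to $M$ in place of $\Sigma$. The symmetry argument for (\ref{bounda}) needs no separate computation — just swap $A\leftrightarrow B$ throughout.
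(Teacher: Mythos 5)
Your proof is correct, and it takes a slightly different route from the paper's. The paper characterizes the \emph{boundary} of the shadow: a point $z_A$ lies on $\partial\Pi_A\Omega_R$ precisely when the normal $2Mz$ to $\partial\Omega_R$ at some $z=(z_A,z_B)$ is parallel to $\mathbb{R}^{2n_A}$, which gives the tangency condition $M_{BA}z_A+M_{BB}z_B=0$, hence $z_B=-M_{BB}^{-1}M_{BA}z_A$, and substitution into $Mz^2=R^2$ produces the Schur complement. You instead characterize the shadow itself as a sublevel set, by minimizing $Mz^2$ over the fiber above $z_A$ (equivalently, completing the square in $z_B$); the first-order condition for the minimizer is the same equation $M_{BB}z_B+M_{BA}z_A=0$, so the two arguments pivot on an identical computation. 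What your version buys is a clean set equality with no appeal to the geometry of silhouettes: the paper's tangency argument implicitly uses the (true, but unstated) fact that the shadow of a compact convex body is the convex region bounded by the projection of its tangency locus, whereas your fiberwise minimum $\min_{z_B}Mz^2=(M/M_{BB})z_A^2$ directly identifies $\Pi_A\Omega_R$ with $\{(M/M_{BB})z_A^2\leq R^2\}$, using only $M_{BB}>0$. The paper's version, on the other hand, makes the geometric picture (normal vectors, the shadow's outline) more visible, which is the theme of that section. Your closing remarks — that the relevant positivity is that of the block $M_{BB}$, and that $M/M_{BB}>0$ so the right-hand side of (\ref{boundb}) is a genuine ellipsoid — are accurate and are worthwhile additions not spelled out in the paper's proof.
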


\begin{proof}
Let us set $Q(z)=$ $Mz^{2}-R^{2}$; the boundary $\partial\Omega_{R}$ of the
hypersurface $Q(z)=0$ is defined by%
\begin{equation}
M_{AA}z_{A}^{2}+2M_{BA}z_{A}\cdot z_{B}+M_{BB}z_{B}^{2}=R^{2}~. \label{mabab}%
\end{equation}
A point $z_{A}$ belongs to $\partial\Pi_{A}\Omega_{R}$ if and only if the
normal vector to $\partial\Omega_{R}$ at the point $z=(z_{A},z_{B}) \in
\Omega_{R}$ is parallel to $\mathbb{R}^{2n_{A}}$, hence the constraint
$\partial_{z}Q(z)=2Mz\in\mathbb{R}^{2n_{A}}\oplus0$. This is equivalent to the
condition $M_{BA}z_{A}+M_{BB}z_{B}=0$, that is to $z_{B}=-M_{BB}^{-1}%
M_{BA}z_{A}$. Inserting $z_{B}$ in (\ref{mabab}) shows that the boundary
$\partial\Pi_{A}\Omega_{R}$ is the set $\Sigma_{A}=(M/M_{BB})z_{A}^{2}=R^{2}$
which yields (\ref{boundb}). Formula (\ref{bounda}) is proven in the same way.
\end{proof}

It follows from Lemma \ref{lemmapim} that the orthogonal projections on
$\mathbb{R}^{2n_{A}}$ and $\mathbb{R}^{2n_{B}}$ of the covariance ellipsoid
$\Omega$ of $\widehat{\rho}$ are just the covariance ellipsoids of the reduced
operators $\widehat{\rho}_{A}$ and $\widehat{\rho}_{B}$:

\begin{proposition}
The covariance ellipsoids $\Omega_{A}$ and $\Omega_{B}$ of the reduced quantum
states $\widehat{\rho}_{A}$ and $\widehat{\rho}_{B}$ are the orthogonal
projections on $\mathbb{R}^{2n_{A}}$ and $\mathbb{R}^{2n_{B}}$ of the
covariance ellipsoid $\Omega$ of $\widehat{\rho}$:%
\begin{align}
\Omega_{A}  &  =\Pi_{A}\Omega=\{z_{A}\in\mathbb{R}^{2n_{A}}:(M/M_{BB}%
)z_{A}^{2}\leq\hbar\}\label{pambb}\\
\Omega_{B}  &  =\Pi_{B}\Omega=\{z_{B}\in\mathbb{R}^{2n_{B}}:(M/M_{AA}%
)z_{B}^{2}\leq\hbar\}~. \label{pamaa}%
\end{align}

\end{proposition}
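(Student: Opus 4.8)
The plan is to read off both projections directly from Lemma \ref{lemmapim}, taken with $R=\sqrt{\hbar}$, and then to identify the two Schur complements that appear there with the inverses of the diagonal blocks of $\Sigma$. Since $\Omega=\{z:Mz^{2}\leq\hbar\}$ by (\ref{couvm}), that is $\Omega=\Omega_{\sqrt{\hbar}}$ in the notation of the lemma, Lemma \ref{lemmapim} gives at once
\[
\Pi_{A}\Omega=\{z_{A}\in\mathbb{R}^{2n_{A}}:(M/M_{BB})z_{A}^{2}\leq\hbar\},\qquad
\Pi_{B}\Omega=\{z_{B}\in\mathbb{R}^{2n_{B}}:(M/M_{AA})z_{B}^{2}\leq\hbar\}.
\]
So the whole matter reduces to computing $M/M_{BB}$ and $M/M_{AA}$ in terms of $\Sigma$.

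Here I would use the fact that formula (\ref{covinv}), although stated for $\Sigma$, holds verbatim for any invertible matrix $X$ written in $AB$-block form with $X_{AA}$ and $X_{BB}$ invertible; reading off its $(A,A)$ and $(B,B)$ blocks it says precisely $(X^{-1})_{AA}=(X/X_{BB})^{-1}$ and $(X^{-1})_{BB}=(X/X_{AA})^{-1}$. I apply this with $X=M=\tfrac{\hbar}{2}\Sigma^{-1}$, which is legitimate since $M$ is symmetric positive definite, so $M_{AA}$ and $M_{BB}$ are invertible (as already noted after (\ref{m1})). Then $X^{-1}=\tfrac{2}{\hbar}\Sigma$, hence $(X^{-1})_{AA}=\tfrac{2}{\hbar}\Sigma_{AA}$ and $(X^{-1})_{BB}=\tfrac{2}{\hbar}\Sigma_{BB}$, and therefore
\[
M/M_{BB}=\Big(\tfrac{2}{\hbar}\Sigma_{AA}\Big)^{-1}=\tfrac{\hbar}{2}\Sigma_{AA}^{-1},\qquad
M/M_{AA}=\Big(\tfrac{2}{\hbar}\Sigma_{BB}\Big)^{-1}=\tfrac{\hbar}{2}\Sigma_{BB}^{-1}.
\]
(The same identities follow by expanding $M/M_{BB}=M_{AA}-M_{AB}M_{BB}^{-1}M_{BA}$ with the block entries of $M=\tfrac{\hbar}{2}\Sigma^{-1}$ supplied by (\ref{covinv}) and simplifying, using that the Schur complement is homogeneous of degree one; the route through (\ref{covinv}) is shorter.)

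Substituting these back, $\Pi_{A}\Omega=\{z_{A}:\tfrac{\hbar}{2}\Sigma_{AA}^{-1}z_{A}^{2}\leq\hbar\}=\{z_{A}:\tfrac{1}{2}\Sigma_{AA}^{-1}z_{A}^{2}\leq1\}$, which is exactly $\Omega_{A}$ as defined in (\ref{omegab}), and the same computation with $A\leftrightarrow B$ gives $\Pi_{B}\Omega=\Omega_{B}$; this is the assertion. I do not anticipate a genuine obstacle: once Lemma \ref{lemmapim} is in hand the argument is bookkeeping with Schur complements, and the only points needing a little care are carrying the scalar $\hbar/2$ correctly through the inversion and observing that $\Sigma_{AA}$, $\Sigma_{BB}$ (equivalently $M_{AA}$, $M_{BB}$) are invertible, which is immediate from $\Sigma>0$.
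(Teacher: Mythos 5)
Your argument is exactly the paper's: apply Lemma \ref{lemmapim} with $R=\sqrt{\hbar}$ and identify $M/M_{BB}$ and $M/M_{AA}$ by reading the diagonal blocks of $M^{-1}=\tfrac{2}{\hbar}\Sigma$ off the block-inversion formula (\ref{covinv}). Your bookkeeping of the scalar is correct ($(M/M_{BB})^{-1}=\tfrac{2}{\hbar}\Sigma_{AA}$, hence $M/M_{BB}=\tfrac{\hbar}{2}\Sigma_{AA}^{-1}$), and in fact fixes a small reciprocal typo in the paper's displayed identity.
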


\begin{proof}
Let $M=\frac{\hbar}{2}\Sigma^{-1}$. Writing $M$ in block-matrix form
(\ref{MMM}), its inverse has the form
\begin{equation}
M^{-1}=%
\begin{pmatrix}
(M/M_{BB})^{-1} & \ast\\
\ast & (M/M_{AA})^{-1}%
\end{pmatrix}
\end{equation}
(\textit{cf.} formula (\ref{covinv})) and hence
\[
(M/M_{BB})^{-1}=\frac{\hbar}{2}\Sigma_{AA}\text{ \ and \ }(M/M_{AA}%
)^{-1}=\frac{\hbar}{2}\Sigma_{BB}\text{ }.
\]
Formulas (\ref{pambb}) and (\ref{pamaa}) follow using Lemma \ref{lemmapim}
with $R=\sqrt{\hbar}$.
\end{proof}

\section{The $AB$-separability of a density operator\label{sec2}}

In this section we study two necessary conditions for bipartite separability
of density operator on $L^{2}(\mathbb{R}^{n})$. The first (Proposition
\ref{Prop1}) is the so-called \textquotedblleft PPT
criterion\textquotedblright,\ of which we give a rigorous proof, and the
second (Proposition \ref{Prop2}) is a non-trivial refinement of a result due
to Werner and Wolf \cite{ww1}.

\subsection{The Peres--Horodecki condition}

We say that the operator $\widehat{\rho}$ is \textquotedblleft$AB$
separable\textquotedblright\ if there exist sequences of density operators
$\widehat{\rho}_{j}^{A}\in\mathcal{L}_{1}(L^{2}(\mathbb{R}^{n_{A}}))$ and
$\widehat{\rho}_{j}^{B}\in\mathcal{L}_{1}(L^{2}(\mathbb{R}^{n_{B}}))$ and real
numbers $\alpha_{j}\geq0$, $\sum_{j}\alpha_{j}=1$ such that
\begin{equation}
\widehat{\rho}=\sum_{j\in\mathcal{I}}\alpha_{j}\widehat{\rho}_{j}^{A}%
\otimes\widehat{\rho}_{j}^{B} \label{AB}%
\end{equation}
where the convergence is for the norm of $\mathcal{L}_{1}(L^{2}(\mathbb{R}%
^{n}))$).

Let us introduce some new notation. We denote by $I_{A}$ the identity
$(x_{A},p_{A})\longmapsto(x_{A},p_{A})$ and by $\overline{I}_{B}$ the
involution $(x_{B},p_{B})\longmapsto(x_{B},-p_{B})$. We set $\overline{I}%
_{AB}=I_{A}\oplus\overline{I}_{B}$ and, as before, $J_{AB}=J_{A}\oplus J_{B}$
where $J_{A}$ (\textit{resp.} $J_{B}$) is the standard symplectic matrix in
$\mathbb{R}^{2n_{A}}$ (\textit{resp}. $\mathbb{R}^{2n_{B}}$).

Given a general density operator $\widehat{\rho}=(2\pi\hbar)^{n}%
\operatorname*{Op}\nolimits_{\mathrm{W}}(\rho)$ there exists a necessary
condition for $AB$-separability; it is known in the physical literature as the
PPT criterion (PPT stands for \textquotedblleft positive partial
transpose\textquotedblright) and was first precisely stated in
\cite{HHH1,HHH2,Peres}. Also see the paper \cite{Simon} of Simon where it is
shown that the PPT criterion is sufficient for separability of Gaussian states
when $n_{A}=n_{B}=2$ (also see Duan \textit{et al}. \cite{Duan}). Below we
give a short and rigorous proof of this condition based on the (trivial)
equality%
\begin{equation}
W\psi(\overline{I}_{B}z_{B})=W\overline{\psi}(z_{B}) \label{trivial}%
\end{equation}
valid for all $\psi\in L^{2}(\mathbb{R}^{n_{B}})$.

\begin{proposition}
\label{Prop1}Let $\widehat{\rho}=(2\pi\hbar)^{n}\operatorname*{Op}%
\nolimits_{\mathrm{W}}(\rho)$ be a density operator on $\mathbb{R}%
^{2n}=\mathbb{R}^{2n_{A}}\oplus\mathbb{R}^{2n_{B}}$. Suppose that the
$AB$-separability condition
\begin{equation}
\widehat{\rho}=\sum_{j\in\mathcal{I}}\lambda_{j}\widehat{\rho}_{j}^{A}%
\otimes\widehat{\rho}_{j}^{B} \label{ABAB}%
\end{equation}
holds. Then the operator
\[
\widehat{\rho}^{T_{B}}=(2\pi\hbar)^{n}\operatorname*{Op}\nolimits_{\mathrm{W}%
}(\rho\circ\overline{I}_{AB})
\]
is also a density operator on $\mathbb{R}^{2n}=\mathbb{R}^{2n_{A}}%
\oplus\mathbb{R}^{2n_{B}}$.
\end{proposition}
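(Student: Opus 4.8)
The plan is to reduce the statement to the transpose of a single tensor factor and then pass to the limit in trace norm. The first observation is that the partial transpose, read at the level of Weyl symbols, is the substitution $\overline{I}_{AB}=I_{A}\oplus\overline{I}_{B}$, and that on one factor it produces again a density operator: if $\widehat{\varrho}$ is a density operator on $L^{2}(\mathbb{R}^{n_{B}})$ with spectral decomposition $\widehat{\varrho}=\sum_{k}\mu_{k}\widehat{\Pi}_{\psi_{k}}$ (with $\mu_{k}\geq0$, $\sum_{k}\mu_{k}=1$, and $(\psi_{k})_{k}$ an orthonormal basis of $L^{2}(\mathbb{R}^{n_{B}})$), then its Wigner distribution is $\varrho=\sum_{k}\mu_{k}W\psi_{k}$, so (\ref{trivial}) gives $\varrho\circ\overline{I}_{B}=\sum_{k}\mu_{k}W\overline{\psi}_{k}$, and since $(\overline{\psi}_{k})_{k}$ is again an orthonormal basis the operator
\[
\widehat{\varrho}^{\,T}:=(2\pi\hbar)^{n_{B}}\operatorname*{Op}\nolimits_{\mathrm{W}}(\varrho\circ\overline{I}_{B})=\sum_{k}\mu_{k}\widehat{\Pi}_{\overline{\psi}_{k}}
\]
is positive semidefinite, of trace class, and of trace one — hence again a density operator on $L^{2}(\mathbb{R}^{n_{B}})$.

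Next I would use the elementary fact that the Weyl symbol of a tensor product is the product of the Weyl symbols: if $\widehat{A}=(2\pi\hbar)^{n_{A}}\operatorname*{Op}\nolimits_{\mathrm{W}}(a)$ on $L^{2}(\mathbb{R}^{n_{A}})$ and $\widehat{B}=(2\pi\hbar)^{n_{B}}\operatorname*{Op}\nolimits_{\mathrm{W}}(b)$ on $L^{2}(\mathbb{R}^{n_{B}})$, then $\widehat{A}\otimes\widehat{B}=(2\pi\hbar)^{n}\operatorname*{Op}\nolimits_{\mathrm{W}}(a\otimes b)$ with $(a\otimes b)(z_{A},z_{B})=a(z_{A})b(z_{B})$. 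Writing $\widehat{\rho}_{j}^{A}=(2\pi\hbar)^{n_{A}}\operatorname*{Op}\nolimits_{\mathrm{W}}(\rho_{j}^{A})$ and $\widehat{\rho}_{j}^{B}=(2\pi\hbar)^{n_{B}}\operatorname*{Op}\nolimits_{\mathrm{W}}(\rho_{j}^{B})$, the expansion (\ref{ABAB}) yields $\rho=\sum_{j}\lambda_{j}\,\rho_{j}^{A}\otimes\rho_{j}^{B}$ for the Weyl symbol of $\widehat{\rho}$; this series converges in $L^{2}(\mathbb{R}^{2n})$ because $\mathcal{L}_{1}(L^{2}(\mathbb{R}^{n}))\subset\mathcal{L}_{2}(L^{2}(\mathbb{R}^{n}))$ and, by Moyal's identity, the Weyl correspondence sends $\mathcal{L}_{2}(L^{2}(\mathbb{R}^{n}))$ onto $L^{2}(\mathbb{R}^{2n})$ isometrically up to a constant. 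Since $\overline{I}_{AB}$ is a Lebesgue-measure preserving linear involution, composition with it is an isometry of $L^{2}(\mathbb{R}^{2n})$, whence $\rho\circ\overline{I}_{AB}=\sum_{j}\lambda_{j}\,\rho_{j}^{A}\otimes(\rho_{j}^{B}\circ\overline{I}_{B})$ in $L^{2}(\mathbb{R}^{2n})$ and therefore
\[
\widehat{\rho}^{\,T_{B}}=(2\pi\hbar)^{n}\operatorname*{Op}\nolimits_{\mathrm{W}}(\rho\circ\overline{I}_{AB})=\sum_{j}\lambda_{j}\,\widehat{\rho}_{j}^{A}\otimes(\widehat{\rho}_{j}^{B})^{\,T}.
\]

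It remains to check that this last series converges in $\mathcal{L}_{1}$-norm and that its sum is a density operator. By the first step each $(\widehat{\rho}_{j}^{B})^{\,T}$ is a density operator, so each $\widehat{\rho}_{j}^{A}\otimes(\widehat{\rho}_{j}^{B})^{\,T}$ is a positive trace-class operator of trace $\lambda_{j}$; hence the partial sums $S_{N}=\sum_{j\leq N}\lambda_{j}\,\widehat{\rho}_{j}^{A}\otimes(\widehat{\rho}_{j}^{B})^{\,T}$ are positive, and since for a positive operator the trace norm equals the trace we get $\|S_{N}-S_{M}\|_{1}=\sum_{M<j\leq N}\lambda_{j}\to0$. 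Thus $(S_{N})$ is Cauchy in $\mathcal{L}_{1}(L^{2}(\mathbb{R}^{n}))$ and converges there to a positive operator of trace one; invoking once more that $\mathcal{L}_{1}$-convergence forces the associated Weyl symbols to converge in $L^{2}(\mathbb{R}^{2n})$, this limit must coincide with $\widehat{\rho}^{\,T_{B}}$ as defined above. Hence $\widehat{\rho}^{\,T_{B}}$ is a density operator.

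The one point requiring genuine care is the interchange in the middle step: one has to know that the map ``replace the Weyl symbol $a$ by $a\circ\overline{I}_{AB}$'' is continuous for a topology in which (\ref{ABAB}) converges, so that it may be applied term by term. Passing through the Hilbert--Schmidt class — where the Weyl correspondence is, up to a constant, unitary onto $L^{2}(\mathbb{R}^{2n})$ — together with the fact that $\overline{I}_{AB}$ preserves Lebesgue measure, is what legitimizes this; after that, positivity and the value of the trace are simply inherited from the elementary tensor factors and survive the $\mathcal{L}_{1}$-limit, the cone of density operators being closed in $\mathcal{L}_{1}(L^{2}(\mathbb{R}^{n}))$.
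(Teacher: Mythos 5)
Your proof is correct and follows essentially the same route as the paper's: both rest on the identity $W\psi\circ\overline{I}_{B}=W\overline{\psi}$ applied to the spectral decompositions of the factors $\widehat{\rho}_{j}^{B}$, so that the partial transpose is exhibited as $\sum_{j}\lambda_{j}\,\widehat{\rho}_{j}^{A}\otimes(\widehat{\rho}_{j}^{B})^{T}$ with each $(\widehat{\rho}_{j}^{B})^{T}$ again a density operator. The only difference is that you make the convergence bookkeeping explicit (trace-norm Cauchyness of the positive partial sums, and the passage through the Moyal/Hilbert--Schmidt isometry to identify the $\mathcal{L}_{1}$-limit with $\operatorname*{Op}\nolimits_{\mathrm{W}}(\rho\circ\overline{I}_{AB})$), details the paper's proof leaves implicit when it manipulates the triple series term by term.
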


\begin{proof}
Suppose that (\ref{ABAB}) holds; then $\rho=\sum_{j}\lambda_{j}\rho_{j}%
^{A}\otimes\rho_{j}^{B}$ and
\[
\rho_{j}^{A}=\sum_{\ell}\alpha_{j,\ell}W_{A}\psi_{j,\ell}^{A}\text{ \ ,
\ }\rho_{j}^{B}=\sum_{m}\beta_{j,m}W_{B}\psi_{j,m}^{B}%
\]
with $(\psi_{\ell}^{A},\psi_{\ell}^{B})\in L^{2}(\mathbb{R}^{n_{A}})\times
L^{2}(\mathbb{R}^{n_{B}})$ and $\alpha_{j,\ell},\beta_{j,m}\geq0$; that is
\[
\rho=\sum_{j,\ell,m}\gamma_{j,\ell,m}W_{A}\psi_{j,\ell}^{A}\otimes W_{B}%
\psi_{j,m}^{B}%
\]
where $\gamma_{j,\ell,m}=\lambda_{j}\alpha_{j,\ell}\beta_{j,m}\geq0$. We have
\[
\rho(\overline{I}_{AB}z)=\sum_{j\in\mathcal{I}}\lambda_{j}\rho_{j}^{A}%
(z_{A})\rho_{j}^{B}(\overline{I}_{B}z_{B});
\]
using (\ref{trivial}) we thus have%
\[
\rho\circ\overline{I}_{AB}=\sum_{j,\ell,m}\gamma_{j,\ell,m}W(\psi_{j,\ell}%
^{A}\otimes\overline{\psi}_{j,m}^{B})
\]
hence $\operatorname*{Op}\nolimits_{\mathrm{W}}(\rho\circ\overline{I}_{AB})$
is also a positive semidefinite trace class operator; that $\operatorname*{Tr}%
(\widehat{\rho}^{T_{B}})=\operatorname*{Tr}(\widehat{\rho})=1$ is obvious.
\end{proof}

Notice that we have $\widehat{\rho}^{T_{B}}=\sum_{j}\alpha_{j}\widehat{\rho
}_{j}^{A}\otimes(\widehat{\rho}_{j}^{B})^{T}$ where
\[
(\widehat{\rho}_{j}^{B})^{T}=(2\pi\hbar)^{n_{B}}\operatorname*{Op}%
\nolimits_{\mathrm{W}}(\rho_{j}\circ\overline{I}_{B})
\]
is the transpose of $\widehat{\rho}_{j}^{B}$, hence the denomination
\textquotedblleft partial positive transpose\textquotedblright\ for the
operator $\widehat{\rho}^{T_{B}}$ used in the literature.

Proposition \ref{Prop1} has the following consequence. We set
\[
\overline{J}_{AB}=J_{A}\oplus(-J_{B})=\overline{I}_{AB}J_{AB}\overline{I}_{AB}%
\]
(that is, $\overline{J}_{AB}$ is the standard symplectic matrix of the
symplectic vector space $(\mathbb{R}^{2n_{A}}\oplus\mathbb{R}^{2n_{B}}%
,\sigma_{A}\oplus(-\sigma_{B})$).

\begin{corollary}
\label{Cor1}Let $\widehat{\rho}=(2\pi\hbar)^{n}\operatorname*{Op}%
\nolimits_{\mathrm{W}}(\rho)$ be a separable density operator. Then, in
addition to (\ref{quant2}), we have%
\begin{equation}
\Sigma+\frac{i\hbar}{2}\overline{J}_{AB}\geq0~; \label{quant3}%
\end{equation}
or equivalently%
\begin{equation}
\overline{\Sigma}+\frac{i\hbar}{2}J_{AB}\geq0 \label{quant4}%
\end{equation}
where $\overline{\Sigma}=\overline{I}_{AB}\Sigma\overline{I}_{AB}$ that is%
\begin{equation}
\overline{\Sigma}=%
\begin{pmatrix}
\Sigma_{AA} & \Sigma_{AB}\overline{I}_{B}\\
\overline{I}_{B}\Sigma_{BA} & \overline{I}_{B}\Sigma_{BB}\overline{I}_{B}%
\end{pmatrix}
~. \label{quantbar}%
\end{equation}

\end{corollary}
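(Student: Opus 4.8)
The plan is to derive Corollary \ref{Cor1} as a direct consequence of Proposition \ref{Prop1}. By Proposition \ref{Prop1}, if $\widehat{\rho}$ is $AB$-separable, then $\widehat{\rho}^{T_B} = (2\pi\hbar)^n\operatorname{Op}_{\mathrm{W}}(\rho\circ\overline{I}_{AB})$ is again a density operator. Applying the quantum condition (\ref{quant}) to $\widehat{\rho}^{T_B}$, its covariance matrix $\Sigma^{T_B}$ must satisfy $\Sigma^{T_B} + \tfrac{i\hbar}{2}J_{AB} \ge 0$. So the first step is to compute $\Sigma^{T_B}$ in terms of $\Sigma$.

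Next I would observe that $\rho\circ\overline{I}_{AB}$ has average value $\overline{I}_{AB}\,\bar z$ and covariance matrix obtained by the linear change of variables $z\mapsto \overline{I}_{AB}z$, so that $\Sigma^{T_B} = \overline{I}_{AB}\,\Sigma\,\overline{I}_{AB}^{T} = \overline{I}_{AB}\,\Sigma\,\overline{I}_{AB}$, since $\overline{I}_{AB}$ is symmetric and an involution. This gives exactly the matrix $\overline{\Sigma}$ in (\ref{quantbar}) after writing out the blocks (using $\overline{I}_{AB} = I_A \oplus \overline{I}_B$). Hence from $\widehat{\rho}^{T_B} \ge 0$ we obtain (\ref{quant4}): $\overline{\Sigma} + \tfrac{i\hbar}{2}J_{AB}\ge 0$.

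To pass to the equivalent form (\ref{quant3}), I would conjugate (\ref{quant4}) by $\overline{I}_{AB}$: since $\overline{I}_{AB}$ is a real invertible (orthogonal) matrix, conjugation preserves positive semidefiniteness, so $\overline{I}_{AB}\big(\overline{\Sigma} + \tfrac{i\hbar}{2}J_{AB}\big)\overline{I}_{AB} \ge 0$. The first term becomes $\overline{I}_{AB}\,\overline{I}_{AB}\,\Sigma\,\overline{I}_{AB}\,\overline{I}_{AB} = \Sigma$, and the second becomes $\tfrac{i\hbar}{2}\,\overline{I}_{AB}J_{AB}\overline{I}_{AB} = \tfrac{i\hbar}{2}\overline{J}_{AB}$ by the very definition of $\overline{J}_{AB}$ given just before the statement. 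This yields (\ref{quant3}). Finally, (\ref{quant2}) itself just restates the quantum condition for $\widehat{\rho}$, which holds since $\widehat{\rho}$ is a density operator; so nothing extra is needed there.

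The only mildly delicate point — and the step I would be most careful about — is the claim that the covariance matrix transforms as $\Sigma \mapsto \overline{I}_{AB}\Sigma\,\overline{I}_{AB}$ under $\rho \mapsto \rho\circ\overline{I}_{AB}$. This is a routine change-of-variables computation in the defining integrals (\ref{zalpha})–(\ref{cov}): writing $w = \overline{I}_{AB}z$ and using $\det\overline{I}_{AB} = \pm 1$, one gets $\int z_\alpha (z_\beta)\,\rho(\overline{I}_{AB}z)\,dz = \int (\overline{I}_{AB}w)_\alpha (\overline{I}_{AB}w)_\beta\,\rho(w)\,dw$, which is precisely the $(\alpha,\beta)$ entry of $\overline{I}_{AB}\Sigma\,\overline{I}_{AB}^{T}$; convergence is guaranteed by the integrability hypothesis (\ref{conv}), which is preserved under the linear involution. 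Everything else is linear algebra about conjugating a Hermitian positive-semidefinite pencil by a fixed real orthogonal matrix, which is immediate.
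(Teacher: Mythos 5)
Your proposal is correct and follows essentially the same route as the paper: invoke Proposition \ref{Prop1} to get that the partial transpose is a density operator, identify its covariance matrix as $\overline{I}_{AB}\Sigma\overline{I}_{AB}$ (the paper states the block transformation directly, you justify it by the change of variables $w=\overline{I}_{AB}z$, which amounts to the same computation), and then pass between (\ref{quant4}) and (\ref{quant3}) by conjugation with the involution $\overline{I}_{AB}$. No gaps.
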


\begin{proof}
Replacing $\rho$ with $\rho\circ\overline{I}_{AB}$ the matrix $\Sigma_{AA}$ in
(\ref{sigmab}) remains unchanged while $\Sigma_{BB}$, $\Sigma_{AB}$, and
$\Sigma_{BA}$ become $\overline{I}_{B}\Sigma_{BB}\overline{I}_{B}$,
$\Sigma_{AB}\overline{I}_{B}$, and $\overline{I}_{B}\Sigma_{BA}$ respectively.
The covariance matrix (\ref{sigmab}) thus becomes $\overline{\Sigma}%
=\overline{I}_{AB}\Sigma\overline{I}_{AB}$. In view of Proposition \ref{Prop1}
the operator $(2\pi\hbar)^{n}\operatorname*{Op}\nolimits_{\mathrm{W}}%
(\rho\circ\overline{I}_{B})$ is also positive semidefinite hence we must have
$\overline{\Sigma}+\frac{i\hbar}{2}J_{AB}\geq0$ which is equivalent to
$\Sigma+\frac{i\hbar}{2}\overline{I}_{AB}J_{AB}\overline{I}_{AB}\geq0$. Since
$\overline{I}_{AB}J_{AB}\overline{I}_{AB}=\overline{J}_{AB}$ this is
equivalent to (\ref{quant3}).
\end{proof}

The ellipsoid
\begin{equation}
\overline{\Omega}=\{z\in\mathbb{R}^{2n}:\tfrac{1}{2}\overline{\Sigma}%
^{-1}z^{2}\leq1\} \label{covellbar1}%
\end{equation}
for the covariance matrix of the partial transpose $\widehat{\rho}^{T_{B}}$
can be expressed in terms of the matrix $\overline{M}=\frac{\hbar}{2}%
\overline{\Sigma}^{-1}$ by%
\begin{equation}
\overline{\Omega}=\{z\in\mathbb{R}^{2n}:\overline{M}z^{2}\leq\hbar\}
\label{covellbar2}%
\end{equation}
where $\overline{M}=\overline{I}_{AB}M\overline{I}_{AB}$.

\subsection{Werner and Wolf's condition}

Using techniques previously developed by Werner \cite{Werner}, Werner and Wolf
\cite{ww1} prove the following crucial necessary condition for separability (a
different proof can be found in Serafini \cite{Serafini}, p.178):

\begin{proposition}
[Werner and Wolf]\label{propww}Suppose that the density operator
$\widehat{\rho}$ with covariance matrix $\Sigma$ is separable. There exist two
partial covariance matrices $\Sigma_{A}$ and $\Sigma_{B}$ of dimensions
$2n_{A}\times2n_{A}$ and $2n_{B}\times2n_{B}$ satisfying the quantum
conditions
\begin{equation}
\Sigma_{A}+\frac{i\hbar}{2}J_{A}\geq0\text{ \ and \ }\Sigma_{B}+\frac{i\hbar
}{2}J_{B}\geq0 \label{quantAB}%
\end{equation}
and such that%
\begin{equation}
\Sigma\geq\Sigma_{A}\oplus\Sigma_{B}~. \label{ww2}%
\end{equation}

\end{proposition}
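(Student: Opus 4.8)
The plan is to use the defining representation \eqref{ABAB} of a separable state, $\widehat{\rho}=\sum_{j}\lambda_{j}\widehat{\rho}_{j}^{A}\otimes\widehat{\rho}_{j}^{B}$, together with the observation that the covariance matrix is an affine‑quadratic functional of $\rho$, so that taking the covariance of a convex combination produces the convex combination of the covariances \emph{plus} a positive semidefinite ``spread'' term coming from the displacement of the means of the individual summands. Concretely, if $\Sigma_{j}^{A}$, $\Sigma_{j}^{B}$ are the covariance matrices of $\widehat{\rho}_{j}^{A}$, $\widehat{\rho}_{j}^{B}$ and $\overline{z}_{j}^{A}$, $\overline{z}_{j}^{B}$ their means, then a direct computation from \eqref{zalpha}–\eqref{cov} gives
\[
\Sigma=\sum_{j}\lambda_{j}\bigl(\Sigma_{j}^{A}\oplus\Sigma_{j}^{B}\bigr)+\sum_{j}\lambda_{j}(\overline{w}_{j}-\overline{w})(\overline{w}_{j}-\overline{w})^{T},
\]
where $\overline{w}_{j}=\overline{z}_{j}^{A}\oplus\overline{z}_{j}^{B}$ and $\overline{w}=\sum_{j}\lambda_{j}\overline{w}_{j}$. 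The second sum is manifestly positive semidefinite, hence $\Sigma\geq\sum_{j}\lambda_{j}(\Sigma_{j}^{A}\oplus\Sigma_{j}^{B})$.

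Next I would set $\Sigma_{A}=\sum_{j}\lambda_{j}\Sigma_{j}^{A}$ and $\Sigma_{B}=\sum_{j}\lambda_{j}\Sigma_{j}^{B}$. Since the block‑diagonal sum distributes over the convex combination, $\sum_{j}\lambda_{j}(\Sigma_{j}^{A}\oplus\Sigma_{j}^{B})=\Sigma_{A}\oplus\Sigma_{B}$, which establishes \eqref{ww2}. For the quantum conditions \eqref{quantAB}: each $\widehat{\rho}_{j}^{A}$ is a genuine density operator on $L^{2}(\mathbb{R}^{n_{A}})$, so by the discussion around \eqref{quant} (or Proposition \ref{propWill}) we have $\Sigma_{j}^{A}+\tfrac{i\hbar}{2}J_{A}\geq0$ for every $j$; averaging over $j$ with weights $\lambda_{j}\geq0$, $\sum_{j}\lambda_{j}=1$, a convex combination of positive semidefinite Hermitian matrices is positive semidefinite, so $\Sigma_{A}+\tfrac{i\hbar}{2}J_{A}\geq0$, and likewise for $B$. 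This gives \eqref{quantAB} and completes the argument.

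The one point requiring care — and the place I expect the ``main obstacle'' to lie — is the justification that all the objects above are well defined: that each $\widehat{\rho}_{j}^{A}$, $\widehat{\rho}_{j}^{B}$ has finite first and second moments (so that $\Sigma_{j}^{A}$, $\overline{z}_{j}^{A}$ exist), and that the series $\sum_{j}\lambda_{j}\Sigma_{j}^{A}$ and $\sum_{j}\lambda_{j}(\overline{w}_{j}-\overline{w})(\overline{w}_{j}-\overline{w})^{T}$ converge. In the Gaussian/Shubin setting the paper actually works in, one can either restrict to the case where the sum \eqref{ABAB} is finite, or invoke the integrability hypothesis \eqref{conv} on $\rho$ to control the tails; morally, the moments of the summands cannot be larger ``on average'' than those of $\widehat{\rho}$, precisely because the extra displacement term is positive. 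I would remark that in the physics literature (Werner–Wolf) this convergence is taken for granted, and note that in our setting it follows from the standing regularity assumptions on $\widehat{\rho}$. Everything else is the elementary linear‑algebra fact that convex combinations preserve the cone of positive semidefinite matrices, plus the bilinearity of the covariance integral.
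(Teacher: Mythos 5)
Your argument is correct, and it is essentially the standard Werner--Wolf argument; the point to note is that the paper itself does not prove Proposition \ref{propww} at all, but simply cites \cite{ww1} and Serafini \cite{Serafini} for it. So you have supplied a proof where the paper defers to the literature. Your two main steps are exactly the ones in those references: (i) the ``law of total variance'' decomposition $\Sigma=\sum_{j}\lambda_{j}\Sigma_{j}+\sum_{j}\lambda_{j}(\overline{w}_{j}-\overline{w})(\overline{w}_{j}-\overline{w})^{T}$, with $\Sigma_{j}=\Sigma_{j}^{A}\oplus\Sigma_{j}^{B}$ because the Wigner distribution of a product state factorizes and hence has vanishing $AB$ cross-covariances; and (ii) the observation that the quantum condition (\ref{quant}) is preserved under convex combinations, so $\Sigma_{A}=\sum_{j}\lambda_{j}\Sigma_{j}^{A}$ and $\Sigma_{B}=\sum_{j}\lambda_{j}\Sigma_{j}^{B}$ satisfy (\ref{quantAB}). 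You also correctly locate the only delicate point, namely the existence of the moments of the summands and the convergence of the series when the index set is infinite. One small refinement there: since the Wigner distributions $\rho_{j}^{A}\otimes\rho_{j}^{B}$ are not pointwise nonnegative, you cannot invoke monotone convergence to pass the second-moment integral through the sum; the clean justification goes through the trace-norm convergence of (\ref{ABAB}) together with the moment hypothesis (\ref{conv}) (or a restriction to finite convex combinations, which is how the physics literature usually phrases separability). This does not affect the validity of your argument, and in any case the paper avoids the issue entirely by citation.
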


We are going to show that Werner and Wolf's result can be considerably refined
using the properties of the symplectic group. We first remark that the quantum
condition $\Sigma+\frac{i\hbar}{2}J\geq0$ on a covariance matrix is equivalent
to the following property: there exists $S\in\operatorname*{Sp}(n)$ such that
$\Sigma\geq\frac{\hbar}{2}(S^{T}S)^{-1}$ (see \cite{FOOP,Birkbis}); this
property is easily deduced from (\ref{quantcamelbis}). It is equivalent to
saying that the covariance ellipsoid $\Omega$ contains a \textit{quantum blob}
\cite{blob}.

\begin{proposition}
\label{Prop2}The Werner--Wolf condition (\ref{ww2}) is equivalent to the
existence of two positive definite symplectic matrices
\begin{equation}
P_{A}=(S_{A}^{T}S_{A})^{-1} \text{ \ },\text{ \ }P_{B}=(S_{B}^{T}S_{B})^{-1}
~, \label{pab}%
\end{equation}
with $S_{A}\in\operatorname*{Sp}(n_{A})$ and $S_{B}\in\operatorname*{Sp}%
(n_{B})$, such that
\begin{equation}
\Sigma\geq\frac{\hbar}{2}(P_{A}\oplus P_{B})~. \label{papb1}%
\end{equation}
Equivalently, the covariance ellipsoid $\Omega$ contains a quantum blob of the
form $(S_{A}\oplus S_{B})(B^{2n}(\sqrt{\hbar}))$.
\end{proposition}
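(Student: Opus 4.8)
The plan is to show the two conditions are equivalent by reducing each to a statement about the existence of an appropriate quantum blob sitting inside the covariance ellipsoid. The key observation, recalled in the text just before the statement, is that for a single covariance matrix $\Sigma_A$ of dimension $2n_A\times 2n_A$, the quantum condition $\Sigma_A+\frac{i\hbar}{2}J_A\ge 0$ is equivalent to the existence of $S_A\in\operatorname{Sp}(n_A)$ with $\Sigma_A\ge\frac{\hbar}{2}(S_A^TS_A)^{-1}$, and similarly for the $B$-block. So I would first handle the direction (\ref{papb1})$\Rightarrow$(\ref{ww2}): given symplectic $S_A,S_B$ with $\Sigma\ge\frac{\hbar}{2}(P_A\oplus P_B)$, simply set $\Sigma_A=\frac{\hbar}{2}P_A$ and $\Sigma_B=\frac{\hbar}{2}P_B$. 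These are positive definite, they satisfy the quantum conditions (\ref{quantAB}) because $\Sigma_A+\frac{i\hbar}{2}J_A=\frac{\hbar}{2}(P_A+iJ_A)$ and $P_A=(S_A^TS_A)^{-1}$ is the covariance matrix of a minimum-uncertainty Gaussian (equality in Proposition \ref{propWill}, all symplectic eigenvalues equal to $\frac{\hbar}{2}$), and by construction $\Sigma\ge\Sigma_A\oplus\Sigma_B$; this is exactly (\ref{ww2}).

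The substantive direction is (\ref{ww2})$\Rightarrow$(\ref{papb1}). Here I would start from the matrices $\Sigma_A,\Sigma_B$ furnished by Werner--Wolf, which satisfy (\ref{quantAB}). Applying the single-block characterization to each, there exist $S_A\in\operatorname{Sp}(n_A)$ and $S_B\in\operatorname{Sp}(n_B)$ with
\[
\Sigma_A\ge\tfrac{\hbar}{2}(S_A^TS_A)^{-1}=\tfrac{\hbar}{2}P_A,\qquad
\Sigma_B\ge\tfrac{\hbar}{2}(S_B^TS_B)^{-1}=\tfrac{\hbar}{2}P_B.
\]
Taking direct sums of these two positive-semidefinite inequalities gives $\Sigma_A\oplus\Sigma_B\ge\frac{\hbar}{2}(P_A\oplus P_B)$, and chaining with (\ref{ww2}) yields $\Sigma\ge\frac{\hbar}{2}(P_A\oplus P_B)$, which is (\ref{papb1}). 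The only point requiring a word of care is that $P_A\oplus P_B$ is genuinely of the stated form $(S_A^TS_A)^{-1}\oplus(S_B^TS_B)^{-1}=((S_A\oplus S_B)^T(S_A\oplus S_B))^{-1}$, and that $S_A\oplus S_B\in\operatorname{Sp}(n)$ — this is immediate from $J=J_A\oplus J_B$ and the block-diagonal form, so $P_A\oplus P_B$ is indeed a positive definite symplectic matrix.

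Finally I would translate (\ref{papb1}) into the geometric statement about the covariance ellipsoid. Since the symplectic capacity / quantum-blob reformulation used throughout (see (\ref{quantcamelbis}) and the remark before the Proposition) says that $\Sigma\ge\frac{\hbar}{2}(S^TS)^{-1}$ is equivalent to $S(B^{2n}(\sqrt{\hbar}))\subset\Omega$, applying this with $S=S_A\oplus S_B$ shows (\ref{papb1}) is equivalent to $(S_A\oplus S_B)(B^{2n}(\sqrt{\hbar}))\subset\Omega$, i.e.\ $\Omega$ contains a quantum blob of the special block-diagonal form. I expect the main (and really only) obstacle to be bookkeeping: making sure the equivalence $\Sigma_A+\frac{i\hbar}{2}J_A\ge0\iff\exists S_A:\Sigma_A\ge\frac{\hbar}{2}(S_A^TS_A)^{-1}$ is invoked cleanly in both directions — the forward use (to extract $S_A,S_B$) is the one that does the real work, while the converse use (to re-derive (\ref{quantAB}) from $P_A,P_B$) is routine; everything else is taking orthogonal direct sums of inequalities between symmetric matrices, which preserves positivity block by block.
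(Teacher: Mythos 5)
Your proposal is correct and follows essentially the same route as the paper: the easy direction sets $\Sigma_A=\frac{\hbar}{2}P_A$, $\Sigma_B=\frac{\hbar}{2}P_B$, and the substantive direction applies the single-block equivalence $\Sigma_A+\frac{i\hbar}{2}J_A\geq 0\iff\Sigma_A\geq\frac{\hbar}{2}(S_A^TS_A)^{-1}$ to each factor and chains the resulting direct-sum inequality with (\ref{ww2}). The only difference is that the paper re-derives that single-block equivalence explicitly via Williamson diagonalization (showing $D_A\geq\frac{\hbar}{2}I_A$ from the characteristic equation), whereas you invoke the remark stated just before the proposition; this is a matter of exposition, not substance.
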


\begin{proof}
The sufficiency of the condition is clear since $\Sigma_{A}=\frac{\hbar}%
{2}P_{A}$ and $\Sigma_{A}=\frac{\hbar}{2}P_{A}$ satisfy the conditions
(\ref{quantAB}). Assume conversely that $\Sigma\geq\Sigma_{A}\oplus\Sigma_{B}$
as in Proposition \ref{propww}. In view of Williamson's diagonalization
theorem \cite{Folland,Birk} there exist $S_{A}\in\operatorname*{Sp}(n_{A})$
and $S_{B}\in\operatorname*{Sp}(n_{B})$ such that $S_{A}\Sigma_{A}S_{A}%
^{T}=D_{A}$ and $S_{B}\Sigma_{B}S_{B}^{T}=D_{B}$ where%
\[
D_{A}=%
\begin{pmatrix}
\Lambda_{A} & 0\\
0 & \Lambda_{A}%
\end{pmatrix}
\text{ \ },\text{ }D_{B}=%
\begin{pmatrix}
\Lambda_{B} & 0\\
0 & \Lambda_{B}%
\end{pmatrix}
\]
and $\Lambda_{A}$, $\Lambda_{B}$ being the diagonal matrices consisting of the
symplectic eigenvalues $\lambda_{1}^{\sigma_{A}},...,\lambda_{n_{A}}%
^{\sigma_{A}}$ of $\Sigma_{A}$ and $\lambda_{1}^{\sigma_{B}},...,\lambda
_{n_{B}}^{\sigma_{B}}$ of $\Sigma_{B}$ (\textit{i.e}. the $\pm i\lambda
_{1}^{\sigma_{A}}$ are the eigenvalues of of $\Sigma_{A}^{1/2}J_{A}\Sigma
_{A}^{1/2}$, see \textit{e.g.} \cite{Birk,sisumu}). Since $S_{A}J_{A}S_{A}%
^{T}=J_{A}$ and $S_{B}J_{B}S_{B}^{T}=J_{B}$ the conditions $\Sigma_{A}%
+\frac{i\hbar}{2}J_{A}\geq0$ and $\Sigma_{B}+\frac{i\hbar}{2}J_{B}\geq0$ are
equivalent to $D_{A}+\frac{i\hbar}{2}J_{A}\geq0$ and $D_{B}+\frac{i\hbar}%
{2}J_{B}\geq0$. These conditions imply that $D_{A}\geq\frac{\hbar}{2}I_{A}$
and $D_{B}\geq\frac{\hbar}{2}I_{B}$: the characteristic equation of
$D_{A}+\frac{i\hbar}{2}J_{A}$ is
\[
\det\left(  (\Lambda_{A}-\lambda I_{A})^{2}-\tfrac{1}{4}\hbar^{2}I_{A}\right)
=0~.
\]
Writing $\Lambda_{A}=\operatorname*{diag}(\lambda_{1}^{\sigma_{A}}%
,...,\lambda_{n}^{\sigma_{A}})$ this equation is equivalent to the set of
equations
\[
(\lambda_{j}^{\sigma_{A}}-\lambda)^{2}-\tfrac{1}{4}\hbar^{2}=0\text{ , }1\leq
j\leq n_{A},
\]
whose solutions are the real numbers $\lambda_{j}=\lambda_{j}^{\sigma_{A}}%
\pm\frac{\hbar}{2}$. Since $\lambda_{j}\geq0$ we must thus have $\lambda
_{j}^{\sigma_{A}}\geq\frac{\hbar}{2}$ and hence $D_{A}\geq\frac{\hbar}{2}%
I_{A}$. Similarly, $D_{B}\geq\frac{\hbar}{2}I_{B}$ so we must have the
inequalities%
\begin{align*}
\Sigma_{A}  &  =S_{A}^{-1}D_{A}(S_{A}^{T})^{-1}\geq\frac{\hbar}{2}(S_{A}%
^{T}S_{A})^{-1}\\
\Sigma_{B}  &  =S_{B}^{-1}D_{A}(S_{B}^{T})^{-1}\geq\frac{\hbar}{2} (S_{B}%
^{T}S_{B})^{-1}~.
\end{align*}
Setting $P_{A}=(S_{A}^{T}S_{A})^{-1}$ and $P_{B}=(S_{B}^{T}S_{B})^{-1}$ the
inequality (\ref{papb1}) follows.
\end{proof}

\subsection{A property of the reduced covariance ellipsoids}

\label{section2.3}

The previous propositions have a very simple geometrical meaning, to which we
will come back in Section \ref{secsuff1}. The conditions (\ref{quantAB}) mean
that $\Sigma_{A}$ and $\Sigma_{B}$ are quantum covariances matrices, hence the
sum
\[
\Sigma_{A}\oplus\Sigma_{B}\equiv%
\begin{pmatrix}
\Sigma_{A} & 0\\
0 & \Sigma_{B}%
\end{pmatrix}
\]
is a quantum covariance matrix in its own right. It follows from (\ref{ww2})
that the corresponding covariance ellipsoid, which we denote
\begin{equation}
\label{AopB}\Omega_{A\oplus B}=\{z_{A}\oplus z_{B}:\tfrac{1}{2}\Sigma_{A}%
^{-1}z_{A} ^{2}+\tfrac{1}{2}\Sigma_{B}^{-1}z_{B}^{2}\leq1\} ~,
\end{equation}
is included in $\Omega$.

Moreover, in view of (the proof of) Proposition \ref{Prop2}, the ellipsoid
$\Omega_{A \oplus B}$ always contains a quantum blob of the form
\begin{equation}
\label{omab}\Omega_{AB}=S_{A}\oplus S_{B}(B^{2n}(\sqrt{\hbar}))=\{ z_{A}\oplus
z_{B}:|S_{A}^{-1}z_{A}|^{2}+|S_{B}^{-1}z_{B}|^{2}\leq\hbar\} ~.
\end{equation}
Hence, if the density operator $\widehat{\rho}$ with covariance ellipsoid
$\Omega$ is separable then there exist quantum covariance ellipsoids of the
form (\ref{AopB}) and (\ref{omab}) such that the following inclusions hold
\begin{equation}
\label{inclusion}\Omega\supset\Omega_{A\oplus B} \supset\Omega_{AB} ~.
\end{equation}

This result has an interesting consequence for the covariance ellipsoids
\[
\Omega_{A}=\{z_{A}:\tfrac{1}{2}\Sigma_{AA}^{-1}z_{A}^{2}\leq1\}\text{ and
}\Omega_{B}=\{z_{B}:\tfrac{1}{2}\Sigma_{BB}^{-1}z_{B}^{2}\leq1\}~.
\]
of the reduced density operators $\widehat{\rho}_{A}$ and $\widehat{\rho}_{B}%
$. We first show that:

\begin{proposition}
\label{PropABBO}The orthogonal projections $\Pi_{A}\Omega_{AB}$ and $\Pi
_{B}\Omega_{AB}$ of $\Omega_{AB}$ onto $\mathbb{R}^{2n_{A}}$ and
$\mathbb{R}^{2n_{B}}$ satisfy
\begin{equation}
\label{piomab}\Pi_{A}\Omega_{AB}=S_{A}(B^{2n_{A}}(\sqrt{\hbar}))\text{
\ },\text{ \ }\Pi_{B}\Omega_{AB}=S_{B}(B^{2n_{B}}(\sqrt{\hbar}))\text{~.}%
\end{equation}

\end{proposition}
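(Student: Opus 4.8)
The statement asserts that the orthogonal projection of the quantum blob $\Omega_{AB} = (S_A \oplus S_B)(B^{2n}(\sqrt{\hbar}))$ onto the factor $\mathbb{R}^{2n_A}$ is exactly the symplectic ball $S_A(B^{2n_A}(\sqrt{\hbar}))$, and symmetrically for $B$. The plan is to reduce this to an elementary fact about projecting a round ball through a block-diagonal linear map, so that no machinery beyond the description (\ref{omab}) of $\Omega_{AB}$ is needed.

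\medskip

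First I would unwind the defining inequality in (\ref{omab}): a point $z_A \in \mathbb{R}^{2n_A}$ lies in $\Pi_A \Omega_{AB}$ if and only if there exists $z_B \in \mathbb{R}^{2n_B}$ with $|S_A^{-1}z_A|^2 + |S_B^{-1}z_B|^2 \leq \hbar$. Since $S_B^{-1}$ is invertible, the quantity $|S_B^{-1}z_B|^2$ ranges over all of $[0,\infty)$ as $z_B$ ranges over $\mathbb{R}^{2n_B}$, and its infimum over $z_B$ is $0$ (attained at $z_B = 0$). Hence the existence of a suitable $z_B$ is equivalent to $|S_A^{-1}z_A|^2 \leq \hbar$, i.e. to $z_A \in S_A(B^{2n_A}(\sqrt{\hbar}))$. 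This proves the first identity in (\ref{piomab}); the second follows by exchanging the roles of $A$ and $B$. Conceptually: the projection of a block-decomposed ellipsoid $\{|S_A^{-1}z_A|^2 + |S_B^{-1}z_B|^2 \leq \hbar\}$ onto $\mathbb{R}^{2n_A}$ is obtained by simply dropping the $z_B$-term, because the cross-terms vanish — this is the special, degenerate case of the Schur-complement formula of Lemma~\ref{lemmapim} in which the off-diagonal blocks $M_{AB}$, $M_{BA}$ are zero, so that $M/M_{BB} = M_{AA}$.

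\medskip

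Alternatively, and perhaps more cleanly for the exposition, I would invoke Lemma~\ref{lemmapim} directly. Write $\Omega_{AB} = \Omega_R$ with $R = \sqrt{\hbar}$ and defining matrix $M = \bigl((S_A \oplus S_B)(S_A \oplus S_B)^T\bigr)^{-1} = (S_A^T S_A)^{-1} \oplus (S_B^T S_B)^{-1}$, which is block-diagonal. Thus $M_{AB} = 0$, $M_{BA} = 0$, $M_{AA} = (S_A^T S_A)^{-1}$, $M_{BB} = (S_B^T S_B)^{-1}$, and the Schur complement collapses to $M/M_{BB} = M_{AA} = (S_A^T S_A)^{-1}$. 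Lemma~\ref{lemmapim} then gives $\Pi_A \Omega_{AB} = \{z_A : (S_A^T S_A)^{-1} z_A^2 \leq \hbar\} = \{z_A : |S_A^{-1} z_A|^2 \leq \hbar\} = S_A(B^{2n_A}(\sqrt{\hbar}))$, and symmetrically for $B$. Either route is short.

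\medskip

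I do not anticipate a genuine obstacle here; the only point requiring a word of care is the claim that the feasibility of the constraint in $z_B$ reduces to dropping the $z_B$-term, which rests on $S_B^{-1}$ being invertible (so $z_B = 0$ is admissible and contributes nothing) — this is exactly where the hypothesis that $S_B \in \operatorname{Sp}(n_B)$, hence invertible, is used. The content of the proposition is therefore essentially the observation that projecting a quantum blob built from a block-diagonal symplectic matrix along the factor directions returns the corresponding lower-dimensional quantum blob; this is the technical fact that will feed into the subsequent discussion of the reduced covariance ellipsoids $\Omega_A$, $\Omega_B$ and the inclusion chain (\ref{inclusion}).
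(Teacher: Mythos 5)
Your direct argument is correct, and it is precisely the first route the paper alludes to (``easily proved directly from the definition of $\Omega_{AB}$'') but does not write out: because the quadratic form defining $\Omega_{AB}$ in (\ref{omab}) has no cross terms between $z_A$ and $z_B$, the existential quantifier over $z_B$ is resolved by taking $z_B=0$, and the projection is read off at once. The paper's displayed argument instead invokes the linear non-squeezing result (\ref{abboprime}) of \cite{digopra19}, which asserts $\Pi_{A}S(B^{2n}(\sqrt{\hbar}))\supset S_{A}(B^{2n_{A}}(\sqrt{\hbar}))$ for an \emph{arbitrary} $S\in\operatorname{Sp}(n)$, with equality exactly when $S=S_A\oplus S_B$; specializing to the block-diagonal $S$ defining $\Omega_{AB}$ yields (\ref{piomab}). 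Your route is more elementary and self-contained; the paper's buys the extra information that for a non-split $S$ the shadow of the quantum blob \emph{strictly} contains a lower-dimensional quantum blob, which is used elsewhere in the paper. One small slip in your second variant: the defining matrix of $S(B^{2n}(\sqrt{\hbar}))$ with $S=S_A\oplus S_B$ is $(SS^{T})^{-1}=(S_{A}S_{A}^{T})^{-1}\oplus(S_{B}S_{B}^{T})^{-1}$, not $(S_{A}^{T}S_{A})^{-1}\oplus(S_{B}^{T}S_{B})^{-1}$; with the correct matrix one has $(S_{A}S_{A}^{T})^{-1}z_{A}\cdot z_{A}=|S_{A}^{-1}z_{A}|^{2}$ and the conclusion $\Pi_{A}\Omega_{AB}=S_{A}(B^{2n_{A}}(\sqrt{\hbar}))$ follows as you state, whereas the transposed version would instead describe $S_{A}^{T}(B^{2n_{A}}(\sqrt{\hbar}))$. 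This does not affect your main argument, which is complete.
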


\begin{proof}
This result is easily proved directly from the definition of $\Omega_{AB}$.
Alternatively we can use a recent result \cite{digopra19} which generalizes
Gromov's symplectic non-squeezing theorem \cite{Gromov} in the linear case,
and which refines a previous result of Abbondandolo and his collaborators
\cite{abbo,abbobis}. It states that for every $S\in\operatorname*{Sp}(n)$
there exists $S_{A}\in\operatorname*{Sp}(n_{A})$ such that
\begin{equation}
\label{abboprime}\Pi_{A}S(B^{2n}(\sqrt{\hbar}))\supset S_{A}(B^{2n_{A}}%
(\sqrt{\hbar}))
\end{equation}
with equality if and only if $S=S_{A}\oplus S_{B}$. The result (\ref{piomab})
follows using the definition (\ref{omab}) of $\Omega_{AB}$. The same argument
applies to $\Pi_{B}\Omega_{AB}$.
\end{proof}

Notice that since symplectic automorphisms are volume-preserving the result
above implies that
\begin{align*}
\operatorname*{Vol}\nolimits_{2n_{A}}\Pi_{A}\Omega_{AB}  &
=\operatorname*{Vol}\nolimits_{2n_{A}}B^{2n_{A}}(\sqrt{\hbar})=\frac{(\pi
\hbar)^{n_{A}}}{n_{A}!}\\
\operatorname*{Vol}\nolimits_{2n_{B}}\Pi_{B}\Omega_{AB}  &
=\operatorname*{Vol}\nolimits_{2n_{B}}B^{2n_{B}}(\sqrt{\hbar})=\frac{(\pi
\hbar)^{n_{B}}}{n_{B}!}~.
\end{align*}

Likewise, the orthogonal projections of $\Omega_{A\oplus B}$ on $\mathbb{R}%
^{2n_{A}}$ and $\mathbb{R}^{2n_{B}}$ are just the intersections of
$\Omega_{A\oplus B}$ with the hyperplanes $z_{B}=0$ and $z_{A}=0$, respectively.

Finally, from (\ref{inclusion}) we easily conclude that the covariant
ellipsoids $\Omega_{A}$ and $\Omega_{B}$ impose the following constraints on
the symplectic matrices $S_{A}$ and $S_{B}$ of Proposition \ref{Prop2}:

\begin{corollary}
\label{CorABBO} Assume that the density operator $\widehat{\rho}$ with
covariant ellipsoid $\Omega$ is separable. Then the symplectic matrices
$S_{A}$ and $S_{B}$ of Proposition \ref{Prop2} satisfy:
\begin{equation}
\label{Pellipsoids}S_{A}B^{2n_{A}}(\sqrt{\hbar}) \subset\Omega_{A} \quad,
\quad S_{B}B^{2n_{B}}(\sqrt{\hbar}) \subset\Omega_{B}%
\end{equation}

\end{corollary}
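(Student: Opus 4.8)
The plan is to obtain this as an immediate consequence of the inclusion chain (\ref{inclusion}) together with two facts already at our disposal: that orthogonal projections are monotone for set inclusion, and that we have already identified the relevant projections, namely $\Pi_A\Omega=\Omega_A$ (formula (\ref{pambb})) and $\Pi_A\Omega_{AB}=S_A(B^{2n_A}(\sqrt{\hbar}))$ (Proposition \ref{PropABBO}). So almost no new work is needed; the substantive input is the linear non-squeezing statement (\ref{abboprime}) underlying Proposition \ref{PropABBO}.

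First I would unwind the hypothesis. Assuming $\widehat{\rho}$ is separable, Proposition \ref{propww} gives partial covariance matrices $\Sigma_A,\Sigma_B$ satisfying (\ref{quantAB}) and (\ref{ww2}), and Proposition \ref{Prop2} then produces the positive definite symplectic matrices $P_A=(S_A^TS_A)^{-1}$, $P_B=(S_B^TS_B)^{-1}$ with $S_A\in\operatorname*{Sp}(n_A)$, $S_B\in\operatorname*{Sp}(n_B)$ such that $\Sigma\geq\frac{\hbar}{2}(P_A\oplus P_B)$. With $\Omega_{A\oplus B}$ as in (\ref{AopB}) and $\Omega_{AB}=(S_A\oplus S_B)(B^{2n}(\sqrt{\hbar}))$ as in (\ref{omab}), these are exactly the ellipsoids for which the inclusions (\ref{inclusion}), $\Omega\supset\Omega_{A\oplus B}\supset\Omega_{AB}$, hold. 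In particular $\Omega\supset\Omega_{AB}$.

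Next I would apply the orthogonal projection $\Pi_A\colon\mathbb{R}^{2n}\to\mathbb{R}^{2n_A}$ to the inclusion $\Omega\supset\Omega_{AB}$. Since $X\subset Y$ implies $\Pi_A X\subset\Pi_A Y$, we get $\Pi_A\Omega\supset\Pi_A\Omega_{AB}$. Now by (\ref{pambb}) the left-hand side is $\Omega_A$, and by Proposition \ref{PropABBO} the right-hand side is $S_A(B^{2n_A}(\sqrt{\hbar}))$; hence $S_A(B^{2n_A}(\sqrt{\hbar}))\subset\Omega_A$, which is the first inclusion in (\ref{Pellipsoids}). The second inclusion follows by the same argument with $A$ and $B$ interchanged, using (\ref{pamaa}) and the $B$-part of Proposition \ref{PropABBO}.

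There is essentially no obstacle here; the only point requiring a moment's care is that the symplectic matrices $S_A,S_B$ named in Proposition \ref{Prop2} are literally the ones appearing in Proposition \ref{PropABBO}. This holds by construction, since $\Omega_{AB}$ in (\ref{omab}) is built from precisely those $S_A,S_B$, so the two statements are about the same objects and may be chained without further comment. (If one preferred, (\ref{Pellipsoids}) could also be verified directly from $\Sigma\geq\frac{\hbar}{2}(P_A\oplus P_B)$ by taking Schur complements of the $BB$- and $AA$-blocks and comparing with (\ref{pambb})–(\ref{pamaa}), but the projection argument is shorter and is the one I would write up.)
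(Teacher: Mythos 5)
Your proof is correct and follows essentially the same route as the paper's: from the inclusion $\Omega_{AB}\subset\Omega$ of (\ref{inclusion}) one applies the monotonicity of the orthogonal projections to get $\Pi_{A}\Omega_{AB}\subset\Pi_{A}\Omega=\Omega_{A}$ (and likewise for $B$), then identifies $\Pi_{A}\Omega_{AB}=S_{A}(B^{2n_{A}}(\sqrt{\hbar}))$ via (\ref{piomab}). Your write-up merely spells out the intermediate identifications more explicitly than the paper does; no changes are needed.
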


\begin{proof}
From (\ref{inclusion}) we have $\Omega_{AB} \subset\Omega$ and so:
\[
\Pi_{A} \Omega_{AB} \subset\Pi_{A} \Omega= \Omega_{A} \quad\mbox{and} \quad
\Pi_{B} \Omega_{AB} \subset\Pi_{B} \Omega=\Omega_{B} ~.
\]
The result then follows from (\ref{piomab}).
\end{proof}

\section{Gaussian Quantum States}

\label{Section3}

\subsection{Generalities, a sufficient condition for separability}

A simple, but very interesting case, occurs when $\rho$ is a Gaussian Wigner
distribution
\[
\rho(z)=\frac{1}{(2\pi)^{n}\sqrt{\det\Sigma}}e^{-\frac{1}{2}\Sigma^{-1}%
(z-\bar{z})^{2}}%
\]
centered at $\bar{z}\in\mathbb{R}^{2n}$, where $\Sigma$ is a positive definite
real symmetric $2n\times2n$ matrix (the \textquotedblleft covariance
matrix\textquotedblright). The normalization factor preceding the exponential
guarantees that $\operatorname*{Tr}(\widehat{\rho})=1$. We will only consider
the case $\bar{z}=0$; the more general case is easily reduced to the former by
a phase space translation. Hence we assume that
\begin{equation}
\rho(z)=\frac{1}{(2\pi)^{n}\sqrt{\det\Sigma}}e^{-\frac{1}{2}\Sigma^{-1}z^{2}}
\label{Gauss}%
\end{equation}
and, setting as usual $M=\frac{\hbar}{2}\Sigma^{-1}$, we can rewrite
(\ref{Gauss}) as
\begin{equation}
\rho(z)=(\pi\hbar)^{-n}(\det M)^{1/2}e^{-\frac{1}{\hbar}Mz^{2}}~. \label{rhum}%
\end{equation}
Since $\rho$ is real, the Weyl operator $\widehat{\rho}=(2\pi\hbar
)^{n}\operatorname*{Op}\nolimits_{\mathrm{W}}(\rho)$ is self-adjoint. To
ensure that $\widehat{\rho}$ is positive semidefinite it is \textit{necessary
and sufficient} \cite{Narcow89,Narcow90,Narconnell} that the covariance matrix
satisfies the quantum condition (\ref{quant2}), which we assume from now on.
Notice that the general result (\ref{abboprime}) that was used in Proposition
\ref{PropABBO} also provides an alternative proof of the fact that the partial
trace operators $\widehat{\rho}_{A}$ and $\widehat{\rho}_{B}$ are density
operators. In fact, to prove this we had to use for the general case the KLM
conditions (Proposition \ref{PropKLM}) in Section \ref{subsec22} to prove the
positivity properties $\widehat{\rho}_{A}\geq0$ and $\widehat{\rho}_{B}\geq0$.
In the Gaussian case we can instead consider the quantum condition
(\ref{quant}) which is equivalent to (\ref{quantcamelbis}). From
(\ref{abboprime}) it then follows that
\begin{equation}
\Sigma_{A}+\frac{i\hbar}{2}J_{A}\geq0\text{ ,}\ \Sigma_{B}+\frac{i\hbar}%
{2}J_{B}\geq0 \label{sigab}%
\end{equation}
hence $\widehat{\rho}_{A}$ and $\widehat{\rho}_{B}$ are (Gaussian) density operators.

The purity of $\widehat{\rho}$ is then given by%
\begin{equation}
\mu(\widehat{\rho})=\left(  \tfrac{\hbar}{2}\right)  ^{n}(\det\Sigma
)^{-1/2}=\sqrt{\det M} \label{purity}%
\end{equation}
(see \textit{e.g.} \cite{Birk}, \S 9.3, p.301). That the terminology
\textquotedblleft covariance matrix\textquotedblright\ applied to $\Sigma$ is
justified in the quantum case as it is in classical statistical mechanics,
follows from formulas (\ref{zalpha}) and (\ref{cov}). It is also clear that we
have $\rho\in\Gamma^{m}(\mathbb{R}^{2n})$ for every $m<-2n$ hence $\rho_{A}%
\in\Gamma^{m_{A}}(\mathbb{R}^{2n_{A}})$ for every $m_{A}<-2n_{A}$ (see
Proposition \ref{mA}).

It turns out that Werner and Wolf's conditions in Proposition \ref{propww} are
sufficient for a Gaussian state to be separable:

\begin{proposition}
\label{propwwg}Assume that there exist two partial covariance matrices
$\Sigma_{A}$ and $\Sigma_{B}$ satisfying the quantum conditions (\ref{sigab})
and such that
\begin{equation}
\Sigma\geq\Sigma_{A}\oplus\Sigma_{B}~. \label{sigsigab}%
\end{equation}
Then the Gaussian state (\ref{Gauss}) is separable.
\end{proposition}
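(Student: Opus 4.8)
The plan is to reduce the inequality $\Sigma\geq\Sigma_{A}\oplus\Sigma_{B}$ to the equality case $\Sigma=\Sigma_{A}\oplus\Sigma_{B}$ by convolving the Wigner distribution with a \emph{classical} Gaussian probability density, and then to note that the equality case produces a product state $\widehat{\rho}_{A}\otimes\widehat{\rho}_{B}$, which is manifestly separable. The mechanism is that smearing a Weyl operator by a classical kernel is the same as averaging it over phase-space translations, and such an average of product states is again separable.

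First I would dispose of the equality case. If $\Sigma=\Sigma_{A}\oplus\Sigma_{B}$ then $\det\Sigma=\det\Sigma_{A}\det\Sigma_{B}$ and $\Sigma^{-1}=\Sigma_{A}^{-1}\oplus\Sigma_{B}^{-1}$, so the Gaussian Wigner distribution (\ref{Gauss}) factorizes as $\rho(z)=\rho_{A}(z_{A})\rho_{B}(z_{B})$ with $\rho_{A}$, $\rho_{B}$ the centered Gaussians with covariances $\Sigma_{A}$, $\Sigma_{B}$. Since $(2\pi\hbar)^{n}=(2\pi\hbar)^{n_{A}}(2\pi\hbar)^{n_{B}}$, this gives $\widehat{\rho}=\widehat{\rho}_{A}\otimes\widehat{\rho}_{B}$ with $\widehat{\rho}_{A}=(2\pi\hbar)^{n_{A}}\operatorname*{Op}\nolimits_{\mathrm{W}}(\rho_{A})$ and likewise for $B$. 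The quantum conditions (\ref{sigab}) are, by the necessary and sufficient positivity criterion for Gaussian Weyl operators recalled above, precisely what guarantees that $\widehat{\rho}_{A}$ and $\widehat{\rho}_{B}$ are genuine density operators; hence $\widehat{\rho}$ is separable, being a single term of the form (\ref{AB}).

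For the general case I would set $\Sigma_{0}=\Sigma_{A}\oplus\Sigma_{B}$, $\Delta=\Sigma-\Sigma_{0}\geq 0$, and use that the convolution of the centered Gaussian densities with covariances $\Sigma_{0}$ and $\Delta$ is the centered Gaussian with covariance $\Sigma$; on Wigner distributions this reads $\rho=\rho_{0}\ast g_{\Delta}$, where $\rho_{0}$ is the Gaussian (\ref{Gauss}) attached to $\Sigma_{0}$ and $g_{\Delta}\geq 0$, $\int g_{\Delta}=1$, is a classical probability density (if $\det\Delta=0$ I would replace $\Delta$ by $\Delta+\varepsilon I$ and let $\varepsilon\to 0^{+}$, using trace-norm continuity, or work directly with the degenerate Gaussian measure). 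Since translating a Wigner distribution corresponds to conjugating the Weyl operator by a displacement operator, $(2\pi\hbar)^{n}\operatorname*{Op}\nolimits_{\mathrm{W}}(\rho_{0}(\cdot-z_{0}))=\widehat{T}(z_{0})\,\widehat{\rho}_{0}\,\widehat{T}(z_{0})^{-1}$, averaging against $g_{\Delta}$ gives
\[
\widehat{\rho}=\int_{\mathbb{R}^{2n}}g_{\Delta}(z_{0})\,\widehat{T}(z_{0})\,\widehat{\rho}_{0}\,\widehat{T}(z_{0})^{-1}\,dz_{0}.
\]
Because $z_{0}=z_{0,A}\oplus z_{0,B}$ and $\sigma=\sigma_{A}\oplus\sigma_{B}$, the displacement operator factorizes, $\widehat{T}(z_{0})=\widehat{T}_{A}(z_{0,A})\otimes\widehat{T}_{B}(z_{0,B})$, and by the equality case $\widehat{\rho}_{0}=\widehat{\rho}_{A}^{0}\otimes\widehat{\rho}_{B}^{0}$; thus each integrand is a product of density operators and $\widehat{\rho}$ is a continuous convex combination of product states. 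Discretizing the integral over a fine partition of $\mathbb{R}^{2n}$, with weights the $g_{\Delta}$-masses of the cells, and invoking trace-norm continuity of $z_{0}\mapsto\widehat{T}(z_{0})\widehat{\rho}_{0}\widehat{T}(z_{0})^{-1}$ together with $g_{\Delta}\in L^{1}$, I would obtain a countable convex combination of product states converging to $\widehat{\rho}$ in $\mathcal{L}_{1}(L^{2}(\mathbb{R}^{n}))$, which is exactly (\ref{AB}); so $\widehat{\rho}$ is $AB$-separable.

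The conceptual heart — classical Gaussian noise turns the inequality into an equality and cannot destroy separability — is short. The steps I would write out carefully are the Weyl-symbol identity $\rho=\rho_{0}\ast g_{\Delta}$ together with the claim that smearing a Weyl operator by a classical kernel equals the displacement-averaged operator above (a routine symplectic-Fourier-transform computation resting on the covariance of the Weyl calculus), the degenerate case $\det\Delta=0$, and the passage from the integral mixture to the countable sum demanded by the definition (\ref{AB}). I expect the last to be the only genuine nuisance; none of these is essentially hard.
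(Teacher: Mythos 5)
Your argument is correct in substance, but note that the paper does not actually prove this proposition: its ``proof'' is a citation to Proposition~1 of Werner and Wolf \cite{ww1}. What you have written is essentially a reconstruction of that argument, and all the key steps are sound: the equality case $\Sigma=\Sigma_{A}\oplus\Sigma_{B}$ gives a product state (with the quantum conditions (\ref{sigab}) guaranteeing that each factor is a \emph{bona fide} density operator); the identity $\rho=\rho_{0}\ast g_{\Delta}$ with $\Delta=\Sigma-\Sigma_{A}\oplus\Sigma_{B}\geq0$ is the standard convolution law for Gaussians; and the translation covariance of the Weyl calculus converts the convolution into the displacement average $\int g_{\Delta}(z_{0})\widehat{T}(z_{0})\widehat{\rho}_{0}\widehat{T}(z_{0})^{-1}dz_{0}$, whose integrand factorizes because $\widehat{T}(z_{0})=\widehat{T}_{A}(z_{0,A})\otimes\widehat{T}_{B}(z_{0,B})$. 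Your treatment of degenerate $\Delta$ via the singular Gaussian measure (or regularization) is also fine. So relative to the paper you are supplying a self-contained proof where the authors defer to the literature; the trade-off is that the paper stays short while your version makes visible \emph{why} the condition (\ref{sigsigab}) is sufficient, namely that the gap $\Delta$ is pure classical noise.

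One point deserves more care than your closing remark suggests. The paper's definition (\ref{AB}) of separability demands an exact countable convex decomposition $\widehat{\rho}=\sum_{j}\alpha_{j}\widehat{\rho}_{j}^{A}\otimes\widehat{\rho}_{j}^{B}$ converging in trace norm. Your discretization produces, for each partition, a countable convex combination of product states that \emph{approximates} $\widehat{\rho}$ in trace norm; refining the partition gives a sequence of such combinations converging to $\widehat{\rho}$, which shows that $\widehat{\rho}$ lies in the trace-norm closure of the convex hull of product states, not that it literally admits a single countable decomposition. These two notions need not coincide in infinite dimensions, so ``which is exactly (\ref{AB})'' overstates what the construction delivers. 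This is a known wrinkle shared by the cited source, and it is usually resolved by adopting the closure as the definition of separability for continuous-variable systems; but as written, against the paper's literal definition, this last step is the one genuine gap in your proof.
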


\begin{proof}
See \cite{ww1} (Proposition 1).
\end{proof}

\subsection{Pure Gaussians}

Let $X$ and $Y$ be real symmetric $n\times n$ matrices, with $X>0$. To these
matrices we associate the Gaussian function $\phi_{X,Y}$ on $\mathbb{R}^{n}$
defined by%
\begin{equation}
\phi_{X,Y}(x)=(\pi\hbar)^{-n/4}(\det X)^{1/4}e^{-\frac{1}{2\hbar}(X+iY)x^{2}}
\label{fxy1}%
\end{equation}
where we are writing $(X+iY)x^{2}$ for $(X+iY)x\cdot x$. This function is
$L^{2}$-normalized: $||\phi_{X,Y}||_{L^{2}(\mathbb{R}^{n})}=1$ and its Wigner
transform is given by the well-known formula \cite{Bastiaans,Birk,Wigner}
\begin{equation}
W\phi_{X,Y}(z)=(\pi\hbar)^{-n}e^{-\tfrac{1}{\hbar}Gz^{2}} \label{phagauss}%
\end{equation}
where $G$ is the positive-definite symmetric matrix
\begin{equation}
G=%
\begin{pmatrix}
X+YX^{-1}Y & YX^{-1}\\
X^{-1}Y & X^{-1}%
\end{pmatrix}
~. \label{gsym}%
\end{equation}
In fact $G=S^{T}S$ where
\begin{equation}
S=%
\begin{pmatrix}
X^{1/2} & 0\\
X^{-1/2}Y & X^{-1/2}%
\end{pmatrix}
\in\operatorname*{Sp}(n) \label{bi}%
\end{equation}
hence $G$ is a positive definite symplectic matrix. Setting $\Sigma^{-1}%
=\frac{\hbar}{2}G$ we can rewrite (\ref{phagauss}) as
\[
W\phi_{X,Y}(z)=\frac{1}{(2\pi)^{n}\sqrt{\det\Sigma}}e^{-\frac{1}{2}\Sigma
^{-1}z^{2}}~.
\]
Hence, to $\rho_{X,Y}=W\phi_{X,Y}$ corresponds a Gaussian density operator
$\widehat{\rho}_{X,Y}$ (the quantum condition (\ref{quant}) becomes here
$S^{T}S+iJ\geq0$; since $(S^{T})^{-1}JS^{-1}=J$ this is equivalent to
$I+iJ\geq0$ which is trivially satisfied).

\begin{lemma}
\label{Lemmapure}A Gaussian state $\widehat{\rho}$ is pure if and only if
there exists $(X,Y)$ such that $\rho=W\phi_{X,Y}$.
\end{lemma}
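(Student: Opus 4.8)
The plan is to establish both implications, using the characterization of purity via the symplectic eigenvalues (Proposition \ref{propWill}) together with the explicit computation of the Wigner transform of $\phi_{X,Y}$ done just above. For the backward implication, suppose $\rho = W\phi_{X,Y}$. We have already observed that then $\rho = W\phi_{X,Y}(z) = (\pi\hbar)^{-n}e^{-\frac{1}{\hbar}Gz^2}$ with $G = S^T S$, $S\in\operatorname*{Sp}(n)$, and $M = \frac{\hbar}{2}\Sigma^{-1} = G$. Since $G$ is symplectic, its symplectic eigenvalues are all equal to $1$: indeed the eigenvalues of $JG = JS^TS$ coincide with those of $S^{-T}(JS^TS)S^{-1}\cdot$(conjugation argument) $= S^{-T}JS^{-1}\cdot S^{-T}S^TS\cdot S^{-1}$, but more directly $JG = JS^TS$ is conjugate to $SJS^TS S^{-1} = JSS^{-1}\cdot$(using $SJS^T=J$)$\ldots$ — the clean statement is that for $G$ a positive definite symplectic matrix, $\det M = 1$ and all symplectic eigenvalues are $1$, so by (\ref{purity}) the purity is $\mu(\widehat\rho)=\sqrt{\det M}=1$, i.e. $\widehat\rho$ is pure.

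For the forward implication, suppose the Gaussian $\widehat\rho$ with covariance matrix $\Sigma$ is pure. By (\ref{purity}) we have $\mu(\widehat\rho)=(\hbar/2)^n(\det\Sigma)^{-1/2}=1$, hence $\det\Sigma = (\hbar/2)^n$. Now apply Williamson's theorem (as in Proposition \ref{propWill}): there is $S_0\in\operatorname*{Sp}(n)$ with $\Sigma = S_0^T D S_0$, $D=\operatorname*{diag}(\Lambda,\Lambda)$, $\Lambda = \operatorname*{diag}(\lambda_{1,\sigma},\dots,\lambda_{n,\sigma})$. The quantum condition forces $\lambda_{j,\sigma}\ge\hbar/2$ for every $j$, while $\det\Sigma=(\hbar/2)^n$ gives $\prod_j\lambda_{j,\sigma}^2 = (\hbar/2)^n\cdot$(correcting for $\det S_0=1$: $\det D = \prod\lambda_{j,\sigma}^2 = (\hbar/2)^n$), forcing $\lambda_{j,\sigma}=\hbar/2$ for all $j$. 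Hence $D = \frac{\hbar}{2}I_{2n}$ and $\Sigma = \frac{\hbar}{2}S_0^T S_0$, i.e. $M = \frac{\hbar}{2}\Sigma^{-1} = (S_0^T S_0)^{-1} = S_0^{-1}(S_0^{-1})^T =: S^T S$ where $S = (S_0^{-1})^T = (S_0^T)^{-1}\in\operatorname*{Sp}(n)$. So $\widehat\rho$ has Wigner function $(\pi\hbar)^{-n}e^{-\frac{1}{\hbar}(S^TS)z^2}$, matching the form (\ref{phagauss}).

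It then remains to show that every positive definite symplectic matrix $G=S^TS$ arises as the $G$-matrix (\ref{gsym}) of some pair $(X,Y)$ with $X=X^T>0$, $Y=Y^T$; equivalently, that every $S\in\operatorname*{Sp}(n)$ with $S^TS>0$ can be brought to the lower-triangular form (\ref{bi}) \emph{up to left multiplication by an orthogonal symplectic matrix} (which does not change $G=S^TS$). This is the standard "pre-Iwasawa" (Iwasawa-type) factorization of the symplectic group: write $S = R\,S_1$ with $R\in\operatorname*{Sp}(n)\cap O(2n)$ and $S_1$ of the form (\ref{bi}); then $G = S^T S = S_1^T R^T R S_1 = S_1^T S_1$, and reading off the blocks of $S_1^T S_1$ against (\ref{gsym}) identifies $X = X^{1/2}X^{1/2}$ from the lower-right block being $X^{-1}$ and recovering $Y$ from the off-diagonal block $X^{-1}Y$. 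Setting $\phi = \phi_{X,Y}$ with these $X,Y$, equation (\ref{phagauss}) gives $W\phi_{X,Y} = \rho$, completing the proof. The main obstacle is this last point: one must invoke (or briefly justify) the pre-Iwasawa decomposition of $\operatorname*{Sp}(n)$ to guarantee that the abstract symplectic data $(S^TS)$ coming from Williamson's theorem can actually be realized by a pair $(X,Y)$, i.e. by an honest pure Gaussian $\phi_{X,Y}$ rather than merely by some metaplectic transform of the standard Gaussian; everything else is bookkeeping with (\ref{purity}) and Williamson's theorem.
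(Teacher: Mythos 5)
Your proof is correct and follows essentially the same route as the paper: the purity formula (\ref{purity}) together with Williamson diagonalization and the quantum condition forces all symplectic eigenvalues of $\Sigma$ to equal $\hbar/2$, hence $M=S^{T}S$ for some $S\in\operatorname{Sp}(n)$. Your one genuine addition is the final step: the paper stops at $\Sigma=\frac{\hbar}{2}(S^{T})^{-1}S^{-1}$ and leaves implicit that every positive definite symplectic matrix is realized as the matrix $G$ of (\ref{gsym}) for some admissible pair $(X,Y)$; your pre-Iwasawa factorization (or, equivalently, reading $X=G_{22}^{-1}$ and $Y=G_{22}^{-1}G_{21}$ directly off the blocks and checking symmetry from the symplectic relations) closes that gap cleanly. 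One minor slip that does not affect the conclusion: purity one gives $\det\Sigma=(\hbar/2)^{2n}$, not $(\hbar/2)^{n}$.
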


\begin{proof}
\ The sufficiency is clear, so all we have to do is to show that it is
necessary. The purity formula (\ref{purity}) for Gaussians shows that
$\mu(\widehat{\rho})=1$ if and only if $\det\Sigma=(\hbar/2)^{2n}$. Let
$\lambda_{1}^{\sigma},...,\lambda_{n}^{\sigma}$ be the symplectic eigenvalues
of $\Sigma$ (\textit{i.e.} the numbers $\lambda_{j}^{\sigma}>0$ such that the
$\pm i\lambda_{j}^{\sigma}$ are the eigenvalues of $JM$); in view of
Williamson's symplectic diagonalization theorem there exists $S\in
\operatorname*{Sp}(n)$ such that $\Sigma=(S^{T})^{-1}DS^{-1}$ where $D=%
\begin{pmatrix}
\Lambda & 0\\
0 & \Lambda
\end{pmatrix}
$ with $\Lambda=\operatorname*{diag}(\lambda_{1}^{\sigma},...,\lambda
_{n}^{\sigma})$. The quantum condition (\ref{quant}) is equivalent to
$\lambda_{j}^{\sigma}\geq\hbar/2$ for all $j$ hence
\[
\det\Sigma=(\lambda_{1}^{\sigma})^{2}\cdot\cdot\cdot(\lambda_{n}^{\sigma}%
)^{2}=1
\]
if and only all the $\lambda_{j}^{\sigma}$ are equal to $\hbar/2$, hence
$\Sigma=\frac{\hbar}{2}(S^{T})^{-1}S^{-1}$.
\end{proof}

\begin{remark}
The action of the metaplectic group $\operatorname*{Mp}(n)$ on the set of all
Gaussians $\phi_{X,Y}$ is transitive \cite{Birkbis,Wigner}. The Lemma above
can thus be rephrased by saying that every pure Gaussian state is obtained
from the standard Gaussian $\phi_{0}(x)=(\pi\hbar)^{-n/4}e^{-|x|^{2}/2\hbar}$
by some $\widehat{S}\in\operatorname*{Mp}(n)$.
\end{remark}

\subsection{Separability of Gaussian states}

Before we state and prove our main results, let us make the following simple observation:

\begin{lemma}
\label{LemmaX} If the covariance ellipsoid
\[
\Omega=\{z\in\mathbb{R}^{2n}:\tfrac{1}{2}\Sigma^{-1}z^{2}\leq1\}
\]
of a Gaussian state $\widehat{\rho}$ contains the ball $B^{2n}(\sqrt{\hbar})$,
then $\widehat{\rho}$ is separable for all partitions $(A,B)$.
\end{lemma}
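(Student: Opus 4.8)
The plan is to reduce this to the already-established sufficient condition of Proposition~\ref{propwwg} (Werner--Wolf for Gaussians), by exhibiting partial covariance matrices $\Sigma_A$, $\Sigma_B$ satisfying the quantum conditions (\ref{sigab}) with $\Sigma \geq \Sigma_A \oplus \Sigma_B$. The natural candidates are $\Sigma_A = \frac{\hbar}{2} I_A$ and $\Sigma_B = \frac{\hbar}{2} I_B$, i.e. the covariance matrices of the standard Gaussian (coherent) states on each factor. First I would observe that the hypothesis $B^{2n}(\sqrt{\hbar}) \subset \Omega$ is, by the definition (\ref{covellipse}) of $\Omega$ together with $M = \frac{\hbar}{2}\Sigma^{-1}$, precisely the statement that $Mz^2 \leq \hbar$ holds whenever $|z|^2 \leq \hbar$, i.e. $M \leq I_{2n}$, which in turn is equivalent to $\Sigma \geq \frac{\hbar}{2} I_{2n}$. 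Since $\frac{\hbar}{2} I_{2n} = \frac{\hbar}{2} I_A \oplus \frac{\hbar}{2} I_B = \Sigma_A \oplus \Sigma_B$, the Werner--Wolf inequality (\ref{sigsigab}) holds on the nose.

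Next I would check the quantum conditions (\ref{sigab}) for these choices. We need $\frac{\hbar}{2} I_A + \frac{i\hbar}{2} J_A \geq 0$, equivalently $I_A + i J_A \geq 0$; but this is the trivial positivity statement already invoked in the paper (e.g. in the discussion following (\ref{bi})), since the eigenvalues of the self-adjoint matrix $I + iJ$ are $1 \pm 1 \geq 0$. The same applies verbatim on the $B$ side. Thus both $\Sigma_A = \frac{\hbar}{2}I_A$ and $\Sigma_B = \frac{\hbar}{2}I_B$ are \emph{bona fide} quantum covariance matrices, so Proposition~\ref{propwwg} applies and $\widehat{\rho}$ is separable.

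Since the partition $(A,B)$ played no role in the argument beyond supplying the orthogonal direct-sum decomposition $I_{2n} = I_A \oplus I_B$, which is available for every splitting $n = n_A + n_B$, the conclusion holds for all partitions $(A,B)$. There is essentially no obstacle here: the only point requiring a (one-line) verification is the equivalence ``$B^{2n}(\sqrt{\hbar}) \subset \Omega \iff \Sigma \geq \frac{\hbar}{2} I_{2n}$'', which is immediate from the quadratic-form description of the ellipsoid. I would write the proof in three short sentences along exactly these lines, citing Proposition~\ref{propwwg} for the final step.
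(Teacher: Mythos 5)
Your proposal is correct and follows essentially the same route as the paper: the authors likewise note that $B^{2n}(\sqrt{\hbar})\subset\Omega$ is equivalent to $M\leq I$ and then invoke the Werner--Wolf condition with $\Sigma_{A}\oplus\Sigma_{B}=\frac{\hbar}{2}I_{2n\times 2n}$. Your version merely spells out the verification of the quantum conditions for $\frac{\hbar}{2}I_{A}$ and $\frac{\hbar}{2}I_{B}$, which the paper leaves implicit.
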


\begin{proof}
Setting $M=\frac{\hbar}{2}\Sigma^{-1}$, the inclusion $B^{2n}(\sqrt{\hbar
})\subset\Omega$ is equivalent to $M\leq I$. Hence, the Werner--Wolf condition
condition (\ref{ww2}) is satisfied with $\Sigma_{A}\oplus\Sigma_{B}%
=\frac{\hslash}{2}I_{2n\times2n}$.
\end{proof}

More generally, there always exists $S\in\operatorname*{Sp}(n)$ such that
$SB^{2n}(\sqrt{\hbar})\subset\Omega$ (see condition (\ref{quantcamelbis})),
but this does not ensure separability unless $S=S_{A}\oplus S_{B}$ with
$S_{A}\in\operatorname*{Sp}(n_{A})$ and $S_{B}\in\operatorname*{Sp}(n_{B})$.
In this case we will have $M\leq S_{A}\oplus S_{B}$ hence (\ref{ww2}) is satisfied.

Next, we are going to show that for Gaussian states the necessary condition
for separability in Proposition \ref{Prop2} is also sufficient.

\begin{proposition}
\label{Prop3}The Gaussian density operator $\widehat{\rho}$ is separable if
and only if there exist positive definite symplectic matrices $P_{A}%
\in\operatorname*{Sp}(n_{A})$ and $P_{B}\in\operatorname*{Sp}(n_{B})$ such
that
\begin{equation}
\Sigma\geq\frac{\hbar}{2}(P_{A}\oplus P_{B})~. \label{MC}%
\end{equation}

\end{proposition}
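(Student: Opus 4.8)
The plan is to prove both directions of the equivalence, with the forward (necessity) direction being essentially free from earlier results and the converse (sufficiency) direction being the substantive part.

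For necessity, suppose $\widehat{\rho}$ is separable. Then by Proposition~\ref{propww} (Werner and Wolf) there exist partial covariance matrices $\Sigma_A,\Sigma_B$ satisfying the quantum conditions (\ref{quantAB}) with $\Sigma\geq\Sigma_A\oplus\Sigma_B$. But this is exactly the Werner--Wolf condition (\ref{ww2}), and Proposition~\ref{Prop2} tells us (\ref{ww2}) is equivalent to the existence of positive definite symplectic $P_A=(S_A^TS_A)^{-1}\in\operatorname*{Sp}(n_A)$ and $P_B=(S_B^TS_B)^{-1}\in\operatorname*{Sp}(n_B)$ with $\Sigma\geq\frac{\hbar}{2}(P_A\oplus P_B)$, which is (\ref{MC}). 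So the forward direction is immediate.

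For sufficiency, assume (\ref{MC}) holds with $P_A,P_B$ positive definite symplectic. Set $\Sigma_A=\frac{\hbar}{2}P_A$ and $\Sigma_B=\frac{\hbar}{2}P_B$. The first step is to check these are legitimate quantum covariance matrices, i.e. that they satisfy (\ref{sigab}): since $P_A=(S_A^TS_A)^{-1}$ for some $S_A\in\operatorname*{Sp}(n_A)$, the condition $\Sigma_A+\frac{i\hbar}{2}J_A\geq 0$ reads $(S_A^TS_A)^{-1}+iJ_A\geq 0$; conjugating by $S_A^T$ on the left and $S_A$ on the right and using $S_A J_A S_A^T=J_A$ (equivalently $S_A^{-1}$ being symplectic), this becomes $I_A+iJ_A\geq 0$, which holds trivially since the eigenvalues of $iJ_A$ are $\pm 1$. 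The same argument applies to $\Sigma_B$. Now (\ref{MC}) says precisely $\Sigma\geq\Sigma_A\oplus\Sigma_B$, so the hypotheses of Proposition~\ref{propwwg} are met, and that proposition (via the Werner--Wolf construction) yields that the Gaussian state $\widehat{\rho}$ is separable.

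The one point requiring genuine care — and the only plausible obstacle — is verifying that the conjugation argument correctly reduces $(S_A^TS_A)^{-1}+iJ_A\geq 0$ to $I_A+iJ_A\geq 0$; concretely, left/right multiplication of a Hermitian matrix $H\geq 0$ by $S_A^T$ and $S_A$ preserves positive semidefiniteness, transforming $H$ into $S_A^T H S_A$, and applying this to $H=(S_A^TS_A)^{-1}+iJ_A$ gives $I_A + i\,S_A^T J_A S_A$; since $S_A\in\operatorname*{Sp}(n_A)$ means $S_A^T J_A S_A = J_A$, this is indeed $I_A+iJ_A$. Since this manipulation is reversible (multiply back by $(S_A^T)^{-1}$ and $S_A^{-1}$), the equivalence is exact. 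Everything else is bookkeeping with the block decompositions already set up in the excerpt, so no new machinery is needed beyond Propositions~\ref{propww}, \ref{Prop2}, and \ref{propwwg}.
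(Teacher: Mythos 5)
Your proof follows the paper's own route exactly: the paper's proof of Proposition \ref{Prop3} simply notes that, by Proposition \ref{Prop2}, condition (\ref{MC}) is equivalent to the Werner--Wolf condition (\ref{ww2}), which for Gaussians is both necessary (Proposition \ref{propww}) and sufficient (Proposition \ref{propwwg}). One small slip in the verification you single out as the delicate point: to reduce $(S_A^TS_A)^{-1}+iJ_A\geq0$ to $I_A+iJ_A\geq0$ you must conjugate as $H\mapsto S_A H S_A^T$, since $S_A(S_A^TS_A)^{-1}S_A^T=I_A$ and $S_AJ_AS_A^T=J_A$; the map $H\mapsto S_A^T H S_A$ that you wrote sends $(S_A^TS_A)^{-1}$ to $S_A^TS_A^{-1}(S_A^T)^{-1}S_A$, which is not the identity for non-normal $S_A$ (try $S_A=\begin{pmatrix}1&1\\0&1\end{pmatrix}$). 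This is immediately fixable and does not affect the argument, which is in any case already subsumed in the statement of Proposition \ref{Prop2}.
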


\begin{proof}
In view of Proposition \ref{Prop2}, the condition (\ref{MC}) is equivalent to
the Werner-Wolf condition (\ref{ww2}). Since for Gaussians the Werner-Wolf
condition is necessary and sufficient, this is also the case for the condition
(\ref{MC}).
\end{proof}

Suppose we have equality in (\ref{MC}). Then $\widehat{\rho}$ is a tensor
product $\widehat{S}_{A}^{-1}\phi_{0,A}\otimes\widehat{S}_{B}^{-1}\phi_{0,B}$
where
\begin{align*}
\phi_{0,A}(x_{A})  &  =(\pi\hbar)^{-n_{A}/4}e^{-|x_{A}|^{2}/2\hbar}\\
\phi_{0,B}(x_{B})  &  =(\pi\hbar)^{-n_{B}/4}e^{-|x_{B}|^{2}/2\hbar}%
\end{align*}
are the standard Gaussians on $\mathbb{R}^{n_{A}}$ and $\mathbb{R}^{n_{B}}$,
and $\widehat{S}_{A}\in\operatorname*{Mp}(n_{A})$ (\textit{resp}.
$\widehat{S}_{B}\in\operatorname*{Mp}(n_{B})$) is anyone of the two
metaplectic operators covering\ $S_{A}$ (\textit{resp.} $S_{B}$). In fact, the
Wigner distribution $\rho$ becomes in this case
\begin{align*}
\rho(z)  &  =(\pi\hbar)^{-n}e^{-\frac{1}{\hbar}(S_{A}^{T}S_{A}z_{A}\cdot
z_{A}+S_{B}^{T}S_{B}z_{B}\cdot z_{B})}\\
&  =W_{A}\phi_{0,A}(S_{A}z_{A})W_{B}\phi_{0,B}(S_{B}z_{B})
\end{align*}
where $W_{A}\phi_{0,A}$ is the Wigner transform of $\phi_{0,A}$ and $W_{B}%
\phi_{0,B}$ that of $\phi_{0,B}$. It follows from the symplectic covariance
property \cite{Wigner} of the Wigner transform that
\[
W_{A}\phi_{0,A}\circ S_{A}=W_{A}(\widehat{S}_{A}^{-1}\phi_{0,A})\text{ \ ,
\ }W_{B}\phi_{0,B}\circ S_{B}=W_{A}(\widehat{S}_{B}^{-1}\phi_{0,B})
\]
hence $\rho$ is the Wigner transform of $\widehat{S}_{A}^{-1}\phi_{0,A}%
\otimes\widehat{S}_{B}^{-1}\phi_{0,B}$. The converse of this property is
trivial. Notice that $\widehat{S}_{A}^{-1}\phi_{0,A}$ and $\widehat{S}%
_{B}^{-1}\phi_{0,B}$ are easily calculated \cite{Birk,Wigner}: they are
explicitly given by
\begin{align*}
\widehat{S}_{A}^{-1}\phi_{0,A}(x_{A})  &  =(\pi\hbar)^{-n_{A}/4}(\det
X_{A})^{1/4}e^{-\frac{1}{2\hbar}(X_{A}+iY_{A})x_{A}\cdot x_{A}}\\
\widehat{S}_{B}^{-1}\phi_{0,B}(x_{B})  &  =(\pi\hbar)^{-n_{B}/4}(\det
X_{B})^{1/4}e^{-\frac{1}{2\hbar}(X_{B}+iY_{B})x_{B}\cdot x_{B}}%
\end{align*}
where the real symmetric matrices $X_{A}>0$, $X_{B}>0$ and $Y_{A},Y_{B}$ are
obtained by solving the identities%
\begin{align*}
S_{A}^{T}S_{A}  &  =%
\begin{pmatrix}
X_{A}+Y_{A}X_{A}^{-1}Y_{A} & Y_{A}X_{A}^{-1}\\
X_{A}^{-1}Y_{A} & X_{A}^{-1}%
\end{pmatrix}
\\
S_{B}^{T}S_{B}  &  =%
\begin{pmatrix}
X_{B}+Y_{B}X_{B}^{-1}Y_{B} & Y_{B}X_{B}^{-1}\\
X_{B}^{-1}Y_{B} & X_{B}^{-1}%
\end{pmatrix}
~.
\end{align*}

More generally the Gaussian state $\widehat{\rho}$ is separable if and only if
its Wigner distribution dominates a tensor product of two Gaussian states, up
to a factor being the purity of $\widehat{\rho}$:

\begin{theorem}
\label{ThmAB}The Gaussian state $\widehat{\rho}$ is separable if and only if
there exist pairs $(X_{A},Y_{A})$ and $(X_{B},Y_{B})$ such that
\begin{equation}
\rho\geq\mu(\widehat{\rho})(W_{A}\phi_{X_{A},Y_{A}}\otimes W_{B}\phi
_{X_{B},Y_{B}}) \label{sepurity}%
\end{equation}
where
\[
\mu(\widehat{\rho})=\left(  \frac{\hbar}{2}\right)  ^{n}(\det\Sigma)^{-1/2}%
\]
is the purity (\ref{purity}) of $\widehat{\rho}$.
\end{theorem}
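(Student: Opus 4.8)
The natural route is to reduce Theorem \ref{ThmAB} to Proposition \ref{Prop3}, which already gives a necessary-and-sufficient separability criterion in terms of an inequality $\Sigma \geq \frac{\hbar}{2}(P_A \oplus P_B)$ with $P_A = (S_A^T S_A)^{-1}$, $P_B = (S_B^T S_B)^{-1}$ positive definite symplectic. The key observation is that a positive definite symplectic matrix $P$ on $\mathbb{R}^{2n_A}$ can always be written as $P = S_A^T S_A$ for a suitable $S_A \in \operatorname{Sp}(n_A)$ of the lower-triangular form (\ref{bi}), i.e. $P = G_{A}$ has exactly the block structure (\ref{gsym}) for some pair $(X_A, Y_A)$ with $X_A > 0$ and $Y_A$ symmetric; the explicit solution is $X_A^{-1} = P_{22}$ (the lower-right block) and $Y_A = X_A^{1/2}(X_A^{-1/2}(\text{lower-left block}))\cdot(\dots)$ — more cleanly, one simply reads off $(X_A,Y_A)$ from the identities displayed just before the theorem statement. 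So the set of matrices of the form $\frac{\hbar}{2}(P_A \oplus P_B)$ with $P_A, P_B$ positive definite symplectic is exactly the set of matrices $\frac{\hbar}{2}(S_A^TS_A \oplus S_B^TS_B)$ with $S_A, S_B$ of triangular form, which in turn (using $G = S^TS$ from (\ref{gsym})–(\ref{bi})) are exactly the inverse-covariance matrices of the pure product Gaussians $W_A\phi_{X_A,Y_A}\otimes W_B\phi_{X_B,Y_B}$.

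\textbf{Translating the inequality.} Write $\Sigma_0 := \frac{\hbar}{2}(P_A\oplus P_B)^{-1}$, so that the criterion (\ref{MC}) reads $\Sigma \geq \Sigma_0$. By (\ref{phagauss})–(\ref{bi}) and the substitution $\Sigma^{-1}=\frac{\hbar}{2}G$ made just after (\ref{bi}), the covariance matrix $\Sigma_0$ corresponds to the (pure) Gaussian Wigner function $W_A\phi_{X_A,Y_A}\otimes W_B\phi_{X_B,Y_B}$, whose value is $(\pi\hbar)^{-n}e^{-\frac{1}{\hbar}(G_A z_A^2 + G_B z_B^2)} = (\pi\hbar)^{-n} e^{-\frac{1}{\hbar}\,\Sigma_0^{-1}\frac{\hbar}{2}\cdot\frac{2}{\hbar}\,z^2}$; writing it out, $W_A\phi_{X_A,Y_A}\otimes W_B\phi_{X_B,Y_B}(z) = (\pi\hbar)^{-n}e^{-\tfrac{1}{2}\Sigma_0^{-1}z^2}$ after absorbing the $\hbar/2$ factor — and similarly $\rho(z) = (\pi\hbar)^{-n}(\det M)^{1/2}e^{-\tfrac12\Sigma^{-1}z^2}$ by (\ref{rhum}), with $(\det M)^{1/2} = \mu(\widehat{\rho})$ by (\ref{purity}). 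Now comes the elementary but slightly delicate point: for two positive definite symmetric matrices, $\Sigma \geq \Sigma_0$ is equivalent to $\Sigma^{-1}\leq\Sigma_0^{-1}$, hence $e^{-\frac12\Sigma^{-1}z^2}\geq e^{-\frac12\Sigma_0^{-1}z^2}$ for all $z$. Multiplying by $(\pi\hbar)^{-n}\mu(\widehat{\rho})$ and noting that the product Gaussian is pure (purity $1$, so its prefactor is exactly $(\pi\hbar)^{-n}$), we get precisely $\rho(z)\geq\mu(\widehat{\rho})\,(W_A\phi_{X_A,Y_A}\otimes W_B\phi_{X_B,Y_B})(z)$, which is (\ref{sepurity}).

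\textbf{Closing the equivalence.} For the converse direction, suppose (\ref{sepurity}) holds for some $(X_A,Y_A),(X_B,Y_B)$. Then both sides are Gaussians of the form $(\pi\hbar)^{-n}(\det M)^{1/2}e^{-\frac12\Sigma^{-1}z^2}$ and $\mu(\widehat\rho)(\pi\hbar)^{-n}e^{-\frac12\Sigma_0^{-1}z^2}$ with $\Sigma_0 = \frac{\hbar}{2}(G_A^{-1}\oplus G_B^{-1})$ and $\mu(\widehat\rho) = (\det M)^{1/2}$; comparing exponential decay rates, the pointwise inequality forces $\Sigma^{-1}\leq\Sigma_0^{-1}$, i.e. $\Sigma\geq\Sigma_0 = \frac{\hbar}{2}(P_A\oplus P_B)$ with $P_A = G_A = (S_A^T S_A)$ — wait, one must be careful about which of $G$ or $G^{-1}$ is symplectic: since $G = S^TS$ with $S\in\operatorname{Sp}$, both $G$ and $G^{-1}$ are positive definite symplectic, so either way (\ref{MC}) is met. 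By Proposition \ref{Prop3}, $\widehat{\rho}$ is separable. \textbf{The main obstacle} I anticipate is purely bookkeeping: keeping the factors of $\hbar/2$ straight between the $\Sigma$-normalization (\ref{Gauss}) and the $M$-normalization (\ref{rhum}), and checking that the normalization constant of the dominating product Gaussian is exactly $(\pi\hbar)^{-n}$ (which holds because each $\phi_{X,Y}$ is $L^2$-normalized and pure, so no extra determinant factor survives) — this is what makes the coefficient on the right-hand side come out to be exactly $\mu(\widehat{\rho})$ and nothing else.
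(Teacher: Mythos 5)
Your proof is correct and follows essentially the same route as the paper's: reduce to the matrix criterion $\Sigma\geq\frac{\hbar}{2}(P_{A}\oplus P_{B})$ of Proposition \ref{Prop3}, observe that inverting the ordering of positive definite matrices turns this into a pointwise domination of Gaussian exponentials, and check that the normalization constants produce exactly the purity factor $\mu(\widehat{\rho})=(\det M)^{1/2}$. The only (immaterial) difference is that you identify pure product Gaussians with positive definite symplectic matrices directly through the parametrization (\ref{gsym})--(\ref{bi}), whereas the paper routes the same identification through the symplectic covariance of the Wigner transform and the transitivity of $\operatorname{Mp}(n)$ on the Gaussians $\phi_{X,Y}$.
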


\begin{proof}
In view of the transitivity of the action of the metaplectic group on
Gaussians, this is equivalent to proving that there exist $\widehat{S_{A}}%
\in\operatorname*{Mp}(n_{A})$ and $\widehat{S_{B}}\in\operatorname*{Mp}%
(n_{B})$ such that
\begin{equation}
\rho\geq\mu(\widehat{\rho})\left(  W_{A}(\widehat{S_{A}}^{-1}\phi
_{0,A})\otimes W_{B}(\widehat{S_{B}}^{-1}\phi_{0,B})\right)  ~. \label{corab2}%
\end{equation}
In view of Proposition \ref{Prop2} $\widehat{\rho}$ is separable if and only
if condition (\ref{MC})
\[
\Sigma\geq\frac{\hbar}{2}\left[  (S_{A}^{T}S_{A})^{-1}\oplus(S_{B}^{T}%
S_{B})^{-1}\right]
\]
holds for some $S_{A}\in\operatorname*{Sp}(n_{A})$ and $S_{B}\in
\operatorname*{Sp}(n_{B})$. Suppose this is the case; by definition
(\ref{Gauss}) of $\rho$ we then have
\[
\rho(z)\geq\frac{1}{(2\pi)^{n}\sqrt{\det\Sigma}}e^{-\frac{1}{\hbar}S_{A}%
^{T}S_{A}z_{A}\cdot z_{A}}e^{-\frac{1}{\hbar}S_{B}^{T}S_{B}z_{B}\cdot z_{B}%
}~.
\]
We have \cite{Birk,Wigner}%
\begin{align*}
W_{A}\phi_{0,A}(S_{A}z_{A})  &  =(\pi\hbar)^{-n_{A}}e^{-\frac{1}{\hbar}|S_{A}
z_{A}|^{2}}\\
W_{B}\phi_{0,B}(S_{B}z_{B})  &  =(\pi\hbar)^{-n_{B}}e^{-\frac{1}{\hbar}|S_{B}
z_{B}|^{2}}%
\end{align*}
and hence%
\begin{equation}
\rho(z)\geq\left(  \frac{\hbar}{2}\right)  ^{n}(\det\Sigma)^{-1/2}W_{A}%
\phi_{0,A}(S_{A}z_{A})W_{B}\phi_{0,B}(S_{B}z_{B})~. \label{corab3}%
\end{equation}
Let now $\widehat{S_{A}}\in\operatorname*{Mp}(n_{A})$ (\textit{resp}.
$\widehat{S_{B}}\in\operatorname*{Mp}(n_{B})$) cover $S_{A}$ (\textit{resp}.
$S_{B}$); we have, using the symplectic covariance of the Wigner transform
\cite{Folland,Birkbis,Wigner}
\begin{align*}
W_{A}\phi_{0,A}(S_{A}z_{A})  &  =W_{A}(\widehat{S_{A}}^{-1} \phi)(z_{A})\\
W_{B}\phi_{0,B}(S_{A}z_{B})  &  =W_{B}(\widehat{S_{B}}^{-1}\phi)(z_{B})
\end{align*}
which shows that (\ref{corab2}) must hold if the state $\widehat{\rho}$ is
separable. Suppose conversely that this inequality holds. Then we must have
\[
e^{-\frac{1}{2}\Sigma^{-1}z\cdot z}\geq e^{-\frac{1}{\hbar}S_{A}^{T}S_{A}%
z_{A}\cdot z_{A}}e^{-\frac{1}{\hbar}S_{B}^{T}S_{B}z_{B}\cdot z_{B})}%
\]
which is equivalent to condition (\ref{MC}) in Proposition \ref{Prop3}.
\end{proof}

\begin{corollary}
If the Gaussian state $\widehat{\rho}$ is separable there exist Gaussians
$\phi_{X_{A},Y_{A}}$ and $\phi_{X_{B},Y_{B}}$ such that%
\begin{equation}
\rho_{A}\geq\mu(\widehat{\rho})W_{A}\phi_{X_{A},Y_{A}}\text{ \ , \ }\rho
_{B}\geq\mu(\widehat{\rho})W_{B}\phi_{X_{B},Y_{B}}~. \label{subGauss}%
\end{equation}

\end{corollary}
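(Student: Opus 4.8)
The plan is to obtain the corollary from Theorem~\ref{ThmAB} by ``tracing out'' one subsystem, i.e. by integrating the pointwise estimate (\ref{sepurity}) over the complementary phase-space variables. Since $\widehat{\rho}$ is separable, Theorem~\ref{ThmAB} supplies pairs $(X_A,Y_A)$ and $(X_B,Y_B)$ --- these will be exactly the Gaussians asserted in the corollary --- such that
\[
\rho(z_A,z_B)\geq\mu(\widehat{\rho})\,W_A\phi_{X_A,Y_A}(z_A)\,W_B\phi_{X_B,Y_B}(z_B)
\]
for all $(z_A,z_B)\in\mathbb{R}^{2n_A}\oplus\mathbb{R}^{2n_B}$. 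The key point is that the right-hand side is pointwise nonnegative, since $W\phi_{X,Y}\geq0$ by the explicit Gaussian formula (\ref{phagauss}); hence the inequality survives integration against Lebesgue measure.

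Concretely, I would integrate the displayed inequality in $z_B$. On the left, integration in $z_B$ recovers $\rho_A(z_A)$ by the definition (\ref{roadef}) of the reduced Wigner distribution, the integral being convergent because $\rho\in\Gamma^m(\mathbb{R}^{2n})$ for every $m<-2n$ and, by Proposition~\ref{mA}, $\rho_A\in\Gamma^{m_A}(\mathbb{R}^{2n_A})$. On the right, Fubini's theorem gives
\[
\int W_A\phi_{X_A,Y_A}(z_A)\,W_B\phi_{X_B,Y_B}(z_B)\,dz_B=W_A\phi_{X_A,Y_A}(z_A)\int W_B\phi_{X_B,Y_B}(z_B)\,dz_B,
\]
and the remaining integral equals $\|\phi_{X_B,Y_B}\|_{L^2(\mathbb{R}^{n_B})}^2=1$, by the normalization (``trace'') identity $\int W\psi\,dz=\|\psi\|_{L^2}^2$ for the Wigner transform together with the fact, recorded after (\ref{fxy1}), that $\phi_{X,Y}$ is $L^2$-normalized. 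This yields $\rho_A\geq\mu(\widehat{\rho})\,W_A\phi_{X_A,Y_A}$. Integrating (\ref{sepurity}) in $z_A$ instead, and using $\int W_A\phi_{X_A,Y_A}(z_A)\,dz_A=1$ together with the definition of $\rho_B$, gives the companion inequality $\rho_B\geq\mu(\widehat{\rho})\,W_B\phi_{X_B,Y_B}$.

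I do not expect a genuine obstacle here: all the substance is already in Theorem~\ref{ThmAB}, and this corollary is simply its image under marginalization, the computation above being a couple of lines. The only two points worth an explicit word are (i)~the nonnegativity of the Gaussian Wigner transforms, which is precisely what makes integrating a pointwise bound legitimate and lets Fubini act on an honestly nonnegative integrand, and (ii)~the identity $\int_{\mathbb{R}^{2m}}W\psi\,dz=\|\psi\|_{L^2(\mathbb{R}^m)}^2$, which folds the integrated product back into a single Wigner transform times the purity $\mu(\widehat{\rho})$. Both are standard and already invoked in the paper, so the proof is really just bookkeeping on top of Theorem~\ref{ThmAB}; note also that it is the \emph{pointwise} form of (\ref{sepurity}) that is used, which is why integration, rather than the partial trace of an operator, is the natural tool.
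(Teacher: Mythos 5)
Your proof is correct and follows exactly the route the paper takes: the paper's own proof consists of the single remark that the corollary ``immediately follows from the inequality (\ref{sepurity}) integrating $\rho$ with respect to $z_{B}$ and $z_{A}$,'' which is precisely your marginalization argument, merely spelled out. The details you add (pointwise nonnegativity of $W\phi_{X,Y}$, the normalization $\int W\psi\,dz=\Vert\psi\Vert_{L^{2}}^{2}=1$, and convergence via the Shubin class) are all accurate and consistent with the paper's conventions.
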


\begin{proof}
It immediately follows from the inequality (\ref{sepurity}) integrating $\rho$
with respect to $z_{B}$ and $z_{A}$.
\end{proof}

Let us describe in detail the reduced states of a Gaussian state:

\begin{proposition}
\label{Thm1}The reduced density operator $\widehat{\rho}_{A}$ is a Gaussian
state with Wigner distribution
\begin{equation}
\rho_{A}(z_{A})=(\pi\hbar)^{-n_{A}}(\det M/M_{BB})^{1/2}e^{-\frac{1}{\hbar
}(M/M_{BB})z_{A}^{2}}; \label{rhoaza}%
\end{equation}
and its covariance ellipsoid
\begin{equation}
\Omega_{A}=\{z_{A}:(M/M_{BB})z_{A}^{2}\leq\hbar\} \label{covreda}%
\end{equation}
is the orthogonal projection $\Pi_{A}\Omega$ on $\mathbb{R}^{2n_{A}}$ of the
covariance ellipsoid $\Omega$ of $\widehat{\rho}$.
\end{proposition}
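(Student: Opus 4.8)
The plan is to reduce everything to the Gaussian integration formula and the Schur-complement identities already established in \S\ref{subsec22} and Lemma \ref{lemmapim}. First I would recall from Proposition \ref{mA} that $\widehat{\rho}_A$ is a density operator and that $\rho_A\in\Gamma^{m_A}(\mathbb{R}^{2n_A})$, so there is no issue of convergence or of $\widehat{\rho}_A$ failing to be \textit{bona fide}; the only work is to compute $\rho_A$ explicitly. Starting from the Gaussian form (\ref{rhum}), $\rho(z)=(\pi\hbar)^{-n}(\det M)^{1/2}e^{-\frac{1}{\hbar}Mz^2}$, I would write $Mz^2$ in block form using the $AB$-ordering,
\[
Mz^2 = M_{AA}z_A^2 + 2M_{BA}z_A\cdot z_B + M_{BB}z_B^2,
\]
and then complete the square in $z_B$: with $z_B^\star=-M_{BB}^{-1}M_{BA}z_A$ (the same critical point that appeared in the proof of Lemma \ref{lemmapim}) one has
\[
Mz^2 = (M/M_{BB})z_A^2 + M_{BB}(z_B-z_B^\star)^2,
\]
where $M/M_{BB}=M_{AA}-M_{AB}M_{BB}^{-1}M_{BA}$ is the Schur complement (\ref{SchurA}).

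Next I would integrate over $z_B\in\mathbb{R}^{2n_B}$. Since $\rho_A(z_A)=\int_{\mathbb{R}^{2n_B}}\rho(z_A,z_B)\,dz_B$ by (\ref{roadef}), and the only $z_B$-dependence is in the Gaussian $e^{-\frac{1}{\hbar}M_{BB}(z_B-z_B^\star)^2}$, the standard multivariate Gaussian integral gives
\[
\int_{\mathbb{R}^{2n_B}} e^{-\frac{1}{\hbar}M_{BB}(z_B-z_B^\star)^2}\,dz_B = (\pi\hbar)^{n_B}(\det M_{BB})^{-1/2}.
\]
Combining, $\rho_A(z_A)=(\pi\hbar)^{-n_A}(\det M)^{1/2}(\det M_{BB})^{-1/2}e^{-\frac{1}{\hbar}(M/M_{BB})z_A^2}$, and the determinant factor collapses to $(\det M/M_{BB})^{1/2}$ by the Schur determinant identity $\det M=\det(M/M_{BB})\det M_{BB}$ (the analogue of (\ref{schurdet}) applied to $M$). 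This yields exactly (\ref{rhoaza}). I should also note that $M/M_{BB}$ is positive definite and symmetric — this follows because $M>0$ forces the Schur complement to be positive definite, so the right-hand side of (\ref{rhoaza}) is a legitimate Gaussian Wigner distribution, consistent with Proposition \ref{mA}.

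For the second assertion, the covariance ellipsoid of the Gaussian $\rho_A$ in (\ref{rhoaza}) is by definition $\Omega_A=\{z_A:(M/M_{BB})z_A^2\le\hbar\}$, which is precisely (\ref{covreda}); and Lemma \ref{lemmapim} (or the Proposition following it, formula (\ref{pambb})) identifies this set with the orthogonal projection $\Pi_A\Omega$. So this part is immediate once (\ref{rhoaza}) is in hand. I do not expect a serious obstacle anywhere: the whole argument is a one-variable-block Gaussian marginalization. The only point needing a little care is matching normalization constants and confirming that the reduced covariance matrix $\Sigma_{AA}$ of \S\ref{subsec22} corresponds to $\frac{\hbar}{2}(M/M_{BB})^{-1}$ — but this was already recorded in the proof of the Proposition at the end of \S\ref{secshadow}, so consistency is automatic. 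By symmetry in $A\leftrightarrow B$ the analogous statement for $\widehat{\rho}_B$ holds as well, which I would mention in a closing remark rather than reprove.
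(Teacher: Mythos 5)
Your proposal is correct and follows essentially the same route as the paper's own proof: block decomposition of $Mz^{2}$, completion of the square in $z_{B}$ (your shift $z_{B}^{\star}=-M_{BB}^{-1}M_{BA}z_{A}$ is exactly the substitution $z_{B}=u_{B}-M_{BB}^{-1}M_{BA}z_{A}$ used there), the standard Gaussian integral $(\pi\hbar)^{n_{B}}(\det M_{BB})^{-1/2}$, the Schur determinant identity to collapse the normalization, and Lemma \ref{lemmapim} for the projection statement. No gaps; your added remarks on positivity of $M/M_{BB}$ and consistency with $\Sigma_{AA}$ are sound but not needed beyond what the paper records.
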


\begin{proof}
The result is in a sense rather obvious since the calculation of $\rho_{A}$
involves the integration of the Gaussian $\rho$ with respect to a partial set
of variables, and thus yields a Gaussian. That this Gaussian is given by
(\ref{rhoaza}) then follows from the projection formula (\ref{pambb}). Let us
however give a direct analytical proof. Writing $z=z_{A}\oplus z_{B}$\ we
have
\[
Mz^{2}=M_{AA}z_{A}^{2}+2M_{BA}z_{A}\cdot z_{B}+M_{BB}z_{B}^{2}%
\]
so that%
\[
\int_{\mathbb{R}^{2n_{B}}}e^{-\frac{1}{\hbar}Mz^{2}}dz_{B}=e^{-\frac{1}{\hbar
}M_{AA}z_{A}^{2}}\int_{\mathbb{R}^{2n_{B}}}e^{-\frac{1}{\hbar}(M_{BB}z_{B}%
^{2}+2M_{BA}z_{A}\cdot z_{B})}dz_{B}~.
\]
Setting $z_{B}=u_{B}-M_{BB}^{-1}M_{BA}z_{A}$ we have
\[
M_{BB}z_{B}^{2}+2M_{BA}z_{A}\cdot z_{B}=M_{BB}u_{B}^{2}-M_{AB}M_{BB}%
^{-1}M_{BA}z_{A}^{2}%
\]
and hence, integrating with respect to the variables $z_{B}$,
\[
\int_{\mathbb{R}^{2n_{B}}}e^{-\frac{1}{\hbar}Mz^{2}}dz_{B}=e^{-\frac{1}{\hbar
}(M_{AA}-M_{AB}M_{BB}^{-1}M_{BA})z_{A}^{2}}\int_{\mathbb{R}^{2n_{B}}}%
e^{-\frac{1}{\hbar}M_{BB}u_{B}^{2}}du_{B}~.
\]
Using the classical formula (Folland \cite{Folland}, App. A)
\[
\int_{\mathbb{R}^{2n_{B}}}e^{-\frac{1}{\hbar}M_{BB}u_{B}^{2}}du_{B}=(\pi
\hbar)^{n_{B}}(\det M_{BB})^{-1/2}%
\]
we thus have
\[
\int_{\mathbb{R}^{2n_{B}}}e^{-\frac{1}{\hbar}Mz^{2}}dz_{B}=(\pi\hbar)^{n_{B}%
}(\det M_{BB})^{-1/2}e^{-\frac{1}{\hbar}(M/M_{BB})z_{A}^{2}}%
\]
where $M/M_{BB}$ is the Schur complement (\ref{Schur}) of $M_{BB}$ of $M$; the
identity (\ref{rhoaza}) now follows from formula (\ref{schurdet}). The
covariance ellipsoid of the reduced state $\widehat{\rho}_{A}$ is given by
(\ref{covreda}), and in view of Lemma \ref{lemmapim} it is indeed the
orthogonal projection $\Pi_{A}\Omega$ of $\Omega$ on $\mathbb{R}^{2n_{A}}$.
\end{proof}

\begin{corollary}
The purity of the reduced density operator $\widehat{\rho}_{A}$ is
\begin{equation}
\mu(\widehat{\rho}_{A})=(\det M/M_{BB})^{1/2} \label{mura}%
\end{equation}
and $\widehat{\rho}_{A}$ is a pure state if and only if $M/M_{BB}%
\in\operatorname*{Sp}(n_{A})$, in which case case we have $\mu(\widehat{\rho
})=\det M_{BB}$.
\end{corollary}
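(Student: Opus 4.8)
The plan is to read off the purity of $\widehat{\rho}_A$ from the explicit Gaussian form of $\rho_A$ established in Proposition \ref{Thm1}, and then to promote the purity-one equality to a symplecticity statement by invoking the quantum condition for the \emph{reduced} state together with Williamson's theorem.

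First, Proposition \ref{Thm1} gives $\rho_A(z_A)=(\pi\hbar)^{-n_A}(\det(M/M_{BB}))^{1/2}e^{-\frac{1}{\hbar}(M/M_{BB})z_A^{2}}$, i.e. exactly the form (\ref{rhum}) with $n$ replaced by $n_A$ and $M$ replaced by the Schur complement $M/M_{BB}$, which is symmetric and positive definite (being a Schur complement of the symmetric positive definite matrix $M$). Hence the Gaussian purity formula (\ref{purity}), which in the $M$-notation reads $\mu=\sqrt{\det M}$, applied to $\widehat{\rho}_A$ yields $\mu(\widehat{\rho}_A)=(\det(M/M_{BB}))^{1/2}$, which is (\ref{mura}).

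Next, $\widehat{\rho}_A$ is pure if and only if $\mu(\widehat{\rho}_A)=1$, i.e. $\det(M/M_{BB})=1$. The implication $M/M_{BB}\in\operatorname*{Sp}(n_A)\Rightarrow\det(M/M_{BB})=1$ is immediate. For the converse, recall that $\widehat{\rho}_A$ is a genuine density operator (Proposition \ref{mA}), so its covariance matrix satisfies the reduced quantum condition $\Sigma_{AA}+\frac{i\hbar}{2}J_A\geq0$ of (\ref{quantred}); by the reformulation (\ref{m1}) in dimension $n_A$ this says that the symplectic eigenvalues $\mu_1,\dots,\mu_{n_A}$ of $M/M_{BB}$ all satisfy $\mu_j\leq1$. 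Writing the Williamson form $M/M_{BB}=S_A^{T}\operatorname*{diag}(\Lambda_A,\Lambda_A)S_A$ with $S_A\in\operatorname*{Sp}(n_A)$ and $\Lambda_A=\operatorname*{diag}(\mu_1,\dots,\mu_{n_A})$ (Proposition \ref{propWill}) gives $\det(M/M_{BB})=\prod_{j=1}^{n_A}\mu_j^{2}$; since $0<\mu_j\leq1$, the equality $\det(M/M_{BB})=1$ forces $\mu_j=1$ for every $j$, hence $M/M_{BB}=S_A^{T}S_A\in\operatorname*{Sp}(n_A)$. Equivalently one may apply Lemma \ref{Lemmapure} to $\widehat{\rho}_A$: it is pure iff $\rho_A=W_A\phi_{X_A,Y_A}$ for some $(X_A,Y_A)$, and comparing the quadratic forms in the exponents via (\ref{phagauss})--(\ref{gsym}) this holds iff $M/M_{BB}$ is one of the matrices $G$ of (\ref{gsym}), which are precisely the positive definite symplectic matrices.

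Finally, when $M/M_{BB}\in\operatorname*{Sp}(n_A)$ we have $\det(M/M_{BB})=1$, so the Schur determinant identity (\ref{schurdet}) --- valid for the symmetric block matrix $M$ as it is for $\Sigma$, by the algebraic factorization (\ref{schurfact}) --- gives $\det M=\det(M/M_{BB})\det M_{BB}=\det M_{BB}$; combined with $\mu(\widehat{\rho})=\sqrt{\det M}$ from (\ref{purity}) this yields the claimed expression for $\mu(\widehat{\rho})$ in terms of $\det M_{BB}$. The only step that is not pure bookkeeping is the forward direction of the pure-state characterization: there the bound $\mu_j\leq1$ coming from the quantum condition on the reduced state is exactly what upgrades the scalar equality $\det(M/M_{BB})=1$ to the matrix statement $M/M_{BB}\in\operatorname*{Sp}(n_A)$; everything else is manipulation of Schur complements and the Gaussian purity formula.
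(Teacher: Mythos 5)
Your proof is correct and follows essentially the same route as the paper's: the Gaussian purity formula applied to the reduced state's matrix $M/M_{BB}$, followed by the Williamson-diagonalization argument (which is precisely what the paper compresses into ``by the same token as used in Lemma \ref{Lemmapure}'') to upgrade the scalar equality $\det(M/M_{BB})=1$ to $M/M_{BB}\in\operatorname*{Sp}(n_{A})$, and the Schur determinant identity (\ref{schurdet}) for the last claim. One small point worth flagging: what the computation actually yields in the pure case is $\mu(\widehat{\rho})=\sqrt{\det M}=(\det M_{BB})^{1/2}$, so the exponent in the corollary's final formula $\mu(\widehat{\rho})=\det M_{BB}$ appears to be a typo in the paper, and your phrase ``yields the claimed expression'' quietly inherits it rather than correcting it.
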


\begin{proof}
The purity of $\widehat{\rho}_{A}$ is $\mu(\widehat{\rho}_{A})=\sqrt{\det
M/M_{BB}}$; hence $\mu(\widehat{\rho}_{A})=1$ if and only if $\det M/M_{BB}%
=1$; by the same token as used in Lemma \ref{Lemmapure} we must then have
$M/M_{BB}\in\operatorname*{Sp}(n_{A})$. The equality $\mu(\widehat{\rho})=\det
M_{BB}$ follows from the identity (\ref{schurdet}).
\end{proof}

\section{Sufficient Conditions for Separability of Gaussian states}

\label{secsuff1}

In this section, we will derive a number of sufficient, albeit not necessary,
conditions for the separability of Gaussian states.

We will write as usual
\begin{equation}
M=\frac{\hbar}{2}\Sigma^{-1}=%
\begin{pmatrix}
M_{AA} & M_{AB}\\
M_{BA} & M_{BB}%
\end{pmatrix}
~,
\end{equation}
and it is presupposed that $M=M^{T}>0$, and hence $M_{AA}>0$, $M_{BB}>0$ and
$M_{BA}=M_{AB}^{T}$. It follows from Proposition \ref{propWill} that:%
\begin{gather}
\Sigma+\frac{i\hbar}{2}J_{AB}\geq0\text{ }\Longleftrightarrow\text{\emph{ The
symplectic }}\label{eigenM}\\
\text{\emph{eigenvalues} }\lambda_{\sigma,j} (M) \text{ \emph{of} }M\text{
\emph{are all} }\leq1~.\nonumber
\end{gather}

We shall also assume, without loss of generality, that $n_{B} \geq n_{A}$.
Let
\begin{equation}
\mu_{1}^{AB} \geq\mu_{2}^{AB} \geq\cdots\geq\mu_{2n_{A}}^{AB} \geq0
\label{eqExtra1}%
\end{equation}
be the singular values of $M_{AB}$, that is the positive square roots of the
eigenvalues of the $2n_{A} \times2n_{A}$ matrix $M_{AB}M_{AB}^{T}=M_{AB}%
M_{BA}$. Notice that, apart from the multiplicities of zero, the matrices
$M_{AB}M_{BA}$ and $M_{BA}M_{AB}$ have the same eigenvalues, and so $M_{AB}$
and $M_{BA}$ have the same singular values.

We shall write, as customary, $|M_{AB}|=\left(  M_{AB}M_{BA}\right)  ^{1/2}$
and $|M_{BA}|=\left(  M_{BA}M_{AB}\right)  ^{1/2}$. In particular, we have:
\begin{equation}
\Vert M_{AB}\Vert_{op}=\sup_{z_{B}\neq0}\frac{|M_{AB}z_{B}|}{|z_{B}|}=\mu
_{1}^{AB}=\sup_{z_{A}\neq0}\frac{|M_{BA}z_{A}|}{|z_{A}|}=\Vert M_{BA}%
\Vert_{op}~. \label{eqExtra2}%
\end{equation}

By the singular value decomposition, there exist unitary matrices $U
\in\mathbb{C}^{2n_{A} \times2n_{A}}$ and $V \in\mathbb{C}^{2n_{B} \times
2n_{B}}$, such that
\begin{equation}
M_{AB}=UD_{AB}V^{\ast}~, \label{eqExtra3}%
\end{equation}
where $D_{AB} \in\mathbb{C}^{2n_{A} \times2n_{B}}$ is the diagonal matrix of
singular values, that is $\left(  D_{AB}\right)  _{jj}=\mu_{j}^{AB}$, for
$j=1, \cdots, 2n_{A}$, and $\left(  D_{AB}\right)  _{jk}=0$, for all $j=1,
\cdots, 2n_{A}$ and $k=1, \cdots, 2n_{B}$, such that $j\neq k$.

Given a set of positive numbers $\epsilon=\left(  \epsilon_{1},\cdots
,\epsilon_{2n_{A}}\right)  \in\mathbb{R}_{+}^{2n_{A}}$, we define the
$2n_{A}\times2n_{A}$ matrix $|M_{AB}^{\epsilon}|$ and the $2n_{B}\times2n_{B}$
matrix $|M_{BA}^{\frac{1}{\epsilon}}|$ by:
\begin{equation}%
\begin{array}
[c]{l}%
U^{\ast}|M_{AB}^{\epsilon}|U=\operatorname*{diag}\left(  \epsilon_{1}\mu
_{1}^{AB},\cdots,\epsilon_{2n_{A}}\mu_{2n_{A}}^{AB}\right)  ~,\\
\\
V^{\ast}|M_{BA}^{\frac{1}{\epsilon}}|V=\operatorname*{diag}\left(  \frac
{\mu_{1}^{AB}}{\epsilon_{1}},\cdots,\frac{\mu_{2n_{A}}^{AB}}{\epsilon_{2n_{A}%
}},0,\cdots,0\right)  ~.
\end{array}
\label{eqExtra4}%
\end{equation}
In particular, if we write $1=(1,\cdots,1)$ for $\epsilon_{j}=1$, for all
$j=1,\cdots,2n_{A}$, then we have:
\begin{equation}
|M_{AB}^{1}|=|M_{AB}|\text{ and }|M_{BA}^{1}|=|M_{BA}|~. \label{eqExtra5}%
\end{equation}

We will now derive a hierarchy of sufficient conditions for separability,
which culminate in Theorem \ref{TheoremSep3}. The advantage of developing this
hierarchy, instead of going directly to Theorem \ref{TheoremSep3}, is that in
this manner we increase the computational complexity gradually.

\subsection{The first separability criterion\label{secsepsuff}}

Let us state the first criterion for separability of Gaussian states.

\begin{theorem}
\label{TheoremSep1} Let $\widetilde{M}_{AA}=M_{AA}+\Vert M_{AB}\Vert
_{op}I_{n_{A}}$ and $\widetilde{M}_{BB}=M_{BB}+\Vert M_{BA}\Vert_{op}I_{n_{B}%
}$. If
\begin{equation}
\lambda_{\sigma_{A},j}\left(  \widetilde{M}_{AA}\right)  \leq1\text{ and
}\lambda_{\sigma_{B},k}\left(  \widetilde{M}_{BB}\right)  \leq1~,
\label{eqSep1}%
\end{equation}
for all $j=1,\cdots,n_{A}$ and all $k=1,\cdots,n_{B}$, then the Gaussian state
$\widehat{\rho}$ with covariance ellipsoid
\begin{equation}
\Omega=\left\{  z\in\mathbb{R}^{2n}:Mz^{2}\leq\hbar\right\}  \label{eqSep2}%
\end{equation}
is separable.
\end{theorem}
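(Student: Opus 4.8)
The strategy is to produce an explicit block-diagonal matrix $\Sigma_A\oplus\Sigma_B$ that sits below $\Sigma$ in the Loewner order and whose two blocks are quantum covariance matrices, so that Proposition \ref{propwwg} applies. Equivalently, working with $M=\tfrac{\hbar}{2}\Sigma^{-1}$, I would look for a block-diagonal $N_A\oplus N_B$ with $M\leq N_A\oplus N_B$ (this is the condition $\Sigma\geq\tfrac{\hbar}{2}(N_A\oplus N_B)^{-1}$ after inversion, using that $X\leq Y$ implies $Y^{-1}\leq X^{-1}$ for positive definite matrices) and with the symplectic eigenvalues of $N_A$ and of $N_B$ all $\leq 1$; by the criterion \eqref{eigenM} applied on each factor this guarantees that $\tfrac{\hbar}{2}N_A^{-1}$ and $\tfrac{\hbar}{2}N_B^{-1}$ are bona fide reduced quantum covariance matrices, and then \eqref{ww2} holds.

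The natural candidate is $N_A=\widetilde M_{AA}=M_{AA}+\|M_{AB}\|_{op}I_{n_A}$ and $N_B=\widetilde M_{BB}=M_{BB}+\|M_{BA}\|_{op}I_{n_B}$ — exactly the matrices in the statement, whose symplectic eigenvalues are assumed $\leq 1$. So the entire content of the proof reduces to the single matrix inequality
\begin{equation}
\begin{pmatrix} M_{AA} & M_{AB}\\ M_{BA} & M_{BB}\end{pmatrix}
\leq
\begin{pmatrix} M_{AA}+\|M_{AB}\|_{op}I_{n_A} & 0\\ 0 & M_{BB}+\|M_{BA}\|_{op}I_{n_B}\end{pmatrix}~,
\label{eqKeyPlan}
\end{equation}
i.e. that
\begin{equation}
\begin{pmatrix} \|M_{AB}\|_{op}I_{n_A} & -M_{AB}\\ -M_{BA} & \|M_{BA}\|_{op}I_{n_B}\end{pmatrix}\geq 0~.
\label{eqOffDiagPlan}
\end{equation}
Using $\|M_{AB}\|_{op}=\|M_{BA}\|_{op}=\mu_1^{AB}=:s$, I would prove \eqref{eqOffDiagPlan} by the standard Schur-complement/completion-of-squares argument: for any $z=(z_A,z_B)$,
\[
s|z_A|^2 - 2\,M_{BA}z_A\cdot z_B + s|z_B|^2 \;\geq\; s|z_A|^2 - 2\|M_{AB}\|_{op}|z_A||z_B| + s|z_B|^2 \;=\; s\bigl(|z_A|-|z_B|\bigr)^2\;\geq\;0,
\]
where the first inequality is Cauchy--Schwarz together with the definition \eqref{eqExtra2} of the operator norm. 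This establishes \eqref{eqOffDiagPlan}, hence \eqref{eqKeyPlan}.

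From \eqref{eqKeyPlan}, inverting the positive definite matrices gives $\Sigma=\tfrac{\hbar}{2}M^{-1}\geq\tfrac{\hbar}{2}(\widetilde M_{AA}\oplus\widetilde M_{BB})^{-1}=:\Sigma_A\oplus\Sigma_B$. By hypothesis \eqref{eqSep1} the symplectic eigenvalues of $\widetilde M_{AA}$ and $\widetilde M_{BB}$ are $\leq 1$, so by \eqref{eigenM} applied in $\mathbb{R}^{2n_A}$ and $\mathbb{R}^{2n_B}$ separately we get $\Sigma_A+\tfrac{i\hbar}{2}J_A\geq 0$ and $\Sigma_B+\tfrac{i\hbar}{2}J_B\geq 0$. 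Thus the hypotheses of Proposition \ref{propwwg} are met and $\widehat{\rho}$ is separable. I expect the only genuinely delicate point to be the bookkeeping that the symplectic-eigenvalue condition on $\widetilde M_{AA}$ really is equivalent to $\tfrac{\hbar}{2}\widetilde M_{AA}^{-1}+\tfrac{i\hbar}{2}J_A\geq 0$ — this is just Proposition \ref{propWill} transplanted to the $A$-subsystem, noting that symplectic eigenvalues of a positive matrix and of its inverse are reciprocals — everything else is the two-line estimate above.
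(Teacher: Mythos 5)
Your proposal is correct and follows essentially the same route as the paper: both establish the key inequality $M\leq\widetilde{M}_{AA}\oplus\widetilde{M}_{BB}$ via Cauchy--Schwarz and the arithmetic--geometric mean bound on the cross term $z_A\cdot M_{AB}z_B$ (your completed square $s(|z_A|-|z_B|)^2$ is the same estimate written differently), and then invoke the Werner--Wolf sufficiency for Gaussians after translating the symplectic-eigenvalue hypothesis into the quantum conditions on $\tfrac{\hbar}{2}\widetilde{M}_{AA}^{-1}$ and $\tfrac{\hbar}{2}\widetilde{M}_{BB}^{-1}$. Your extra remarks on inverse monotonicity and the reciprocity of symplectic eigenvalues merely make explicit steps the paper leaves implicit.
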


\begin{proof}
We have, by the Cauchy-Schwarz and the geometric-arithmetic mean
inequalities,
\begin{equation}%
\begin{array}
[c]{c}%
z_{A} \cdot M_{AB} z_{B} \leq|z_{A} \cdot M_{AB} z_{B} | \leq|z_{A}|
\cdot|M_{AB}z_{B}|\\
\\
\leq\|M_{AB}\|_{op} |z_{A}| ~|z_{B}| \leq\frac{\|M_{AB}\|_{op}}{2} \left(
|z_{A}|^{2}+|z_{B}|^{2} \right)  ~.
\end{array}
\label{eqSep3}%
\end{equation}
It follows that
\begin{equation}%
\begin{array}
[c]{c}%
Mz^{2}=M_{AA}z_{A}^{2}+ 2 z_{A} \cdot M_{AB} z_{B} + M_{BB}z_{B}^{2}\\
\\
\leq M_{AA}z_{A}^{2}+ \|M_{AB}\|_{op}|z_{A}|^{2}+ \|M_{AB}\|_{op}|z_{B}|^{2} +
M_{BB}z_{B}^{2} =\\
\\
=\left(  M_{AA} + \| M_{AB}\|_{op} I_{n_{A}}\right)  z_{A}^{2}+ \left(  M_{BB}
+ \| M_{BA}\|_{op} I_{n_{B}}\right)  z_{B}^{2}~,
\end{array}
\label{eqSep4}%
\end{equation}
and thus:
\begin{equation}
M \leq\widetilde{M}_{AA} \oplus\widetilde{M}_{BB}~. \label{eqSep5}%
\end{equation}
If conditions (\ref{eqSep1}) hold, then $\widetilde{M}_{AA}^{-1}+ i J_{A}
\geq0$ and $\widetilde{M}_{BB}^{-1}+ i J_{B} \geq0$. By the Werner-Wolf
condition, the state $\widehat{\rho}$ is separable.
\end{proof}

\subsection{Geometric interpretation}

Here is a straightforward geometric interpretation of Theorem
\ref{TheoremSep1}. It says that if the ellipsoid
\[
\widetilde{\Omega}=\{z\in\mathbb{R}^{2n}:\widetilde{M}_{AA}z_{A}%
^{2}+\widetilde{M}_{BB}z_{B}^{2}\leq\hbar\}
\]
is \textquotedblleft large enough\textquotedblright\ to contain a
\textquotedblleft quantum blob\textquotedblright\ of the type $\Omega
_{AB}=(S_{A}\oplus S_{B})B^{2n}(\sqrt{\hbar})$, then the Gaussian state
$\widehat{\rho}$ with covariance ellipsoid $\Omega$ will be separable. Hence,
we have the inclusions:
\[
\Omega_{AB} \subset\widetilde{\Omega}\subset\Omega
\]
and it follows from the projection results discussed in the sections
\ref{secshadow} and \ref{section2.3} that the following inclusions also hold
\begin{equation}
S_{A}(B^{2n_{A}}(\sqrt{\hbar}))\subset\widetilde{\Omega}_{A} \subset\Omega_{A}
\text{ \ and \ } S_{B}(B^{2n_{B}}(\sqrt{\hbar})) \subset\widetilde{\Omega}_{B}
\subset\Omega_{B}~. \label{projaa}%
\end{equation}
where $\Omega_{A}$ and $\Omega_{B}$ are the covariance ellipsoids of the
reduced density operators $\widehat{\rho}_{A}$ and $\widehat{\rho}_{B}$ (cf.
(\ref{pamaa},\ref{pambb})) and
\begin{equation}
\widetilde{\Omega}_{A}=\{z\in\mathbb{R}^{2n}:\widetilde{M}_{AA}z_{A}^{2}%
\leq\hbar\} \label{projabdef}%
\end{equation}
and likewise for $\widetilde{\Omega}_{B}$.

\subsection{The second separability criterion}

We will now derive a second criterion and then use it to show that the
previous criterion is not necessary for separability of a Gaussian state.

\begin{theorem}
\label{TheoremSep2} Define $M_{AA}^{\sharp}=M_{AA}+ |M_{AB}|$ and
$M_{BB}^{\sharp}=M_{BB}+ |M_{BA}|$. If their symplectic eigenvalues satisfy
\begin{equation}
\lambda_{\sigma_{A},j} \left(  M_{AA}^{\sharp} \right)  \leq1 \text{ and }
\lambda_{\sigma_{B},k} \left(  M_{BB}^{\sharp} \right)  ~, \label{eqSep6}%
\end{equation}
for all $j=1, \cdots, n_{A}$ and all $k=1, \cdots, n_{B}$, then the Gaussian
state $\widehat{\rho}$ with covariance ellipsoid (\ref{eqSep2}) is separable.
\end{theorem}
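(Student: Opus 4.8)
The plan is to run the same argument as in the proof of Theorem~\ref{TheoremSep1}, only replacing the crude Cauchy--Schwarz/arithmetic--geometric bound (\ref{eqSep3}) by the sharper operator estimate
\[
2\,z_{A}\cdot M_{AB}z_{B}\ \leq\ |M_{AB}|\,z_{A}^{2}+|M_{BA}|\,z_{B}^{2}\qquad\text{for all }z_{A}\in\mathbb{R}^{2n_{A}},\ z_{B}\in\mathbb{R}^{2n_{B}}.
\]
Granting this, expanding $Mz^{2}=M_{AA}z_{A}^{2}+2\,z_{A}\cdot M_{AB}z_{B}+M_{BB}z_{B}^{2}$ and inserting the estimate gives, exactly as in (\ref{eqSep4})--(\ref{eqSep5}),
\[
M\ \leq\ M_{AA}^{\sharp}\oplus M_{BB}^{\sharp},
\]
and the proof is then closed by the Werner--Wolf sufficient condition for Gaussians (Proposition~\ref{propwwg}).

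First I would prove the displayed inequality. Since $M_{AB}$ is a real $2n_{A}\times2n_{B}$ matrix, the singular value decomposition (\ref{eqExtra3}) may be taken with \emph{real orthogonal} $U\in\mathbb{R}^{2n_{A}\times2n_{A}}$, $V\in\mathbb{R}^{2n_{B}\times2n_{B}}$, so $M_{AB}=UD_{AB}V^{T}$ with $(D_{AB})_{jj}=\mu_{j}^{AB}\geq0$. Putting $a=U^{T}z_{A}$, $b=V^{T}z_{B}$ and applying $2\mu\,a_{j}b_{j}\leq\mu\,(a_{j}^{2}+b_{j}^{2})$ term by term,
\[
2\,z_{A}\cdot M_{AB}z_{B}=2\sum_{j=1}^{2n_{A}}\mu_{j}^{AB}a_{j}b_{j}\ \leq\ \sum_{j=1}^{2n_{A}}\mu_{j}^{AB}a_{j}^{2}+\sum_{j=1}^{2n_{A}}\mu_{j}^{AB}b_{j}^{2}.
\]
The first sum is $z_{A}^{T}\,U\operatorname{diag}(\mu_{1}^{AB},\dots,\mu_{2n_{A}}^{AB})\,U^{T}z_{A}=|M_{AB}|z_{A}^{2}$, because $|M_{AB}|=(M_{AB}M_{BA})^{1/2}=U\operatorname{diag}(\mu_{1}^{AB},\dots,\mu_{2n_{A}}^{AB})U^{T}$; the second sum is $z_{B}^{T}\,V\operatorname{diag}(\mu_{1}^{AB},\dots,\mu_{2n_{A}}^{AB},0,\dots,0)\,V^{T}z_{B}=|M_{BA}|z_{B}^{2}$, since $|M_{BA}|=(M_{BA}M_{AB})^{1/2}=V\operatorname{diag}(\mu_{1}^{AB},\dots,\mu_{2n_{A}}^{AB},0,\dots,0)V^{T}$, the trailing zeros encoding the fact (noted before (\ref{eqExtra2})) that $M_{BA}M_{AB}$ has the eigenvalues of $M_{AB}M_{BA}$ together with $2n_{B}-2n_{A}$ extra zeros. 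This bookkeeping of singular values is the one step that needs a little care; once the decompositions of $|M_{AB}|$ and $|M_{BA}|$ are written down, the rest is mechanical, so I do not expect a genuine obstacle here.

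Finally I would assemble the conclusion. Both $M_{AA}^{\sharp}=M_{AA}+|M_{AB}|$ and $M_{BB}^{\sharp}=M_{BB}+|M_{BA}|$ are symmetric and positive definite (a positive definite symmetric matrix plus a positive semidefinite symmetric matrix). By Proposition~\ref{propWill} (equivalently (\ref{eigenM})) applied to $(M_{AA}^{\sharp})^{-1}$ and $(M_{BB}^{\sharp})^{-1}$, the hypothesis (\ref{eqSep6}) is exactly $(M_{AA}^{\sharp})^{-1}+iJ_{A}\geq0$ and $(M_{BB}^{\sharp})^{-1}+iJ_{B}\geq0$. Setting $\Sigma_{A}=\frac{\hbar}{2}(M_{AA}^{\sharp})^{-1}$ and $\Sigma_{B}=\frac{\hbar}{2}(M_{BB}^{\sharp})^{-1}$, these satisfy the quantum conditions (\ref{sigab}); and from $M\leq M_{AA}^{\sharp}\oplus M_{BB}^{\sharp}$ together with the anti-monotonicity of matrix inversion on positive definite matrices,
\[
\Sigma=\tfrac{\hbar}{2}M^{-1}\ \geq\ \tfrac{\hbar}{2}\bigl(M_{AA}^{\sharp}\oplus M_{BB}^{\sharp}\bigr)^{-1}=\Sigma_{A}\oplus\Sigma_{B},
\]
which is the Werner--Wolf inequality (\ref{sigsigab}). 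Hence Proposition~\ref{propwwg} yields that the Gaussian state $\widehat{\rho}$ is separable. I would close with the remark that this indeed refines Theorem~\ref{TheoremSep1}: the largest eigenvalue of $|M_{AB}|$ is $\mu_{1}^{AB}=\Vert M_{AB}\Vert_{op}$, so $|M_{AB}|\leq\Vert M_{AB}\Vert_{op}I$ and $|M_{BA}|\leq\Vert M_{BA}\Vert_{op}I$, giving $M_{AA}^{\sharp}\leq\widetilde{M}_{AA}$ and $M_{BB}^{\sharp}\leq\widetilde{M}_{BB}$, so that (\ref{eqSep1}) implies (\ref{eqSep6}) but not conversely.
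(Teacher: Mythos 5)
Your proof is correct and follows essentially the same route as the paper's: the singular value decomposition of $M_{AB}$, a term-by-term arithmetic--geometric mean estimate yielding $2\,z_{A}\cdot M_{AB}z_{B}\leq |M_{AB}|z_{A}^{2}+|M_{BA}|z_{B}^{2}$, hence $M\leq M_{AA}^{\sharp}\oplus M_{BB}^{\sharp}$, and then the Werner--Wolf sufficient condition. Your version is slightly more explicit than the paper's (real orthogonal $U,V$, the spectral decompositions of $|M_{AB}|$ and $|M_{BA}|$, and the inversion step $\Sigma\geq\Sigma_{A}\oplus\Sigma_{B}$), but the underlying argument is the same.
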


\begin{proof}
With the previous notation, let $u_{A}=U^{\ast} z_{A}$ and $v_{B}=V^{\ast}
z_{B}$. Then:
\begin{equation}%
\begin{array}
[c]{c}%
z_{A} \cdot M_{AB} z_{B} \leq|z_{A} \cdot M_{AB} z_{B}| =|u_{A} \cdot D_{AB}
v_{B}|= \left|  \sum_{j=1}^{2n_{A}} \mu_{j}^{AB} u_{A,j} v_{B,j} \right| \\
\\
\leq\sum_{j=1}^{2n_{A}} \mu_{j}^{AB}| u_{A,j}|~| v_{B,j}| \leq\sum
_{j=1}^{2n_{A}} \mu_{j}^{AB}\left(  \frac{|u_{A,j}|^{2}}{2} + \frac{
|v_{B,j}|^{2}}{2} \right)  =\\
\\
=\frac{1}{2} \sum_{j=1}^{2n_{A}} \overline{u_{A,j}} \mu_{j}^{AB} u_{A,j} +
\frac{1}{2} \sum_{j=1}^{2n_{A}} \overline{v_{B,j}} \mu_{j}^{AB} v_{B,j}=
\frac{1}{2} |M_{AB}| z_{A}^{2} + \frac{1}{2}|M_{BA}| z_{B}^{2} ~,
\end{array}
\label{eqSep8}%
\end{equation}
where we used (\ref{eqExtra4}) and (\ref{eqExtra5}).

Consequently:
\begin{equation}%
\begin{array}
[c]{c}%
Mz^{2}= M_{AA}z_{A}^{2} + 2 z_{A} \cdot M_{AB} z_{B} + M_{BB} z_{B}^{2}\\
\\
\leq M_{AA}z_{A}^{2} + |M_{AB}| z_{A}^{2} + |M_{BA}| z_{B}^{2} + M_{BB}
z_{B}^{2} = \left(  M_{AA}^{\sharp} \oplus M_{BB}^{\sharp} \right)  z^{2} ~ ,
\end{array}
\label{eqSep9}%
\end{equation}
and the rest follows as before.
\end{proof}

\subsection{An example of non-necessity\label{ExampleImp1}}

Let us now show that the separability criterion stated in Theorem
\ref{TheoremSep1} is sufficient but not necessary. We consider the particular
case $n_{A}=n_{B}=1$.

Let $M$ be the $4\times4$ matrix given by:
\begin{equation}
M=\left(
\begin{array}
[c]{cccc}%
\frac{1}{2} & 0 & \frac{2}{3} & 0\\
0 & \frac{1}{2} & 0 & \frac{1}{4}\\
\frac{2}{3} & 0 & \frac{1}{3} & 0\\
0 & \frac{1}{4} & 0 & \frac{1}{4}%
\end{array}
\right)  ~. \label{eqImp18}%
\end{equation}
With the previous notation, we have;
\begin{equation}%
\begin{array}
[c]{l}%
M_{AA}=\left(
\begin{array}
[c]{cc}%
\frac{1}{2} & 0\\
0 & \frac{1}{2}%
\end{array}
\right)  \text{ \ , \ }M_{BB}=\left(
\begin{array}
[c]{cc}%
\frac{1}{3} & 0\\
0 & \frac{1}{4}%
\end{array}
\right) \\
\\
M_{AB}=M_{BA}=\left\vert M_{AB}\right\vert =\left\vert M_{BA}\right\vert
=\left(
\begin{array}
[c]{cc}%
\frac{2}{3} & 0\\
0 & \frac{1}{4}%
\end{array}
\right)
\end{array}
\label{eqImp19}%
\end{equation}
Since $\mu_{1}^{AB}=\Vert M_{AB}\Vert_{op}=\Vert M_{BA}\Vert_{op}=\frac{2}{3}%
$, we have:
\begin{equation}
\widetilde{M}_{AA}=\left(
\begin{array}
[c]{cc}%
\frac{7}{6} & 0\\
0 & \frac{7}{6}%
\end{array}
\right)  ,\hspace{0.5cm}\widetilde{M}_{BB}=\left(
\begin{array}
[c]{cc}%
1 & 0\\
0 & \frac{11}{12}%
\end{array}
\right)  . \label{eqImp20}%
\end{equation}
It follows that $\lambda_{\sigma_{A}}(\widetilde{M}_{AA})=\frac{7}{6}>1$,
while $\lambda_{\sigma_{B}}(\widetilde{M}_{BB})=\sqrt{\frac{11}{12}}<1$. We
conclude that $M$ does not satisfy the criterion of Theorem \ref{TheoremSep1}.
Nevertheless, $M$ is associated with a separable state. This can be shown
using the criterion of Theorem \ref{TheoremSep2}. Indeed, we have:
\begin{equation}
M_{AA}^{\sharp}=\left(
\begin{array}
[c]{cc}%
\frac{7}{6} & 0\\
0 & \frac{3}{4}%
\end{array}
\right)  ,\hspace{0.5cm}M_{BB}^{\sharp}=\left(
\begin{array}
[c]{cc}%
1 & 0\\
0 & \frac{1}{2}%
\end{array}
\right)  \label{eqImp21}%
\end{equation}
and hence%
\[
\lambda_{\sigma_{A}}(M_{AA}^{\sharp})=\sqrt{\frac{7}{8}}<1\text{ \ and
}\lambda_{\sigma_{B}}(M_{BB}^{\sharp})=\frac{1}{\sqrt{2}}<1~.
\]
According to Theorem \ref{TheoremSep2} the associated Gaussian state is separable.

\subsection{The third separability criterion}

In the previous criteria, we always used the geometric-arithmetic mean
inequality $|ab|\leq(|a|^{2}+|b|^{2})/2$ to prove our results. This inequality
places an upper bound on the product $|ab|$ with $|a|^{2}$ and $|b|^{2}$ on
equal footing. However, it is perfectly conceivable that in some directions
$M_{AA}$ is "too large" for us to have $M_{AA}+\left\vert M_{AB}\right\vert $
dominated by a positive symplectic matrix $P_{A}$ and that this may be
compensated by the fact that $M_{BB}$ is "smaller". In this case, it may be
more suitable to use the scaled geometric-arithmetic mean inequality:
\begin{equation}
|ab|\leq\frac{|a|^{2}}{2\varepsilon}+\frac{\varepsilon|b|^{2}}{2}%
~,\label{eqImp22}%
\end{equation}
which holds for any $\varepsilon>0$. We will derive, using this inequality,
another sufficient criterion for separability, which will permit us to prove
that the criterion stated in Theorem \ref{TheoremSep2} is again sufficient but
not necessary for separability. With the same notation as previously, we have:

\begin{theorem}
\label{TheoremSep3} Let $\widetilde{M}_{AA}^{\epsilon}$ be a $2 n_{A}
\times2n_{A}$ matrix and $\widetilde{M}_{BB}^{\frac{1}{\epsilon}}$ a $2 n_{B}
\times2n_{B}$ defined by:
\begin{equation}
\widetilde{M}_{AA}^{\epsilon}= M_{AA}+ \left\vert M_{AB}^{\epsilon}
\right\vert \text{ and }\widetilde{M}_{BB}^{\frac{1}{\epsilon}}= M_{BB}+
\left\vert M_{BA}^{\frac{1}{\epsilon}} \right\vert ~. \label{eqSep3.1}%
\end{equation}
If their symplectic eigenvalues satisfy
\begin{equation}
\lambda_{\sigma_{A},j} \left(  \widetilde{M}_{AA}^{\epsilon} \right)  \leq1
\text{ and } \lambda_{\sigma_{B},k} \left(  \widetilde{M}_{BB}^{\frac
{1}{\epsilon}} \right)  ~, \label{eqSep3.2}%
\end{equation}
for all $j=1, \cdots, n_{A}$ and all $k=1, \cdots, n_{B}$, then the Gaussian
state $\widehat{\rho}$ with covariance ellipsoid (\ref{eqSep2}) is separable.
\end{theorem}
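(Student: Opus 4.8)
The plan is to mimic the proofs of Theorems \ref{TheoremSep1} and \ref{TheoremSep2}, but now splitting the cross-term $2z_A\cdot M_{AB}z_B$ asymmetrically via the scaled mean inequality (\ref{eqImp22}). First I would introduce the rotated coordinates $u_A=U^{\ast}z_A$ and $v_B=V^{\ast}z_B$ as in the proof of Theorem \ref{TheoremSep2}, so that $z_A\cdot M_{AB}z_B = u_A\cdot D_{AB}v_B = \sum_{j=1}^{2n_A}\mu_j^{AB}u_{A,j}v_{B,j}$. Then, instead of applying $|ab|\leq (|a|^2+|b|^2)/2$ to each term, I would apply the $\epsilon$-scaled inequality (\ref{eqImp22}) \emph{componentwise}, using the scaling parameter $\epsilon_j$ for the $j$-th term:
\[
\mu_j^{AB}|u_{A,j}|\,|v_{B,j}| \leq \frac{\mu_j^{AB}}{2\epsilon_j}|u_{A,j}|^2 + \frac{\epsilon_j\mu_j^{AB}}{2}|v_{B,j}|^2~.
\]
Summing over $j$ and recognizing the two resulting quadratic forms as $\tfrac12 |M_{AB}^{\epsilon}|z_A^2$ and $\tfrac12 |M_{BA}^{1/\epsilon}|z_B^2$ (this is exactly the content of the definitions (\ref{eqExtra4})), I obtain $2z_A\cdot M_{AB}z_B \leq |M_{AB}^{\epsilon}|z_A^2 + |M_{BA}^{1/\epsilon}|z_B^2$.

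Next I would plug this bound into the expansion $Mz^2 = M_{AA}z_A^2 + 2z_A\cdot M_{AB}z_B + M_{BB}z_B^2$, giving
\[
Mz^2 \leq \bigl(M_{AA}+|M_{AB}^{\epsilon}|\bigr)z_A^2 + \bigl(M_{BB}+|M_{BA}^{1/\epsilon}|\bigr)z_B^2 = \bigl(\widetilde{M}_{AA}^{\epsilon}\oplus\widetilde{M}_{BB}^{1/\epsilon}\bigr)z^2~,
\]
hence the matrix inequality $M \leq \widetilde{M}_{AA}^{\epsilon}\oplus\widetilde{M}_{BB}^{1/\epsilon}$. Converting to covariance matrices (multiply through by $\tfrac{\hbar}{2}$ after inverting, or rather note $\Sigma = \tfrac{\hbar}{2}M^{-1} \geq \tfrac{\hbar}{2}(\widetilde{M}_{AA}^{\epsilon}\oplus\widetilde{M}_{BB}^{1/\epsilon})^{-1}$), the hypotheses (\ref{eqSep3.2}) that the symplectic eigenvalues of $\widetilde{M}_{AA}^{\epsilon}$ and $\widetilde{M}_{BB}^{1/\epsilon}$ are all $\leq 1$ translate, via Proposition \ref{propWill}, into $(\widetilde{M}_{AA}^{\epsilon})^{-1}+iJ_A\geq 0$ and $(\widetilde{M}_{BB}^{1/\epsilon})^{-1}+iJ_B\geq 0$, i.e.\ $\Sigma_A := \tfrac{\hbar}{2}(\widetilde{M}_{AA}^{\epsilon})^{-1}$ and $\Sigma_B := \tfrac{\hbar}{2}(\widetilde{M}_{BB}^{1/\epsilon})^{-1}$ satisfy the quantum conditions (\ref{sigab}). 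Since $\Sigma \geq \Sigma_A\oplus\Sigma_B$, the Werner–Wolf sufficient condition for Gaussians (Proposition \ref{propwwg}) yields separability, completing the proof.

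The one point that needs genuine care — the potential obstacle — is verifying that the two quadratic forms produced by the componentwise splitting are exactly $|M_{AB}^{\epsilon}|$ and $|M_{BA}^{1/\epsilon}|$ as defined in (\ref{eqExtra4}), rather than something merely similar. This requires checking that the unitary conjugations line up: $U$ diagonalizes $|M_{AB}^{\epsilon}|$ with eigenvalues $\epsilon_j\mu_j^{AB}$, and $V$ diagonalizes $|M_{BA}^{1/\epsilon}|$ with eigenvalues $\mu_j^{AB}/\epsilon_j$ (and zeros). Since $u_A = U^{\ast}z_A$, one has $\sum_j \tfrac{\mu_j^{AB}}{\epsilon_j}\cdot\tfrac{1}{?}$... concretely, $\sum_{j}\overline{u_{A,j}}\,(\epsilon_j\mu_j^{AB})\,u_{A,j} = z_A^{\ast}U\operatorname{diag}(\epsilon_j\mu_j^{AB})U^{\ast}z_A = |M_{AB}^{\epsilon}|z_A^2$, and similarly for the $B$-block with $V$; one also notes the sum over $j=1,\dots,2n_A$ versus the $2n_B$-dimensional form $|M_{BA}^{1/\epsilon}|$ is reconciled precisely by the trailing zeros in (\ref{eqExtra4}). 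Once this bookkeeping is confirmed, everything else is routine and identical in structure to the earlier proofs — indeed, setting $\epsilon = (1,\dots,1)$ recovers Theorem \ref{TheoremSep2} exactly. $\blacksquare$
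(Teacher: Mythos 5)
Your proposal is correct and follows essentially the same route as the paper: rotate by the singular-vector unitaries, apply the scaled geometric-arithmetic mean inequality (\ref{eqImp22}) componentwise, and feed the resulting block-diagonal majorant of $M$ into the Werner--Wolf criterion via Proposition \ref{propwwg}. The only slip is notational: your displayed componentwise bound puts $1/\epsilon_j$ on the $|u_{A,j}|^2$ term, so under the conventions of (\ref{eqExtra4}) the quadratic forms it produces are $\tfrac{1}{2}|M_{AB}^{1/\epsilon}|z_A^2+\tfrac{1}{2}|M_{BA}^{\epsilon}|z_B^2$ rather than the ones you name (your closing verification paragraph uses the opposite convention), but since $\epsilon$ ranges over all positive vectors this is harmless after renaming $\epsilon\mapsto 1/\epsilon$.
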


\begin{proof}
We proceed as in the previous proofs and apply this time the inequality
(\ref{eqImp22}) for the set of positive numbers $\epsilon=\left(  \epsilon
_{1}, \cdots, \epsilon_{2n_{A}}\right)  \in\mathbb{R}_{+}^{2 n_{A}}$.
\begin{equation}%
\begin{array}
[c]{c}%
z_{A} \cdot M_{AB} z_{B} \leq\sum_{j=1}^{2 n_{A}} \mu_{j}^{AB} \left\vert
u_{A,j} \right\vert ~\left\vert v_{B,j} \right\vert \\
\\
\leq\sum_{j=1}^{2 n_{A}} \mu_{j}^{AB} \left(  \frac{\epsilon_{j} \left\vert
u_{A,j} \right\vert ^{2}}{2} +\frac{\left\vert v_{B,j} \right\vert ^{2}%
}{2\epsilon_{j}}\right)  = \left\vert M_{AB}^{\epsilon} \right\vert z_{A}^{2}+
\left\vert M_{BA}^{\frac{1}{\epsilon}} \right\vert z_{B}^{2}~.
\end{array}
\label{eqSep3.4}%
\end{equation}
It follows that:
\begin{equation}%
\begin{array}
[c]{c}%
Mz^{2}= M_{AA}z_{A}^{2}+ 2 z_{A} \cdot M_{AB} z_{B} + M_{BB}z_{B}^{2}\\
\\
\leq\left(  M_{AA} + \left\vert M_{AB}^{\epsilon}\right\vert \right)
z_{A}^{2} + \left(  M_{BB}+ \left\vert M_{BA}^{\frac{1}{\epsilon}} \right\vert
\right)  z_{B}^{2}~,
\end{array}
\label{eqSep3.5}%
\end{equation}
which means that:
\begin{equation}
M \leq\widetilde{M}_{AA}^{\epsilon} \oplus\widetilde{M}_{BB}^{\frac
{1}{\epsilon}}~. \label{eqSep3.6}%
\end{equation}
The rest follows as previously.
\end{proof}

\subsection{Another example of non-necessity}

\label{Section4.6}

We will now show, with a particular example when $n_{A}=n_{B}=1$, that the
criterion stated in Theorem \ref{TheoremSep2} is not necessary for separability.

Let $M$ be given by:
\begin{equation}
M=\left(
\begin{array}
[c]{cccc}%
\frac{2}{3} & 0 & \frac{1}{2} & 0\\
0 & \frac{2}{3} & 0 & \frac{1}{2}\\
\frac{1}{2} & 0 & \frac{1}{8} & 0\\
0 & \frac{1}{2} & 0 & \frac{1}{8}%
\end{array}
\right)  \label{eqImp29}%
\end{equation}
With the previous notation, we have;
\begin{equation}%
\begin{array}
[c]{l}%
M_{AA}=\left(
\begin{array}
[c]{cc}%
\frac{2}{3} & 0\\
0 & \frac{2}{3}%
\end{array}
\right)  \text{ \ },\text{ \ }M_{BB}=\left(
\begin{array}
[c]{cc}%
\frac{1}{8} & 0\\
0 & \frac{1}{8}%
\end{array}
\right) \\
\\
M_{AB}=M_{BA}=\left\vert M_{AB}\right\vert =\left\vert M_{BA}\right\vert
=\left(
\begin{array}
[c]{cc}%
\frac{1}{2} & 0\\
0 & \frac{1}{2}%
\end{array}
\right)
\end{array}
~. \label{eqImp30}%
\end{equation}
We thus have:
\begin{equation}
M_{AA}^{\sharp}=\left(
\begin{array}
[c]{cc}%
\frac{7}{6} & 0\\
0 & \frac{7}{6}%
\end{array}
\right)  ,\hspace{0.5cm}M_{BB}^{\sharp}=\left(
\begin{array}
[c]{cc}%
\frac{5}{8} & 0\\
0 & \frac{5}{8}%
\end{array}
\right)  ~, \label{eqImp31}%
\end{equation}
which entails that $\lambda_{\sigma_{A}}\left(  M_{AA}^{\sharp}\right)
=\frac{7}{6}>1$, while $\lambda_{\sigma_{B}}\left(  M_{BB}^{\sharp}\right)
=\frac{5}{8}<1$. We conclude that the condition in the criterion of Theorem
\ref{TheoremSep2} is not respected. However, the Gaussian state associated
with $M$ is a separable state. Indeed, we have for $\varepsilon=\left(
\frac{13}{21},\frac{13}{21}\right)  $ and $\frac{1}{\varepsilon}=\left(
\frac{21}{13},\frac{21}{13}\right)  $:
\begin{equation}
\widetilde{M}_{AA}^{\varepsilon}=\left(
\begin{array}
[c]{cc}%
\frac{41}{42} & 0\\
0 & \frac{41}{42}%
\end{array}
\right)  ~,\hspace{1cm}\widetilde{M}_{BB}^{\frac{1}{\varepsilon}}=\left(
\begin{array}
[c]{cc}%
\frac{95}{104} & 0\\
0 & \frac{95}{104}%
\end{array}
\right)  ~. \label{eqImp32}%
\end{equation}
We thus have:
\begin{equation}
\lambda_{\sigma_{A}}\left(  \widetilde{M}_{AA}^{\varepsilon}\right)
=\frac{41}{42}<1~,\hspace{0.5cm}\lambda_{\sigma_{B}}\left(  \widetilde{M}%
_{BB}^{\frac{1}{\varepsilon}}\right)  =\frac{95}{104}<1~. \label{eqImp33}%
\end{equation}
From Theorem \ref{TheoremSep3}, we conclude that the associated Gaussian state
is separable.

\subsection{The fourth separability criterion: a particular case}

In this section, we derive another sufficient criterion, which applies only to
the particular case where all the blocks, $M_{AA}$, $M_{AB}$ and $M_{BB}$ are
either diagonal or can be brought to a diagonal form by a symplectic
transformation $S_{A} \oplus S_{B}$. We illustrate this example with several
pictures which highlight the geometric nature of the problem.

We will thus assume that there exist $S_{A}\in\operatorname*{Sp}(n_{A})$ and
$S_{B}\in\operatorname*{Sp}(n_{B})$, such that, for $S=S_{A}\oplus S_{B}$:
\begin{equation}
SMS^{T}=M_{D}=\left(
\begin{array}
[c]{cc}%
A & D\\
D^{T} & B
\end{array}
\right)  ~,\label{eqSep4.1}%
\end{equation}
where we have the following diagonal blocks:
\begin{equation}
A=\operatorname*{diag}\left(  \Lambda_{A},\Lambda_{A}\right)
,~B=\operatorname*{diag}\left(  \Lambda_{B},\Lambda_{B}\right)
~,\label{eqSep4.2}%
\end{equation}
with
\begin{equation}%
\begin{array}
[c]{l}%
\Lambda_{A}=\operatorname*{diag}\left(  \lambda_{\sigma_{A},1}(M_{AA}%
),...,\lambda_{\sigma_{A},n_{A}}(M_{AA})\right)  \\
\\
\Lambda_{B}=\operatorname*{diag}\left(  \lambda_{\sigma_{B},1}(M_{BB}%
),...,\lambda_{\sigma_{B},n_{B}}(M_{BB})\right)  ~,
\end{array}
\label{eqSep4.3}%
\end{equation}
and $D$ is a $2n_{A}\times2n_{B}$ matrix of the form:
\begin{equation}
D=\left(
\begin{array}
[c]{cc}%
E & 0_{n_{A}\times n_{B}}\\
0_{n_{A}\times n_{B}} & F
\end{array}
\right)  \label{eqMatrixD1}%
\end{equation}
where $E$ and $F$ are diagonal $n_{A}\times n_{B}$ matrices with entries:
\begin{equation}
E_{j,k}=d_{j}\delta_{j,k},\quad F_{j,k}=d_{j+n_{A}}\delta_{j,k},\quad
j=1,\cdots,n_{A}~,~k=1,\cdots,n_{B}~.\label{eqMatrixD2}%
\end{equation}

In the sequel, we will need to consider the following $2 \times2$ matrices for
a set of numbers $a_{j},b_{j}$:
\begin{equation}
Q_{j}(a_{j},b_{j})= \left(
\begin{array}
[c]{cc}%
a_{j}- \lambda_{\sigma_{A},j} (M_{AA}) & -d_{j}\\
- d_{j} & b_{j}- \lambda_{\sigma_{B},j} (M_{BB})
\end{array}
\right)  , j=1,...,n_{A} \label{eqSep4.5}%
\end{equation}
and
\begin{equation}
P_{j}(a_{j},b_{j})= \left(
\begin{array}
[c]{cc}%
\frac{1}{a_{j}}- \lambda_{\sigma_{A},j} (M_{AA}) & -d_{j+n_{A}}\\
- d_{j+n_{A}} & \frac{1}{b_{j}}- \lambda_{\sigma_{B},j} (M_{BB})
\end{array}
\right)  , j=1,...,n_{A} \label{eqSep4.6}%
\end{equation}

\begin{theorem}
\label{TheoremSep4} Suppose that there exist a set of numbers $a_{1}, \cdots,
a_{n_{A}} >0$ and $b_{1}, \cdots, b_{n_{B}}>0$, such that:
\begin{equation}%
\begin{array}
[c]{l}%
\lambda_{\sigma_{A},j}(M_{AA}) \leq a_{j} \leq\frac{1}{\lambda_{\sigma_{A}%
,j}(M_{AA})}, ~j=1, \cdots, n_{A}\\
\\
\lambda_{\sigma_{B},k}(M_{BB}) \leq b_{k} \leq\frac{1}{\lambda_{\sigma_{B}%
,k}(M_{BB})}, ~k=1, \cdots, n_{B}%
\end{array}
\label{eqSep4.7}%
\end{equation}
and
\begin{equation}
\det Q_{j}(a_{j},b_{j}) \geq0 , ~\det P_{j}(a_{j},b_{j}) \geq0 , ~ j=1,
\cdots, n_{A} ~. \label{eqSep4.8}%
\end{equation}
Then the Gaussian state with covariance ellipsoid (\ref{eqSep2}) is separable.
\end{theorem}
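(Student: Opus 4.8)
The plan is to follow the pattern of Theorems \ref{TheoremSep1}--\ref{TheoremSep3}: produce a positive definite $2n_A\times 2n_A$ matrix $N_A$ and a positive definite $2n_B\times 2n_B$ matrix $N_B$, both with all their symplectic eigenvalues $\le 1$, such that $M\le N_A\oplus N_B$; then the Werner--Wolf criterion (Propositions \ref{propww} and \ref{propwwg}), which is sufficient for Gaussian states, gives separability. The extra ingredient here is the local symplectic reduction (\ref{eqSep4.1}): since $S=S_A\oplus S_B\in\operatorname*{Sp}(n)$, an inequality $M_D\le N_A\oplus N_B$ for $M_D=SMS^{T}$ transports, under conjugation by $S^{-1}=S_A^{-1}\oplus S_B^{-1}$, to $M\le M_A\oplus M_B$ with $M_A:=S_A^{-1}N_A(S_A^{T})^{-1}$ and $M_B:=S_B^{-1}N_B(S_B^{T})^{-1}$; and symplectic congruence $T\mapsto RTR^{T}$ with $R\in\operatorname*{Sp}$ preserves both the positivity order and, by Williamson's theorem, the symplectic spectrum, so $M_A,M_B$ inherit the symplectic spectrum of $N_A,N_B$. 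Hence it suffices to prove $M_D\le N_A\oplus N_B$. For the ansatz I would take, in the $(x,p)$-ordering,
\[
N_A=\operatorname*{diag}\bigl(a_1,\dots,a_{n_A},a_1^{-1},\dots,a_{n_A}^{-1}\bigr),\qquad
N_B=\operatorname*{diag}\bigl(b_1,\dots,b_{n_B},b_1^{-1},\dots,b_{n_B}^{-1}\bigr),
\]
because a diagonal matrix $\operatorname*{diag}(\alpha_1,\dots,\alpha_m,\beta_1,\dots,\beta_m)$ in this ordering has symplectic eigenvalues $\sqrt{\alpha_j\beta_j}$; here these are all equal to $1$, so $N_A^{-1}+iJ_A\ge0$ and $N_B^{-1}+iJ_B\ge0$, exactly as the hypothesis (\ref{eqSep1}) was used in the proof of Theorem \ref{TheoremSep1}.

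The core step is the inequality $M_D\le N_A\oplus N_B$, which I would establish by block-diagonalizing the difference form. Using the explicit shapes (\ref{eqSep4.2})--(\ref{eqMatrixD2}) of $A$, $B$, $D$ — in which $A$ and $B$ have identical $x$- and $p$-blocks and $D$ has no $x$--$p$ cross terms — the quadratic form $(N_A\oplus N_B-M_D)z\cdot z$ decouples, over the coordinates grouped as $\{x_{A,j},x_{B,j},p_{A,j},p_{B,j}\}$, into: the $2\times2$ matrix $Q_j(a_j,b_j)$ of (\ref{eqSep4.5}) on each $(x_{A,j},x_{B,j})$-plane, $j=1,\dots,n_A$; the $2\times2$ matrix $P_j(a_j,b_j)$ of (\ref{eqSep4.6}) on each $(p_{A,j},p_{B,j})$-plane, $j=1,\dots,n_A$; and the scalars $b_k-\lambda_{\sigma_B,k}(M_{BB})$ and $b_k^{-1}-\lambda_{\sigma_B,k}(M_{BB})$ on the leftover axes $\mathbb{R}x_{B,k}$ and $\mathbb{R}p_{B,k}$, $k=n_A+1,\dots,n_B$. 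Since a symmetric $2\times2$ matrix is positive semidefinite precisely when both diagonal entries and the determinant are nonnegative, the two-sided bounds (\ref{eqSep4.7}) force all the diagonal entries (of the $Q_j$, the $P_j$, and the leftover blocks) to be nonnegative — this is exactly where the inequalities $\lambda_{\sigma_A,j}(M_{AA})\le a_j\le\lambda_{\sigma_A,j}(M_{AA})^{-1}$ and $\lambda_{\sigma_B,k}(M_{BB})\le b_k\le\lambda_{\sigma_B,k}(M_{BB})^{-1}$ are needed — and (\ref{eqSep4.8}) supplies $\det Q_j\ge0$, $\det P_j\ge0$. Hence $N_A\oplus N_B-M_D\ge0$.

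It remains to read off the conclusion: from $M\le M_A\oplus M_B$ one gets $\Sigma=\tfrac{\hbar}{2}M^{-1}\ge\tfrac{\hbar}{2}(M_A^{-1}\oplus M_B^{-1})=:\Sigma_A\oplus\Sigma_B$, and since $M_A,M_B$ have symplectic spectrum $\{1\}$, Proposition \ref{propWill} gives $\Sigma_A+\tfrac{i\hbar}{2}J_A\ge0$ and $\Sigma_B+\tfrac{i\hbar}{2}J_B\ge0$; Proposition \ref{propwwg} then yields separability. I expect the one genuinely delicate point to be the block-diagonalization of $N_A\oplus N_B-M_D$ and, in particular, the verification that the $2\times2$ blocks it produces are exactly $Q_j(a_j,b_j)$ and $P_j(a_j,b_j)$, the reciprocal entries $a_j^{-1},b_j^{-1}$ in $N_A,N_B$ being precisely what accounts for the asymmetry between the $Q$- and the $P$-conditions; the remainder is bookkeeping. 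One should also note that (\ref{eqSep4.7}) is vacuous unless $\lambda_{\sigma_A,j}(M_{AA})\le1$ and $\lambda_{\sigma_B,k}(M_{BB})\le1$, which is consistent with $M$ satisfying the quantum condition (\ref{eigenM}).
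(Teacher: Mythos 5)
Your proposal is correct and follows essentially the same route as the paper: reduce by the local symplectic congruence $S=S_A\oplus S_B$ to the diagonalized form $M_D$, compare it with the positive symplectic matrix $\operatorname*{diag}(a_1,\dots,a_{n_A},a_1^{-1},\dots)\oplus\operatorname*{diag}(b_1,\dots,b_{n_B},b_1^{-1},\dots)$, and observe that the difference form decouples into the $2\times2$ blocks $Q_j$, $P_j$ plus the leftover scalar conditions, whose positive semidefiniteness is exactly (\ref{eqSep4.7})--(\ref{eqSep4.8}); the Werner--Wolf criterion then gives separability. The only cosmetic difference is that you transport the inequality $M_D\le N_A\oplus N_B$ back to $M$ by congruence, whereas the paper argues directly that separability of $M$ and of $M_D$ are equivalent and then works with $M_D$.
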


\begin{proof}
First of all, notice that if the state is separable, then there exist
$S_{A}^{\prime} \in Sp(n_{A})$ and $S_{B}^{\prime} \in Sp(n_{B})$, such that:
\begin{equation}%
\begin{array}
[c]{c}%
M \leq\left(  \left(  S_{A}^{\prime}\right)  ^{T} S_{A}^{\prime} \right)
\oplus\left(  \left(  S_{B}^{\prime}\right)  ^{T} S_{B}^{\prime} \right) \\
\\
\Leftrightarrow SM S^{T}\leq\left(  \left(  S_{A}^{\prime}S_{A}^{T}\right)
^{T} \left(  S_{A}^{\prime} S_{A}^{T}\right)  \right)  \oplus\left(  \left(
S_{B}^{\prime}S_{B}\right)  ^{T} \left(  S_{B}^{\prime}S_{B}^{T} \right)
\right)  ~.
\end{array}
\label{eqSep4.9}%
\end{equation}
Thus, $\widehat{\rho}$ is separable if and only if the Gaussian state with
covariance ellipsoid given by the matrix $M_{D}=SMS^{T}$ is separable. We may
therefore assume that $M$ is of the form (\ref{eqSep4.1})-(\ref{eqMatrixD2}).

Next, consider the positive symplectic matrix $P_{A}\oplus P_{B}$, with
\begin{equation}%
\begin{array}
[c]{l}%
P_{A}=\operatorname*{diag}\left(  a_{1},\cdots,a_{n_{A}},\frac{1}{a_{1}%
},\cdots,\frac{1}{a_{n_{A}}}\right)  ~,\\
\\
P_{B}=\operatorname*{diag}\left(  b_{1},\cdots,b_{n_{B}},\frac{1}{b_{1}%
},\cdots,\frac{1}{b_{n_{B}}}\right)  ~.
\end{array}
\label{eqSep4.10}%
\end{equation}
If $M\leq P_{A}\oplus P_{B}$, then $\widehat{\rho}$ is a separable state.
Writing $z=(z_{A},z_{B})$ and $z_{A}=\left(  x_{A},p_{A}\right)  $,
$z_{B}=\left(  x_{B},p_{B}\right)  $, this is equivalent to:
\begin{equation}%
\begin{array}
[c]{c}%
\sum_{j=1}^{n_{A}}\lambda_{\sigma_{A},j}(M_{AA})\left(  x_{A,j}^{2}%
+p_{A,j}^{2}\right)  +2\sum_{j=1}^{2n_{A}}d_{j}z_{A,j}z_{B,j}+\\
\\
+\sum_{j=1}^{n_{B}}\lambda_{\sigma_{B},j}(M_{BB})\left(  x_{B,j}^{2}%
+p_{B,j}^{2}\right) \\
\\
\leq\sum_{j=1}^{n_{A}}\left(  a_{j}x_{A,j}^{2}+\frac{p_{A,j}^{2}}{a_{j}%
}\right)  +\sum_{j=1}^{n_{B}}\left(  b_{j}x_{B,j}^{2}+\frac{p_{B,j}^{2}}%
{b_{j}}\right)
\end{array}
\label{eqSep4.11}%
\end{equation}
These equations can be decoupled for each $j$ and we obtain the set of
inequalities:
\begin{equation}%
\begin{array}
[c]{c}%
\lambda_{\sigma_{A},j}(M_{AA})x_{A,j}^{2}+2d_{j}x_{A,j}x_{B,j}+\lambda
_{\sigma_{B},j}(M_{BB})x_{B,j}^{2}\\
\\
\leq a_{j}x_{A,j}^{2}+b_{j}x_{B,j}^{2}~,j=1,\cdots,n_{A}~,
\end{array}
\label{eqSep4.12}%
\end{equation}

\begin{equation}%
\begin{array}
[c]{c}%
\lambda_{\sigma_{A},j}(M_{AA})p_{A,j}^{2}+2d_{n_{A}+j}p_{A,j}p_{B,j}%
+\lambda_{\sigma_{B},j}(M_{BB})p_{B,j}^{2}\\
\\
\leq\frac{p_{A,j}^{2}}{a_{j}}+\frac{p_{B,j}^{2}}{b_{j}}~,j=1,\cdots,n_{A}~,
\end{array}
\label{eqSep4.13}%
\end{equation}
and
\begin{equation}
\lambda_{\sigma_{B},j}(M_{BB})\left(  x_{B,j}^{2}+p_{B,j}^{2}\right)  \leq
b_{j}x_{B,j}^{2}+\frac{p_{B,j}^{2}}{b_{j}}~,j=n_{A}+1,\cdots,n_{B}%
~.\label{eqSep4.14}%
\end{equation}
Inequalities (\ref{eqSep4.12})-(\ref{eqSep4.14}) are equivalent to
(\ref{eqSep4.7}) and (\ref{eqSep4.8}).
\end{proof}

If $n_{A}=n_{B}$, then we can discard inequalities (\ref{eqSep4.14}). If
$n_{B} >n_{A}$, then we just have to find $b_{n_{A}+1}, \cdots, b_{n_{B}}>0$,
such that $\lambda_{\sigma_{B},j}(M_{BB}) \leq b_{j} \leq\frac{1}%
{\lambda_{\sigma_{B},j}(M_{BB})}, ~j=n_{A}+1, \cdots, n_{B}$. The nontrivial
part corresponds to determining the remaining constants, $a_{j},b_{j}$ for
$j=1, \cdots, n_{A}$. In the general case, these are easily obtained numerically.

Each solution $a_{j},b_{k}$, $j=1,...,n_{A}$ and $k=1,...,n_{B}$ determines an
ellipsoid
\begin{equation}
\label{ellAB}\Omega_{AB}=\{z\in\mathbb{R}^{2n}: P_{A} z_{A}^{2}+ P_{B}
z_{B}^{2} \le\hbar\} \subset\Omega_{D}%
\end{equation}
where $P_{A},P_{B}$ are given by (\ref{eqSep4.10}), and $\Omega_{D}$ is the
covariant ellipsoid of the matrix $M_{D}$. The projection of $\Omega_{AB}$
onto the plane $x_{A,j}x_{B,j}$ determines an ellipse (of size $1/\sqrt{a_{j}%
},1/\sqrt{b_{j}}$, if we assume $\hbar=1$) and the projection onto the plane
$p_{A,j}p_{B,j}$ determines another ellipse, "conjugate" to the first one, and
of size $\sqrt{a_{j}},\sqrt{b_{j}}$). These two ellipses are enclosed in the
projections of $\Omega_{D}$ onto these two planes. We also conclude from
(\ref{eqSep4.7}) that (cf.(\ref{eqSep4.1},\ref{eqSep4.10})):
\[
P_{A} \ge A \ge M_{D} / B = A - D B^{-1} D^{T}
\]
and so $\Pi_{A} \Omega_{AB} \subset\Pi_{A} \Omega_{D}$. An equivalent result
is valid for the projection $\Pi_{B}$. These geometrical relations are
illustrated by the example at the end of this section.

A set of conditions equivalent to those of Theorem \ref{TheoremSep4} is the
following. We use the abbreviated notation $\lambda_{j}^{A}= \lambda
_{\sigma_{A},j} (M_{AA})$, $\lambda_{j}^{B}= \lambda_{\sigma_{B},j} (M_{BB})$.

\begin{lemma}
\label{LemmaSep1} The following set of conditions are equivalent.

\vspace{0.2 cm}

\begin{enumerate}
\item The matrices $Q_{j}(a,b)$ and $P_{j}(a,b)$ are positive semi-definite
for some $a,b>0$.

\vspace{0.2 cm}

\item There exists $a_{0}\in\left[  \lambda_{j}^{A},\frac{1}{\lambda_{j}^{A}%
}\right]  $, such that $f(a_{0})\geq0$, where $f(x)=\alpha x^{2}+\beta
x+\gamma$, with:
\begin{equation}%
\begin{array}
[c]{l}%
\alpha=\lambda_{j}^{A}\lambda_{j}^{B}-\lambda_{j}^{B}d_{n_{A}+j}^{2}%
-\lambda_{j}^{A}\\
\\
\beta=1+\left(  \lambda_{j}^{A}\right)  ^{2}-\left(  \lambda_{j}^{B}\right)
^{2}+\left(  \lambda_{j}^{A}\lambda_{j}^{B}-d_{j}^{2}\right)  \cdot\left(
d_{n_{A}+j}^{2}-\lambda_{j}^{A}\right)  \\
\\
\gamma=\left(  \lambda_{j}^{A}\lambda_{j}^{B}-d_{j}^{2}\right)  \lambda
_{j}^{B}~.
\end{array}
\label{eqSep4.15}%
\end{equation}

\end{enumerate}
\end{lemma}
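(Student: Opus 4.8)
The plan is to reduce everything to the elementary fact that a real symmetric $2\times 2$ matrix is positive semi-definite exactly when its two diagonal entries and its determinant are all nonnegative. Applying this to $Q_j(a,b)$ and $P_j(a,b)$, condition (1) asserts the existence of $a,b>0$ satisfying the four diagonal inequalities $\lambda_j^A\le a\le 1/\lambda_j^A$ and $\lambda_j^B\le b\le 1/\lambda_j^B$, together with the two determinant inequalities $(a-\lambda_j^A)(b-\lambda_j^B)\ge d_j^2$ and $(1/a-\lambda_j^A)(1/b-\lambda_j^B)\ge d_{j+n_A}^2$. (In particular both (1) and (2) are vacuously false unless $\lambda_j^A\le 1$, so $[\lambda_j^A,1/\lambda_j^A]$ is the relevant range for $a$ and the equivalence is trivial otherwise.)

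First I would treat $a$ as a parameter in the open interval $(\lambda_j^A,1/\lambda_j^A)$ and eliminate $b$. For such $a$ the first determinant inequality is equivalent to $b\ge b_-(a):=\lambda_j^B+\frac{d_j^2}{a-\lambda_j^A}$ and the second to $b\le b_+(a):=\frac{1-a\lambda_j^A}{\lambda_j^B(1-a\lambda_j^A)+a\,d_{j+n_A}^2}$; and one checks at once that $b_-(a)\ge\lambda_j^B$ and $b_+(a)\le 1/\lambda_j^B$, so the diagonal constraints on $b$ are automatically implied. Hence, for this fixed $a$, a valid $b$ exists if and only if $b_-(a)\le b_+(a)$. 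Clearing the two strictly positive denominators $a-\lambda_j^A$ and $\lambda_j^B(1-a\lambda_j^A)+a\,d_{j+n_A}^2$, this becomes $[\lambda_j^B(a-\lambda_j^A)+d_j^2][\lambda_j^B(1-a\lambda_j^A)+a\,d_{j+n_A}^2]\le(1-a\lambda_j^A)(a-\lambda_j^A)$, and expanding and collecting powers of $a$ turns it into an inequality $f(a)\ge 0$ with $f$ the quadratic whose coefficients $\alpha,\beta,\gamma$ are those recorded in (\ref{eqSep4.15}). Running this equivalence in both directions then gives (1)$\Leftrightarrow$(2) for $a$ in the open interval: the forward implication uses the parameter $a$ of a PSD solution (assuming it is interior), and for the converse one picks, given $a_0$ with $f(a_0)\ge 0$, any $b\in[b_-(a_0),b_+(a_0)]$ and reads off $Q_j(a_0,b)\ge 0$, $P_j(a_0,b)\ge 0$ from the diagonal/determinant criterion.

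The main obstacle is the two endpoints $a=\lambda_j^A$ and $a=1/\lambda_j^A$, where the reduction above is not an equivalence: at $a=\lambda_j^A$ the $(1,1)$-entry of $Q_j$ vanishes, forcing $d_j=0$, and multiplying through by $a-\lambda_j^A$ loses information (symmetrically at the other endpoint with $d_{j+n_A}=0$). The plan here is to argue that a boundary solution can be perturbed into the interior — if $d_j=0$ the constraint $(a-\lambda_j^A)(b-\lambda_j^B)\ge 0$ survives when $a$ is increased slightly, and the $P_j$-constraint only degrades continuously — so that restricting attention to the open interval loses no solutions; combined with continuity of $f$, this yields the "$\exists\,a_0\in[\lambda_j^A,1/\lambda_j^A]$" formulation. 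What remains — verifying that the explicit quadratic $f$ drops out with exactly the coefficients of (\ref{eqSep4.15}) and checking the degenerate sub-cases ($\lambda_j^A=1$, or $d_j=d_{j+n_A}=0$, etc.) — is routine algebra.
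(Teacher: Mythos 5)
Your proposal follows the same route as the paper's own proof: characterize positive semi-definiteness of the $2\times2$ matrices $Q_{j}$ and $P_{j}$ by the diagonal-entries-plus-determinant test, eliminate $b$ by sandwiching it between the lower bound coming from $\det Q_{j}\geq0$ and the upper bound coming from $\det P_{j}\geq0$ (observing that the diagonal constraints on $b$ are then automatic), and clear the positive denominators to arrive at the quadratic inequality $f(a)\geq0$ — this is precisely the chain (\ref{eqSep4.16})--(\ref{eqSep4.20}) in the paper. Your additional attention to the endpoint values $a=\lambda_{j}^{A}$ and $a=1/\lambda_{j}^{A}$, where the division by $a-\lambda_{j}^{A}$ degenerates, is if anything more careful than the published argument, which passes over these cases silently.
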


\begin{proof}
For simplicity, we write $\lambda_{j}^{A}=\lambda^{A}$, $\lambda_{j}%
^{B}=\lambda^{B}$, $d_{j}=d$ and $d_{n_{A}+j}=D$. Conditions 1 are equivalent
to
\begin{equation}
\lambda^{A}\leq a\leq\frac{1}{\lambda^{A}}~,\hspace{0.5cm}\lambda^{B}\leq
b\leq\frac{1}{\lambda^{B}}~, \label{eqSep4.16}%
\end{equation}
and
\begin{equation}
(a-\lambda^{A})\cdot(b-\lambda^{B})\geq d^{2}~,\hspace{0.5cm}\left(  \frac
{1}{a}-\lambda^{A}\right)  \cdot\left(  \frac{1}{b}-\lambda^{B}\right)  \geq
D^{2}~. \label{eqSep4.17}%
\end{equation}
From the first inequality in (\ref{eqSep4.17}), we obtain:
\begin{equation}
\lambda^{B}+\frac{d^{2}}{a-\lambda^{A}}\leq b~. \label{eqSep4.18}%
\end{equation}
Similarly, from the second inequality, we obtain:
\begin{equation}
b\leq\frac{1}{\lambda^{B}+\frac{D^{2}}{\frac{1}{a}-\lambda^{B}}}~.
\label{eqSep4.19}%
\end{equation}
If $\lambda^{A}\leq a\leq\frac{1}{\lambda^{A}}$, then we conclude from
(\ref{eqSep4.18}) and (\ref{eqSep4.19}) that we have automatically
$\lambda^{B}\leq b\leq\frac{1}{\lambda^{B}}$. It follows that conditions 1 are
equivalent to $\lambda^{A}\leq a\leq\frac{1}{\lambda^{A}}$ and%
\begin{equation}
\lambda^{B}+\frac{d^{2}}{a-\lambda^{A}}\leq\frac{1}{\lambda^{B}+\frac{D^{2}%
}{\frac{1}{a}-\lambda^{B}}}\Leftrightarrow f(a)\geq0~, \label{eqSep4.20}%
\end{equation}
which concludes the proof.
\end{proof}

As an example let us consider the case $n_{A}=n_{B}=1$ with the matrix
\begin{equation}
M=\left(
\begin{array}
[c]{cccc}%
\frac{1}{2} & 0 & \frac{2}{3} & 0\\
0 & \frac{1}{2} & 0 & \frac{1}{4}\\
\frac{2}{3} & 0 & \frac{17}{18} & 0\\
0 & \frac{1}{4} & 0 & \frac{3}{16}%
\end{array}
\right)  ~.\label{eqSep4.21}%
\end{equation}
The associated matrix $M_{D}$ is:
\begin{equation}
M_{D}=\left(
\begin{array}
[c]{cccc}%
\frac{1}{2} & 0 & \left(  \frac{2}{51}\right)  ^{1/4} & 0\\
0 & \frac{1}{2} & 0 & \frac{(17/54)^{1/4}}{2}\\
\left(  \frac{2}{51}\right)  ^{1/4} & 0 & \frac{\sqrt{17/6}}{4} & 0\\
0 & \frac{(17/54)^{1/4}}{2} & 0 & \frac{\sqrt{17/6}}{4}%
\end{array}
\right)  \label{eqSep4.22}%
\end{equation}
where we used the fact that $\lambda_{\sigma_{A},1}(M_{AA})=1/2$ and
$\lambda_{\sigma_{B},1}(M_{BB})=\frac{\sqrt{17/6}}{4}$. The matrices
$Q_{1},P_{1}$ for this case are:
\begin{align*}
Q_{1}(a,b)  & =\left(
\begin{array}
[c]{cc}%
a-1/2 & \left(  \frac{2}{51}\right)  ^{1/4}\\
\left(  \frac{2}{51}\right)  ^{1/4} & b-\frac{\sqrt{17/6}}{4}%
\end{array}
\right)  \\
P_{1}(a,b)  & =\left(
\begin{array}
[c]{cc}%
1/a-1/2 & \frac{(17/54)^{1/4}}{2}\\
\frac{(17/54)^{1/4}}{2} & 1/b-\frac{\sqrt{17/6}}{4}%
\end{array}
\right)
\end{align*}
We are looking for solutions of
\begin{equation}
\det Q_{1}(a,b)\geq0\quad,\quad\det P_{1}(a,b)\geq0\label{eqSep4.23}%
\end{equation}
in the range
\[
1/2\leq a\leq2\quad\mbox{and}\quad\frac{\sqrt{17/6}}{4}\leq b\leq\frac
{4}{\sqrt{17/6}}\,.
\]
These solutions can be obtained numerically. They are given by the points
between the two curves in Figure 1.

{\normalsize \vspace{0.0cm} {\normalsize \begin{figure}[h]
\includegraphics [scale=0.3] {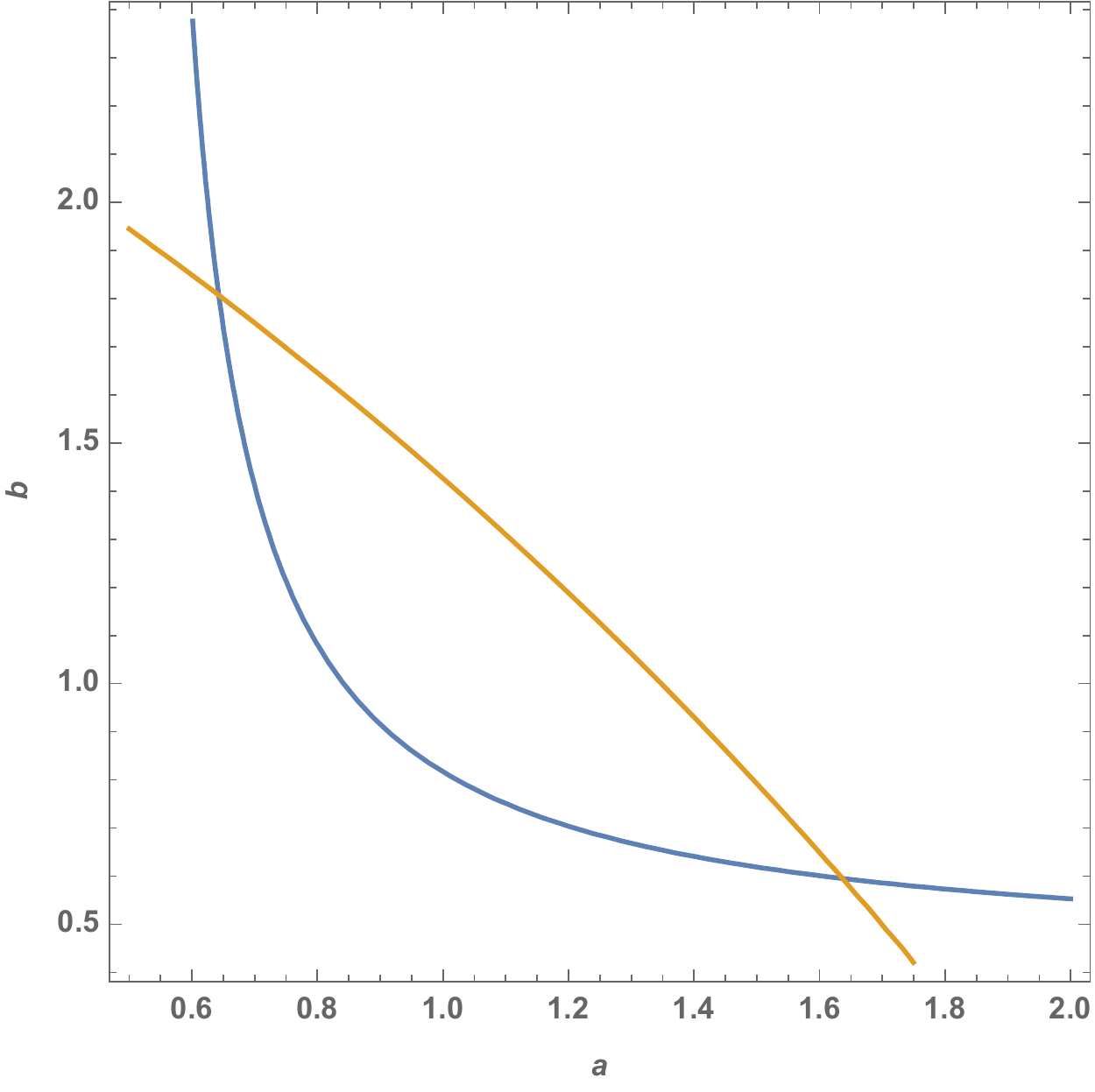}\end{figure}\vspace{0.0cm}    }
{\footnotesize {Figure 1: The solutions of eq. (\ref{eqSep4.23}) are given by
the points between the two curves. Each point corresponds to an ellipsoid
$\Omega_{AB}$ (\ref{ellAB}) that is contained in the covariant ellipsoid
$\Omega_{D}$ associated to (\ref{eqSep4.22})}. \vspace{0.5cm} }}

{\normalsize Recall that $\Omega_{AB}$ is given by (\ref{ellAB}) and that
$\Omega_{D}$ is the covariant ellipsoid associated to $M_{D}$. In Figures 2.1
to 2.4 we consider the case $a=1.6$, $b=0.6$, $\hbar=1$ and plot the
projections of $\Omega_{D}$ and $\Omega_{AB}$ onto the planes $x_{A,1}p_{A,1}%
$, $x_{B,1}p_{B,1}$, $x_{A,1}x_{B,1}$ and $p_{A,1}p_{B,1}$. }

{\normalsize \begin{figure}[h]
{\normalsize \center\includegraphics [scale=0.3] {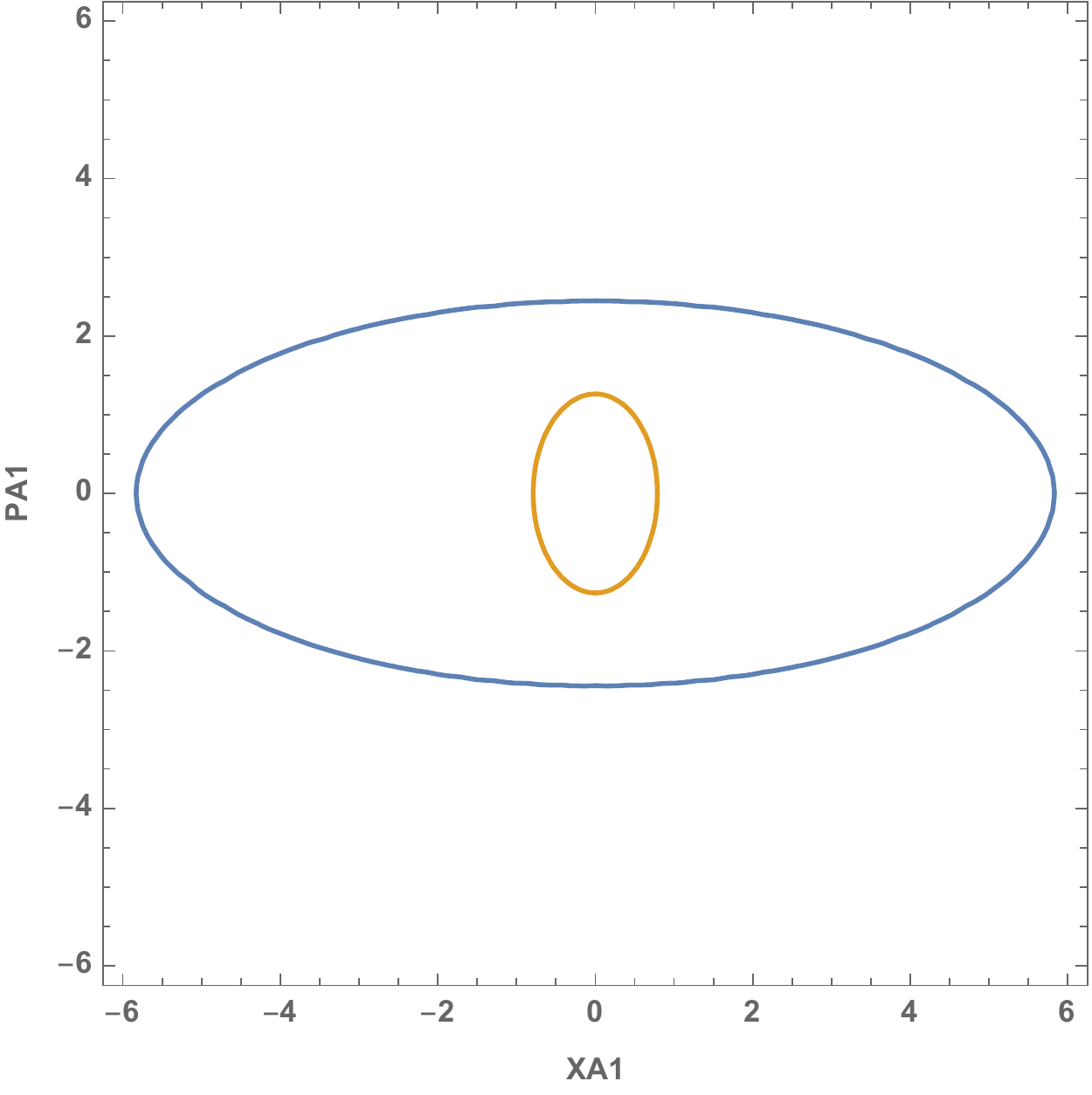} \hspace{1cm}
\includegraphics [scale=0.3] {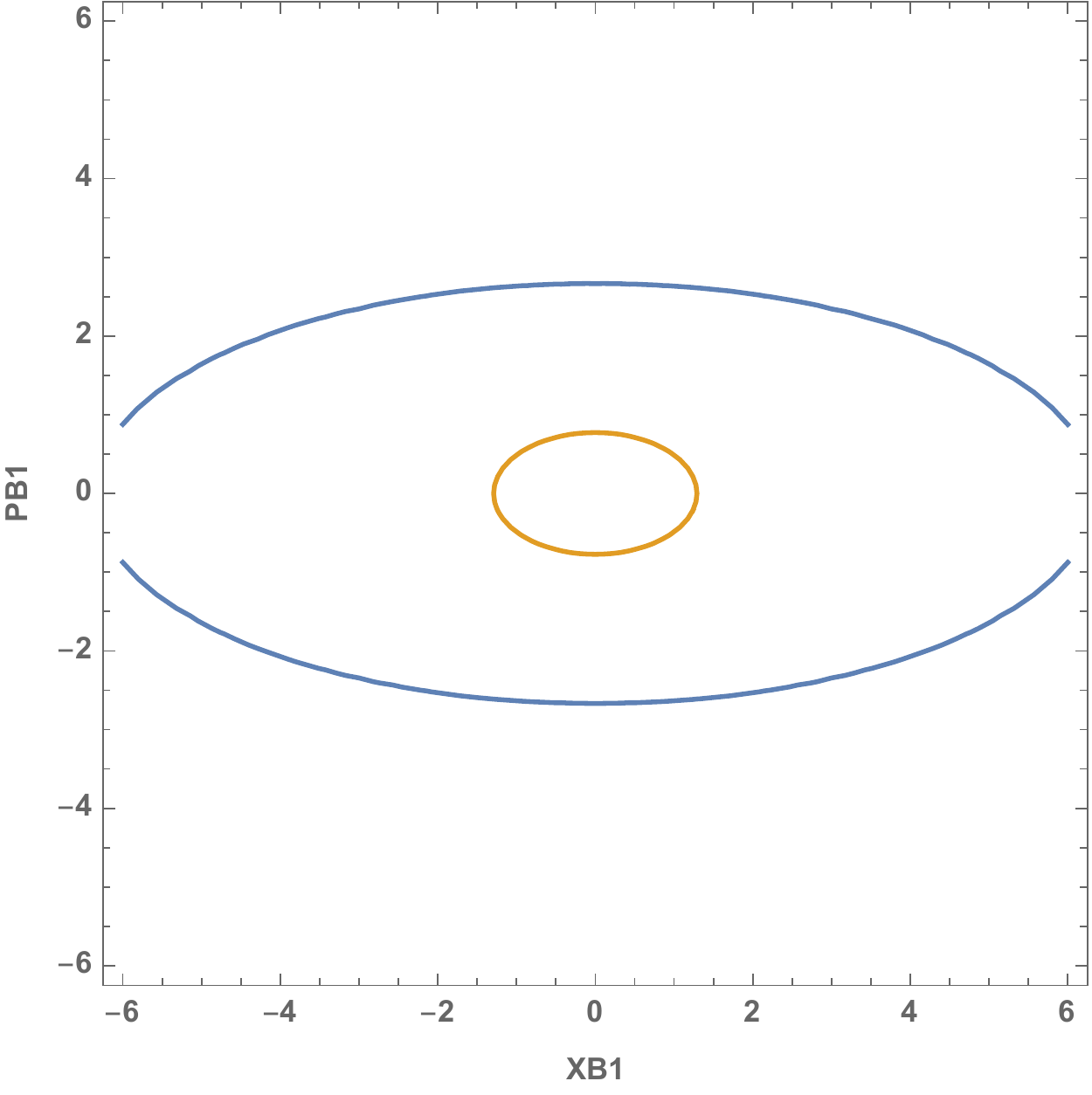} }\end{figure}\vspace{0.0cm}  }

{\normalsize {\footnotesize {Figures 2.1 and 2.2: Projections of $\Omega_{D}$
and $\Omega_{AB}$ onto the $x_{A,1}p_{A,1}$ plane (left) and onto the
$x_{B,1}p_{B,1}$ plane (right) for the case $a=1.6$, $b=0.6$ and $\hbar=1$.}
\vspace{3.1cm} } }

{\normalsize \begin{figure}[h]
{\normalsize \center\includegraphics [scale=0.3] {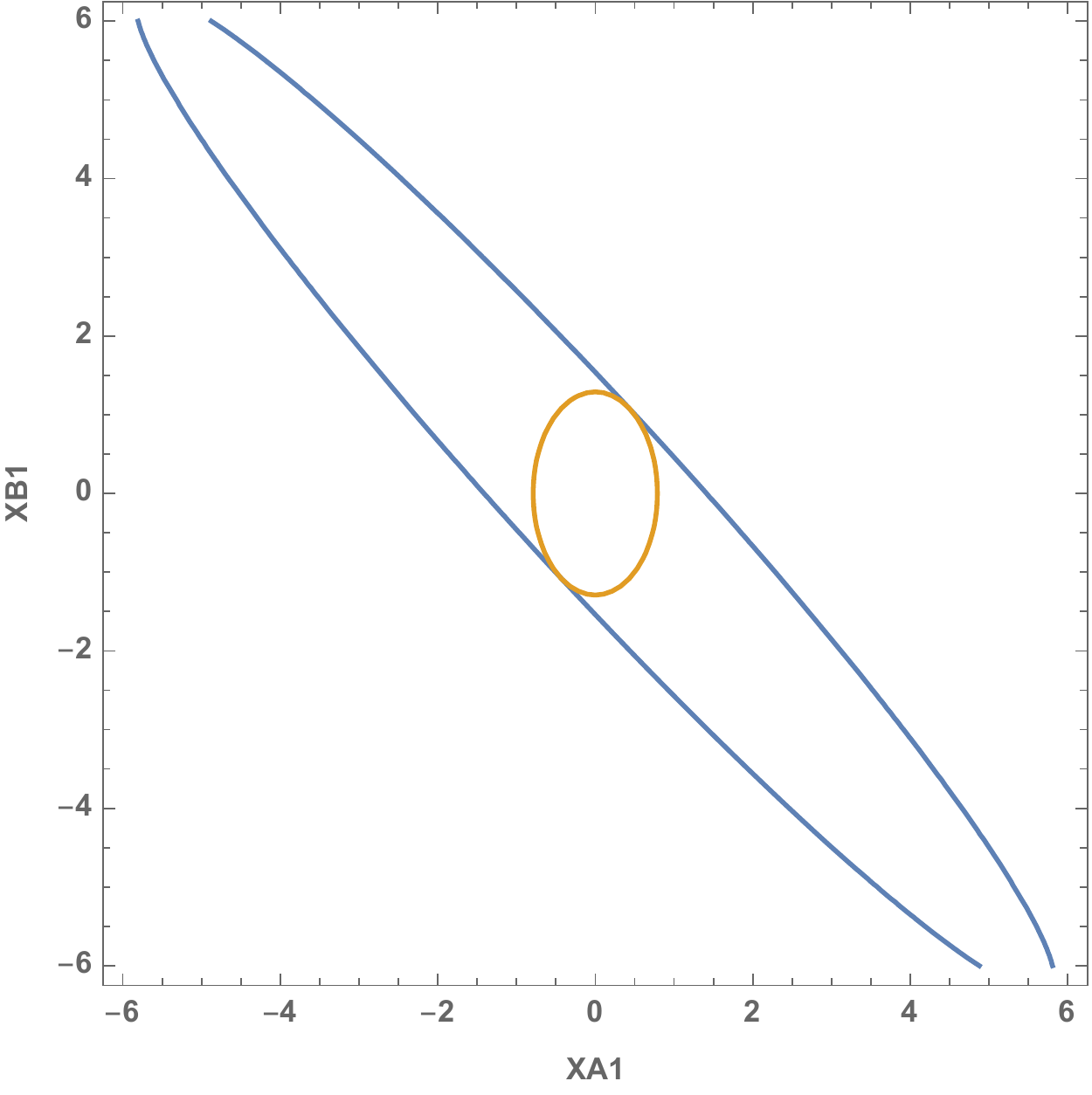} \hspace{1cm}
\includegraphics [scale=0.3] {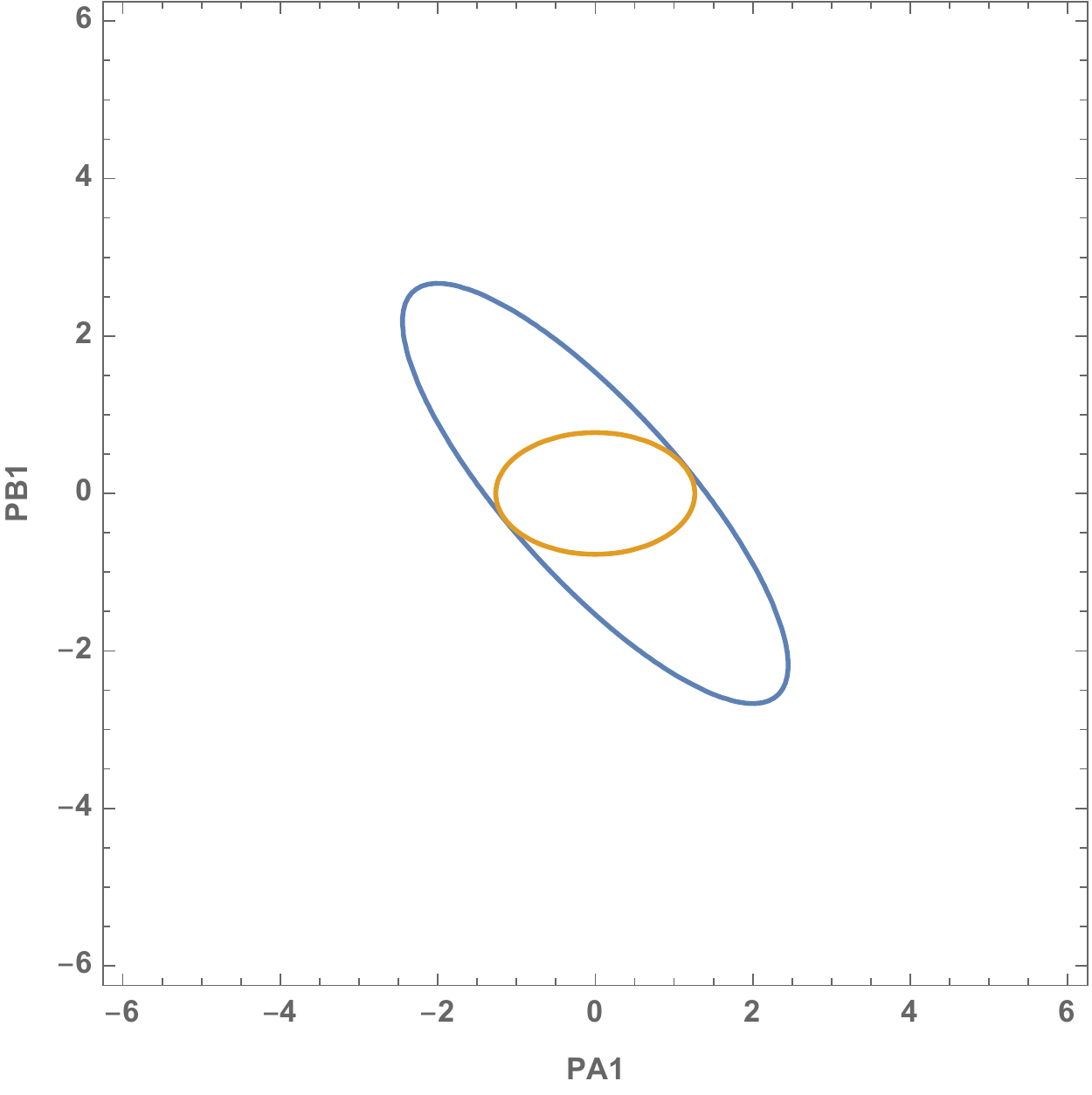} }\end{figure}\vspace{0.0cm}  }

{\normalsize {\footnotesize {Figures 2.3 and 2.4: Projections of $\Omega_{D}$
and $\Omega_{AB}$ onto the $x_{A,1}x_{B,1}$ plane and onto the $p_{A,1}%
p_{B,1}$ plane for the case $a=1.6$, $b=0.6$ and $\hbar=1$.} \vspace{0.1cm} }
}

{\normalsize \vspace{0.25cm} In Figures 3.1 to 3.4 the plots represent the
same projections of $\Omega_{D}$ and $\Omega_{AB}$ for another solution of
(\ref{eqSep4.23}): $a=0.7$ and $b=1.8$.  }

{\normalsize \begin{figure}[h]
{\normalsize \center\includegraphics [scale=0.3] {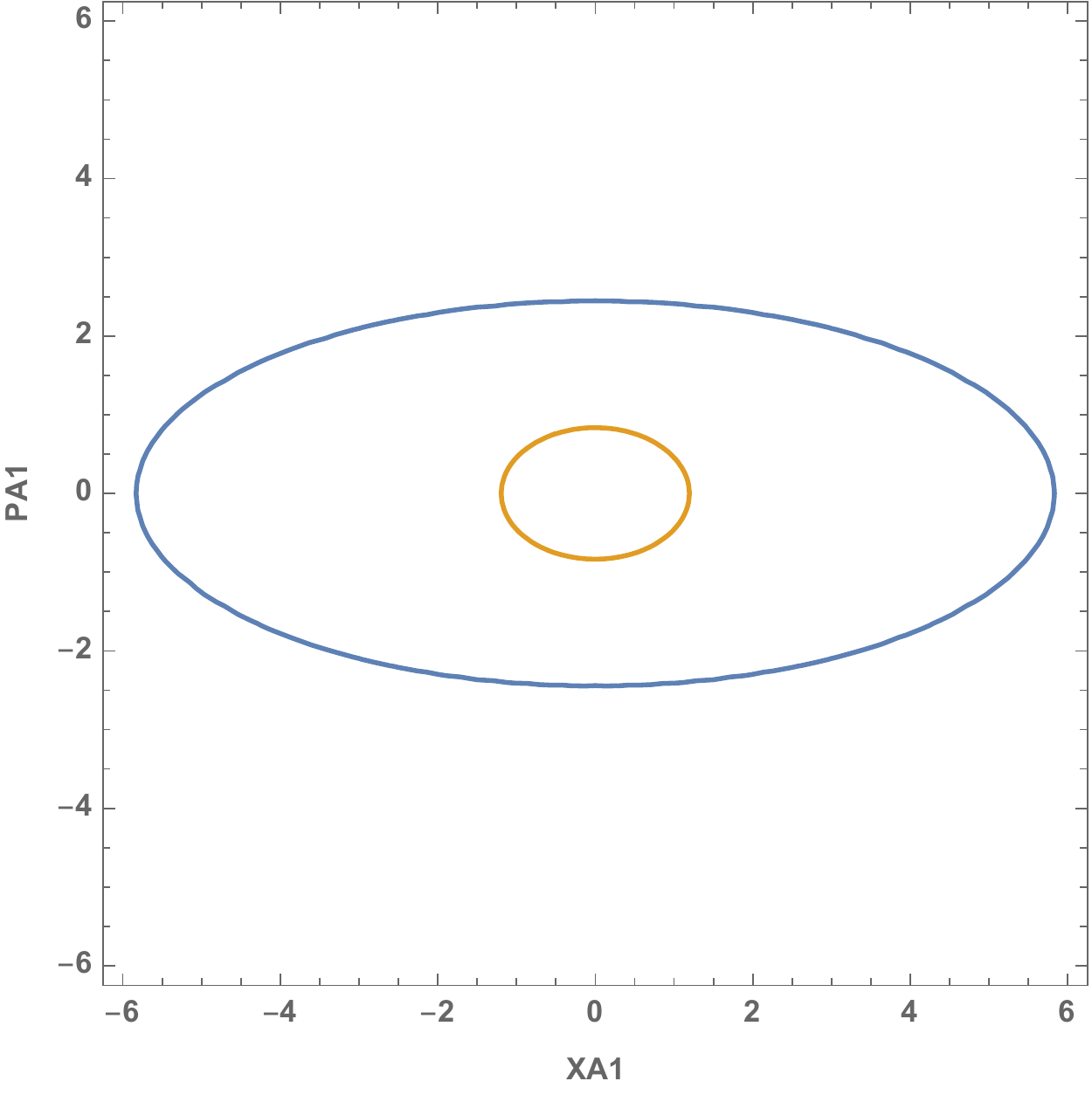} \hspace{1cm}
\includegraphics [scale=0.3] {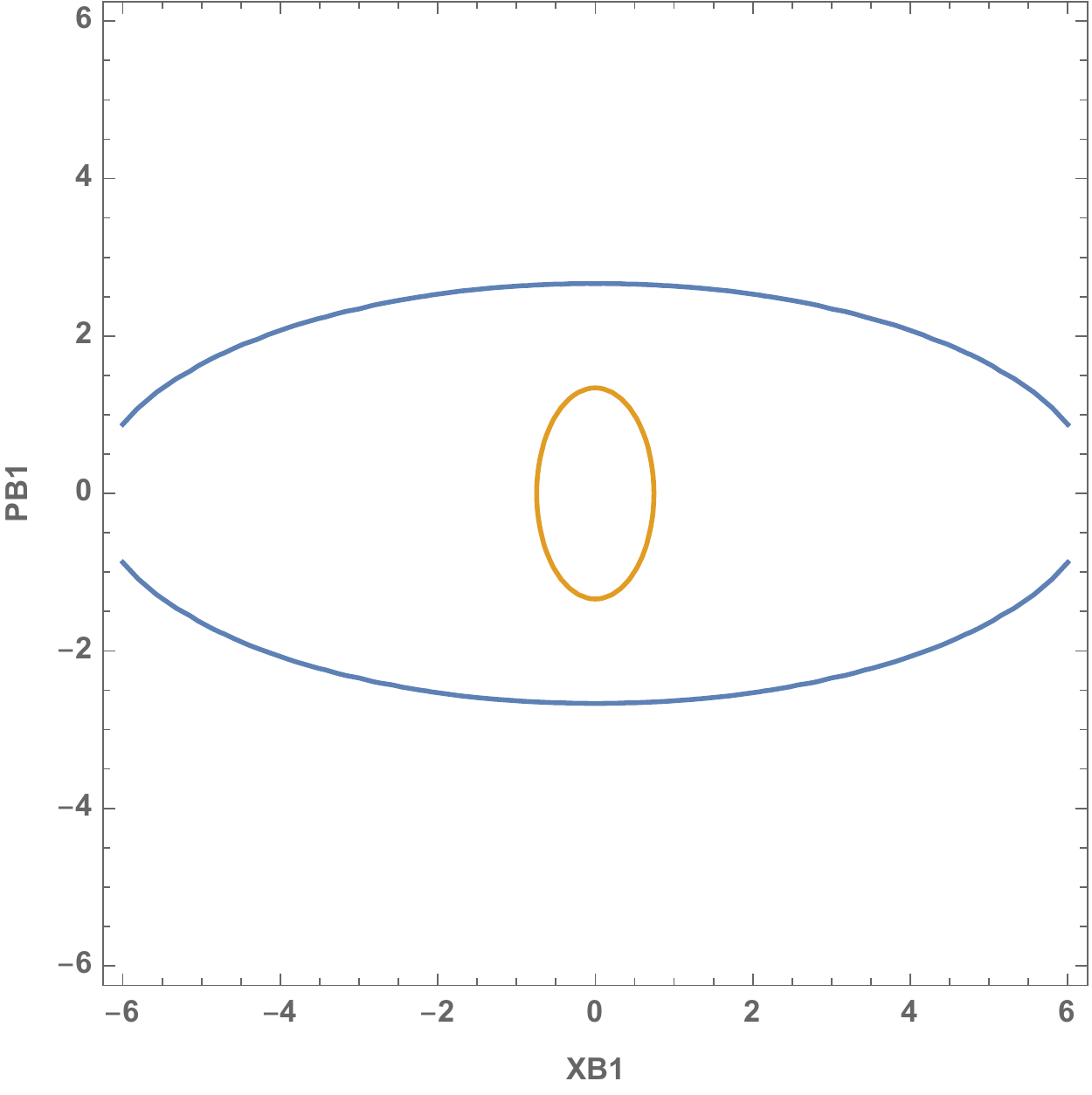} }\end{figure}\vspace{0.0cm}  }

{\normalsize \begin{figure}[h]
{\normalsize \center\includegraphics [scale=0.3] {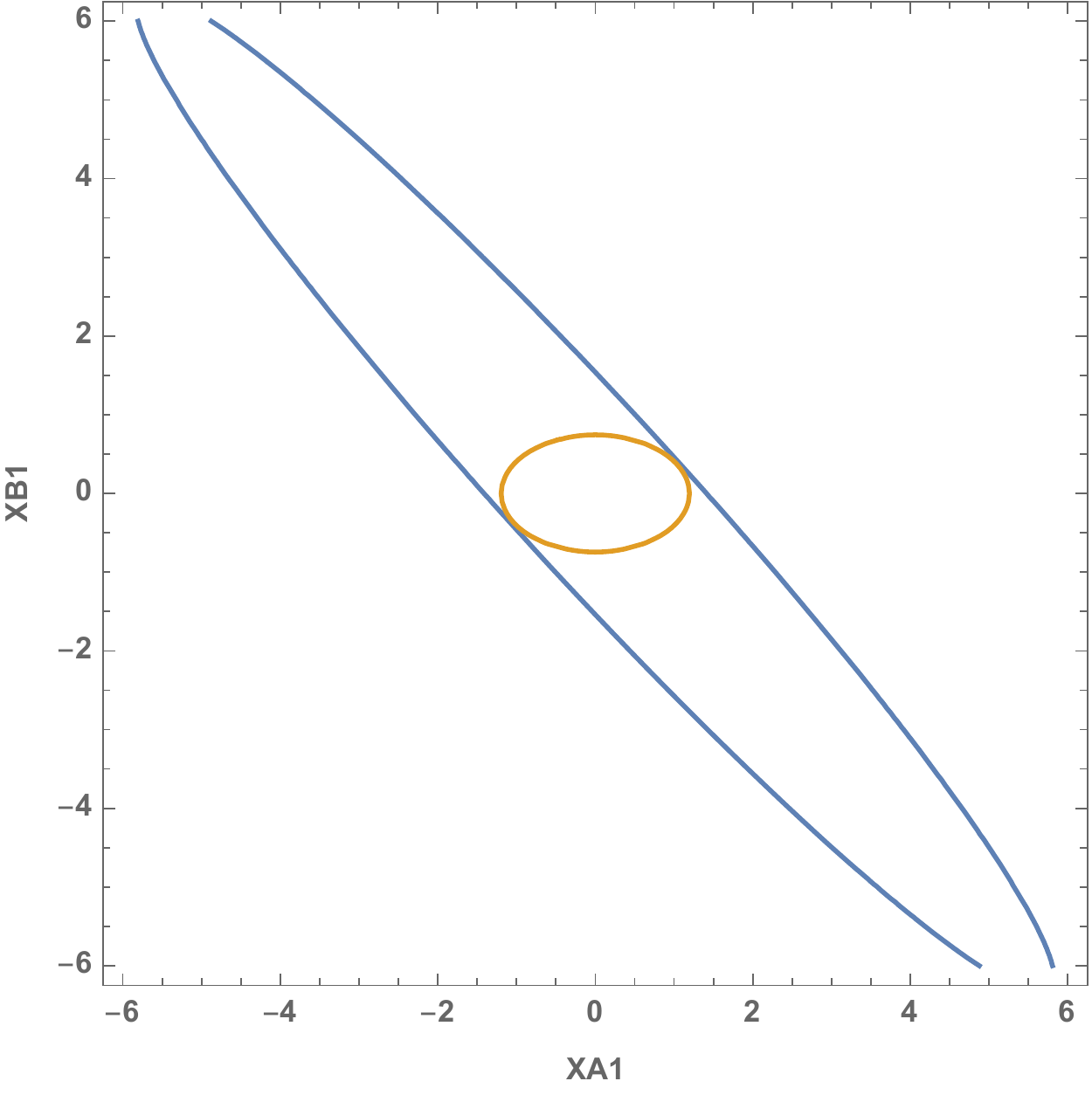} \hspace{1cm}
\includegraphics [scale=0.3] {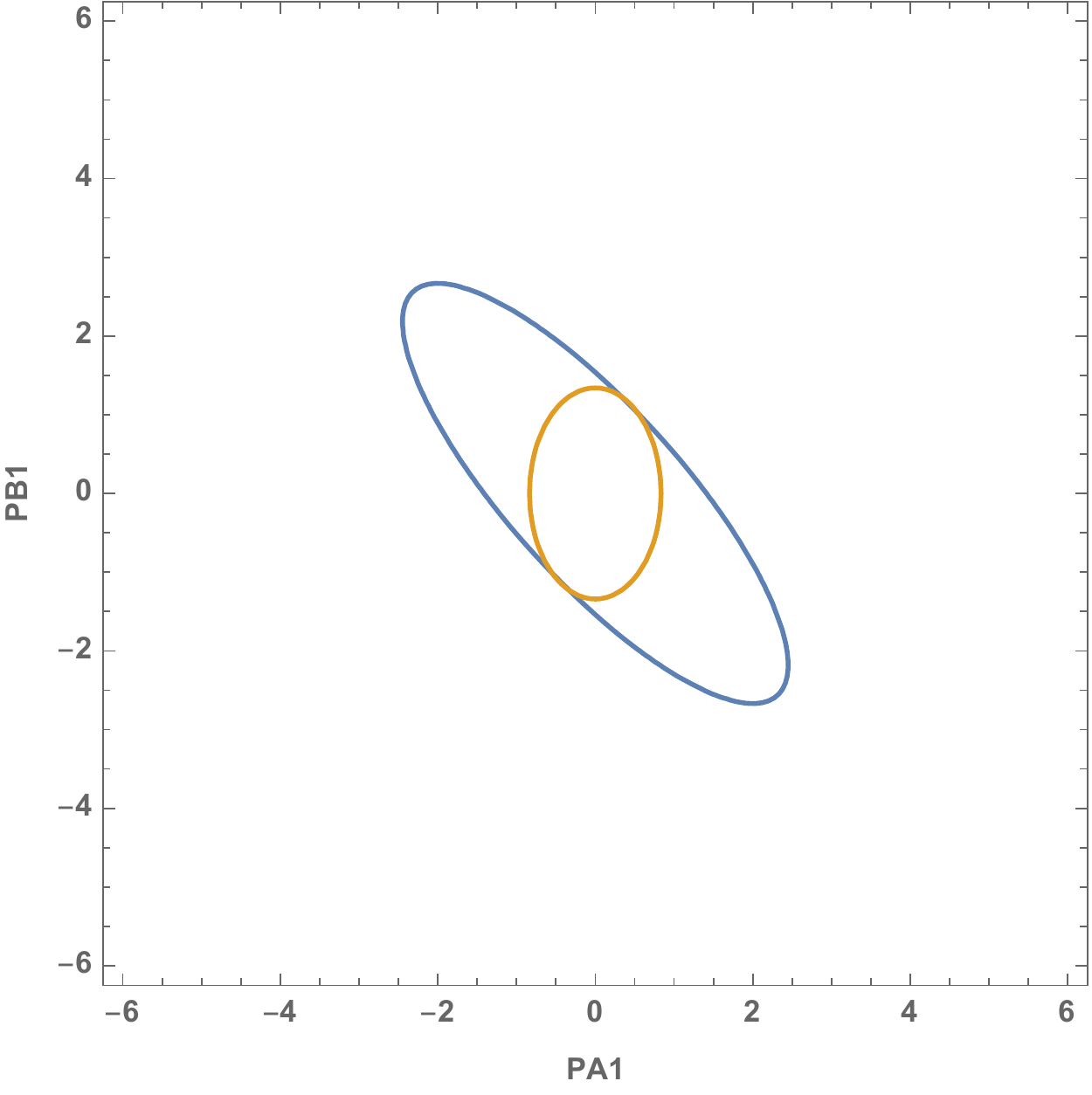} }\end{figure}\vspace{0.0cm}  }

{\normalsize {\footnotesize {Figures 3.1 to 3.4: Projections of $\Omega_{D}$
and $\Omega_{AB}$ onto the planes $x_{A,1}p_{A,1}$, $x_{B,1}p_{B,1}$,
$x_{A,1}x_{B,1}$ and $p_{A,1}p_{B,1}$ for the case $a=0.7$, $b=1.8$ and
$\hbar=1$.} \vspace{0.3cm} } }

{\normalsize Finally, Figure 4 displays the possible values of $a$ and $b$ of
the enclosed ellipsoids $\Omega_{AB}$ for the example of Section
\ref{Section4.6}.  }

{\normalsize \vspace{1.2cm} \begin{figure}[h]
{\normalsize \center\includegraphics [scale=0.3] {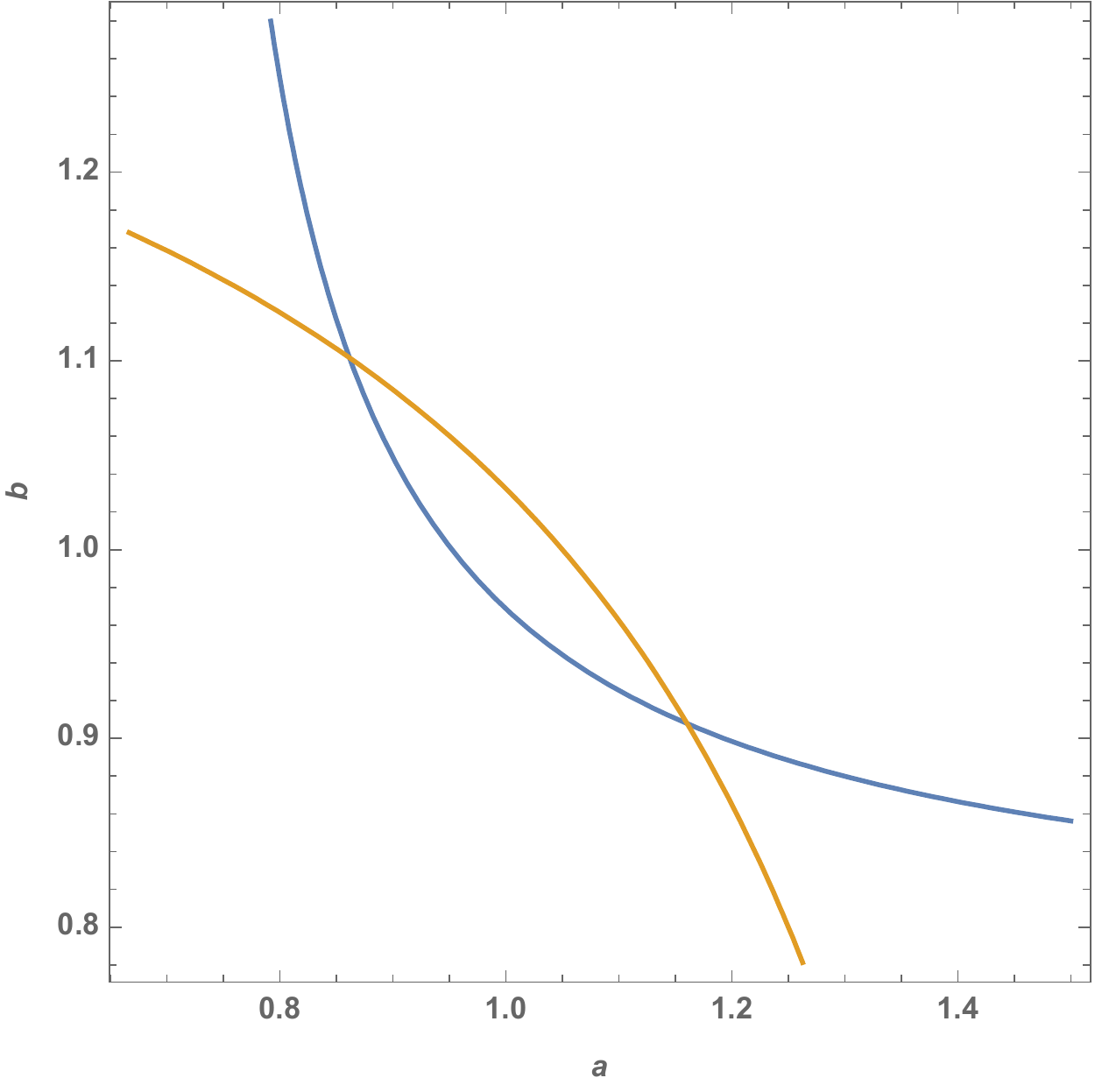}}\end{figure}\vspace{0cm} {\footnotesize {Figure 4: Numerical solutions of eq. (\ref{eqSep4.23}) for the case (\ref{eqImp29}). Each point between the two
curves corresponds to an ellipsoid $\Omega_{AB}$ that is enclosed in the
covariant ellipsoid $\Omega=\Omega_{D}$ associated to (\ref{eqImp29}).}
\vspace{0.1cm} } }

\section{Discussion{\protect\normalsize \label{RemarkImp1}}}

{\normalsize Since we have Theorem \ref{TheoremSep3} $\Rightarrow$ Theorem
\ref{TheoremSep2} $\Rightarrow$ Theorem \ref{TheoremSep1}, but the converse is
not valid, we conclude that only the criterion stated in Theorem
\ref{TheoremSep3} is a candidate for a necessary and sufficient condition for
separability of Gaussian states. Bearing this fact in mind, one could be
tempted to forget about Theorems \ref{TheoremSep1} and \ref{TheoremSep2}
altogether and keep only Theorem \ref{TheoremSep3} as a criterion. We
nevertheless feel that this hierarchy of criteria may be useful, because the
computational complexity increases from one criterion to the next. In
particular, it may not be easy to determine the optimal choice of numbers
$\varepsilon_{1},\cdots,\varepsilon_{n}>0$, to satisfy the condition of
Theorem \ref{TheoremSep3}. So, if one is able to prove separability using,
say, Theorem \ref{TheoremSep1}, then there is no need to apply the more
complicated Theorems \ref{TheoremSep2} and \ref{TheoremSep3}.  }

{\normalsize This situation is however in no way discouraging since it is
always easy to check whether a Gaussian is a good candidate to be a separable
state by using the very simple PPT criterion, which reduces to some trivial
manipulations of the covariance matrix.  }

\begin{acknowledgement}
{\normalsize Part of this research was done while Maurice de Gosson held the
Giovanni-Prodi-Lehrstuhl at the University of W\"{u}rzburg. The work of N.C. Dias and J.N. Prata was supported by the Portuguese Science
Foundation (FCT) grant PTDC/MAT-CAL/4334/2014.}
\end{acknowledgement}

****************************************************************

{\bf Author's addresses:}

\begin{itemize}
\item Nuno Costa Dias and Jo\~{a}o Nuno Prata: Grupo de F\'{\i}sica
Matem\'{a}tica, Departamento de Matem\'{a}tica, Faculdade de Ci\^{e}ncias,
Universidade de Lisboa, Campo Grande, Edif\'{\i}cio C6, 1749-016 Lisboa,
Portugal and Escola Superior N\'{a}utica Infante D. Henrique. Av. Eng.
Bonneville Franco, 2770-058 Pa\c{c}o d'Arcos, Portugal.

\item Maurice A. de Gosson: Universit\"{a}t Wien, Fakult\"{a}t f\"{u}r
Mathematik--NuHAG, Oskar-Morgenstern-Platz 1, 1090 Vienna, Austria.
\end{itemize}

****************************************************************
\end{document}